\theoremstyle{plain}
\newtheorem{lem}{Lemma}
\newtheorem{theorem}{Theorem}
\theoremstyle{definition}
\newtheorem{example}{Example}
\title{An Analytical Model for Molecular Communication over a Non-linear Reaction-Diffusion Medium}
\author{\IEEEauthorblockN{Hamidreza Abin,
Amin Gohari, and
Masoumeh Nasiri-Kenari}\\
\IEEEauthorblockA{\IEEEauthorrefmark{1}Department of Electrical Engineering, Sharif University of Technology}}
\begin{document}

\maketitle
\vspace{-1cm}
\begin{abstract}
One of the main challenges in diffusion-based molecular communication is dealing with the non-linearity of reaction-diffusion chemical equations. While numerical methods can be used to solve these equations, a change in the input signals or the parameters of the medium requires one to redo the simulations. This makes it difficult to design modulation schemes and practically impossible to prove the optimality of a given transmission strategy. In this paper, we provide an analytical technique for modeling the  non-linearity of chemical reaction equations based on the perturbation method.  The perturbation method expresses the solution in terms of an infinite power series. An approximate solution can be found by keeping the leading terms of the power series. The approximate solution is shown to track the true solution if either the simulation time interval or the reaction rate is sufficiently small. Approximate solutions for long time intervals are also discussed. An illustrative example is given. For this example, it is shown that when the reaction rate (or the total time interval) is low,  instead of using a continuous release
waveform,  it is optimal for the transmitters to release molecules at two time instances.
\footnote{A short version of this paper appeared 
at \cite{abin2020molecular}. The differences between the conference version and the journal version are as follows:  Sections II.A,  II.C, II.D, V, Theorem 1, Proof of Theorem 2, Appendix B, Appendix C, Appendix D, Figures 2, 3,  4, 8, 9, 10, 12, 13, 14, 15 are missing from the conference version (and the simulation section is expanded). Appendix A is also expanded compared to the conference version. The conference version (6-pages) is less than half the length of the journal version.}
\end{abstract}

\begin{IEEEkeywords}
Molecular Communications, Chemical reaction, non-linear reaction-diffusion equation.
\end{IEEEkeywords}

\IEEEpeerreviewmaketitle


\section{Introduction}
Molecular communication (MC) uses molecules as carriers of information. Molecular communication 
is envisioned to be applicable in a wide range of engineering and medical applications, especially as a means to realize communications among  engineered biological nanomachines
 (see \cite{pierobon2010physical, nakano2012molecular, guo2015molecular, survey:medicine}). In diffusion-based MC, molecules are released into the medium by molecular transmitters. Information is coded by the transmitter in the type, number, or release time of the molecules. The released molecules randomly diffuse in the medium, with some reaching the molecular receivers. The receiver decodes the message sent by the transmitter based on the number of sensed molecules (see  \cite{kuran2011modulation, arjmandi2013diffusion, srinivas2012molecular} for some  examples of modulation schemes).

Broadly speaking, two main types of diffusion-based MC exists: microscale MC and macroscale MC. In microscale MC, a small number of molecules are released into the medium by transmitters. The movement of each molecule is random and follows Brownian motion. On the other hand, in macroscale MC, a large number of molecules are released into the medium (the number of released molecules is in the order of moles). Macroscale MC is of interest as an alternative to electromagnetic wave-based systems in certain media \cite{guo2015molecular, farsad2016comprehensive}. 
In macroscale MC, instead of focusing on the location of individual molecules, by the law of large numbers, the \emph{average concentration} of molecules is described by Fick's law of diffusion, which is a \emph{deterministic} differential equation. Everything is described by deterministic equations until this point. However, various sources of noise or imperfection could be considered for molecular transmitters and receivers \cite{farsad2016comprehensive, gohari2016information}. In particular, 
the literature on MC usually associates a measurement noise to the receivers. The variance of this measurement noise depends on the concentration level of molecules at the receiver  \cite{einolghozati2012collective}. In particular, a commonly used measurement noise is the  ``Poisson particle counting noise" where the measurement follows a Poisson distribution whose mean is proportional to the concentration of molecules at the time of measurement.

\textbf{Motivation:} One of the unique features of MC with no parallel in classical wireless communication is \emph{chemical reactions}: different types of molecules can react with each other and form new types of molecules in the medium. This feature of MC poses a challenge in macroscale MC since equations describing chemical reaction with diffusing particles are \emph{nonlinear} partial differential equations with no closed-form solutions\cite{noclosedform:1,noclosedform:2,noclosedform:3}. 
In other words, there is no analytical closed-form formula that describes the variations in the concentration of molecules over time and space for given boundary conditions, inputs to the medium, and system parameters such as the diffusion and reaction rate constants. These equations are commonly solved by numerical methods. While numerical techniques (such as the finite difference method and the finite element method) can provide accurate solutions, they require extensive computational resources. More importantly, these solutions are not generalizable: assume that we have solved the reaction-diffusion equations for one particular release pattern of molecules by the transmitters into the medium. Next, if we make some changes to the release pattern of molecules, we should start all over again and solve the reaction-diffusion equations again for the new release pattern. Numerical methods provide little insight into how the solution would change if the input or parameters of the medium change. This has a major drawback for the design of coding and modulation schemes, where we wish to optimize and find the best possible release pattern. A tractable (even if approximate) model for the solution of the reaction-diffusion systems is required to design codes or modulation schemes. Only with such a model can one formally the optimality of certain waveforms for a given communication problem.

\textbf{Our contribution:} 
We apply the perturbation theory to  design modulation schemes among manufactured devices in macroscale MC systems involving chemical reactions. 
Perturbation theory is a set of 
mathematical methods for finding approximate solutions. It was originally proposed to calculate the motions of planets in the solar system, which is an intractable problem (as the so-called three-body problem shows). The theory has been broadly applied in different scientific areas such as physics and mechanics for solving algebraic equations with no analytical solutions. 
While the perturbation theory has been used in the study of differential equations in general and 
in the context of fiber optics communication systems \cite{pert-opt-cite}, to the best of our knowledge it has not been hitherto utilized in molecular communication.\footnote{ Note that  \cite{singularpert} uses the idea of \emph{
singular perturbation} which should not be confused with the \emph{perturbation theory}.} Just as non-linear functions can be locally approximated by the first terms of their Taylor series, perturbation theory shows that non-linear reaction-diffusion equations can be also approximated by their lower order terms.   Our main contribution is a proposal to design coding and modulation schemes based on the lower order terms, which admit closed-form solutions. In other words, we provide an analytically tractable (approximate) model for studying chemical reactions in molecular communication. The accuracy of the model can be increasingly improved by including more terms from the Taylor series.

We concretely illustrate the perturbation method by considering the following reaction-diffusion system in which molecules of type $\mathtt{A}$ and $\mathtt{B}$ react with each other in the medium and produce molecules of type $\mathtt{C}$:
\begin{align}\label{ex1:reactionN1}
\mathtt{A}  + \mathtt{B} 
\underset{\lambda_2}{\stackrel{\lambda_1}{\rightleftharpoons}} \mathtt{C}.
\end{align}
Forward and backward reaction rates $\lambda_1$ and $\lambda_2$ represent the speed at which the chemical reaction takes place in the forward and backward directions. This reaction-diffusion system is chosen because it is one of the simplest chemical reactions with no closed-form solution. This basic  reaction  appears in many contexts, e.g. $\text{base}+\text{acid}\rightarrow \text{salt}$, or $NH_3+{H_2O}\rightarrow {NH_4OH}$,  $\text{ADP}+\text{P}\rightarrow \text{ATP}$, or
$
\text{lipoprotein} + \text{PRAP1} 
\underset{\lambda_2}{\stackrel{\lambda_1}{\rightleftharpoons}}\text{PRAP1-lipoprotein}$ 
in biology. 
It is used in \cite{farahnak2018medium:j:16} to design a communication strategy between engineered devices. It may be also used in chemical computing, \emph{e.g.} to implement a binary AND gate (as the product molecules are produced only when both reactants are present).

The application considered in this paper is a communication system involving three engineered devices, two transmitters, and one receiver: the two transmitters release molecules of type $\mathtt A$ and $\mathtt B$ respectively, and the receiver senses the product molecules of type $\mathtt C$. 

The goal is to compute the best transmission waveform for each transmitter so that the receiver has minimum error probability to decode the messages of both transmitters (a multiple access channel setting). 
In this setting, we find the optimal waveforms that minimize the error probability in the low rate reaction regime (or in the small transmission time interval regime). In particular, we show that instead of using a continuous release waveform, it is optimal for the transmitters to release molecules at
two time instances.

To illustrate the broad applicability of our technique, we also discuss other reaction-diffusion systems (different from \eqref{ex1:reactionN1})  in the Appendix \ref{ex2,3}. These examples cover scenarios where  the  chemical reaction is either facilitating or impeding effective communication.

\textbf{Related works:} The math literature on nonlinear partial differential equations (PDEs) studies different aspects of these equations, such as the existence and uniqueness of the solution, semianalytical solutions, numerical solutions, etc. Since most PDEs do not have closed-form solutions,
showing the existence and in some cases, uniqueness of solution are important topics in the theory of PDEs. Unlike ordinary differential equations (ODE) which have a straightforward theorem for the existence and uniqueness of a solution, there is no general method for PDEs to show the existence and uniqueness of the solution. For instance, the existence and smoothness of solutions for the  Navier-Stokes equations that describe the motion of a fluid in space are fundamental open problems in physics. 
Semianalytical techniques are series expansion methods in which the solution is expressed as an infinite series of explicit functions.  Semianalytical techniques include  the Adomian decomposition method, Lyapunov artificial small parameter method, homotopy perturbation method, and perturbation methods  \cite{liao2003beyond,liao2012homotopy}.
 There are many numerical methods for solving PDEs which are classified in terms of complexity and stability of the solution. The finite element method, finite difference method, spectral finite element method, meshfree finite element method, discontinuous Galerkin finite element method are some examples of numerical techniques for solving PDEs \cite{numericalmethode:1,noclosedform:2}. For example, in the finite difference method, we partition the domain using a mesh and 
 approximate derivatives with finite differences computed over the mesh. One of the challenges of using this method is determining the appropriate mesh size to guarantee the stability of the solution.\footnote{ Our own experiment with this method indicates that choosing the appropriate mesh size is particularly challenging when we have a reaction-diffusion system involving molecules that have very different diffusion constants (especially when one diffusion constant is more than ten times another diffusion constant).}
 
 There are some prior works in the literature on chemical physics that utilize the perturbation theory. For instance,  the average survival time of  diffusion-influenced chemical reactions  is studied in \cite{pagitsas1992perturbation:e:1:1}.  In \cite{kryvohuz2014nonlinear:e:1:2} some classes of chemical kinetics  (describe by ODEs) have been solved by the perturbation theory. However, the setup and approach in these problems differ from the one encountered in MC. 
 
 Due to their evident importance to MC,  chemical reactions have been the subject of various studies. A molecular communication system involves transmitters, receivers, and the molecular media; chemical  reactions may occur in each of these individual building blocks. Transmitters could use chemical reactions to produce  signaling molecules \cite{bi2019chemical:j:9}. This is a chemical reaction inside the transmitter unit. On the receiver side, one might have ligand (or other types of) receptors on the surface of the receiver which react with incoming molecules. These receptors have been the subject of various studies, e.g., see \cite{chou2015impact:e:2:2:3,Ligand0, Ligand1, Ligand2, Ligand3,kuscu2018modeling:e:2:3:2, ahmadzadeh2016comprehensive:e:2:3:1,chou2014molecular:e:2:2:4}. 
 Finally, chemical reactions have also been considered in the communication medium. This aspect of chemical reactions is of interest in this work. 
 It is pointed out that chemical reactions could be used to suppress signaling molecules and thereby reduce
  intersymbol interference (ISI) and the signal-dependent measurement noise 
  \cite{noclosedform:1,cho2017effective:j:12}, or to amplify the  signal \cite{nakano2011repeater:F:22}. Complicated propagation patterns can be obtained by using chemical reactions\cite{nakano2015molecular:F:19}, and negative molecular signals could be implemented using chemical reactions  \cite{farsad2016molecular:F:16,wang2014transmit:F:20,mosayebi2017type:F:21}. Notably, chemical reactions are shown to be beneficial for coding and
 modulation design \cite{farsad2016molecular:F:16,farahnak2018medium:j:16,noclosedform:2,nakano2015molecular:F:19,wang2014transmit:F:20,mosayebi2017type:F:21}. Since solving the reaction-diffusion equations is  intractable, these works either use numerical simulations to back up the presented ideas or else simplify the reaction-diffusion process via idealized assumptions about chemical reactions. Authors in \cite{cao2019chemical:j:ro} provide an iterative numerical algorithm for solving reaction-diffusion equations.
 In \cite{farsad2017novel:e:2:2:2}, a neural network is used as the decoder in a medium with chemical reactions.
 
 This paper is organized as follows: in Section \ref{sec::generalperturbation}, we illustrate the perturbation method for solving reaction-diffusion equations through an example.
  In Section \ref{Sec:Mod} we design modulation schemes for our example in Section \ref{sec::generalperturbation}. In Section \ref{sec:Val}, we validate our model through numerical simulation. Finally concluding remarks and future work are given in Section \ref{Conclusion and Future Work}. 

 \textbf{Notations and Remarks:}

 For functions $u_i(x,t)$ and $v_i(x,t)$, we define index convolution as follows:
 \begin{align}\label{dicreteconv}
 (u_{0:i-1}*^d v_{0:i-1})(x,t)=\sum_{j=0}^{i-1} u_{j}(x,t)v_{i-1-j}(x,t).
 \end{align}
 For two functions $u(x,t)$ and $v(x,t)$, convolution in both time and space is denoted by $**$ and defined as follows:
 \begin{align}(u**v)(x,t)=\int_{x'}\int_{t'}u(x,t)v(x-x',t-t')dx'dt'.\label{eqnconvts}
 \end{align} 
 The concentration of molecules of type $\mathtt{A}$ at location $x$ and time $t$ is denoted by $[\mathtt{A}](x,t)$.
 Similarly, we use $[\mathtt{B}](x,t)$ and $[\mathtt{C}](x,t)$ to denote the concentration of molecules of types $\mathtt{B}$ and $\mathtt{C}$ respectively. We write $[\mathtt{A}](x,t)$ as a series in the perturbation method and use 
 $[\mathtt{A}]_{i}(x,t)$ as the coefficient for the $i$th term in the expansion. 
 For simplicity, we sometimes show a function without its arguments, e.g. we write $[\mathtt{A}]_{i}$ or $[\mathtt{A}]$ instead of $[\mathtt{A}]_{i}(x,t)$ or $[\mathtt{A}](x,t)$. Finally, all chemical reactions are considered in the time interval $[0,T]$ for some $T>0$.

\section{Solving Reaction-Diffusion Equations via the Perturbation Method}\label{sec::generalperturbation}

The perturbation method provides an explicit analytical solution in the form of an infinite series of functions. One can approximate the solution by taking a finite number of terms from this series.  
 In this paper, we use the perturbation method to obtain an analytical model for the reaction-diffusion equations, which can be used for molecular communication. 
  
Our working example in the paper is as follows: consider two molecular transmitters $\mathsf \mathsf T_{\mathtt {A}}$ and $\mathsf \mathsf T_{\mathtt {B}}$, which release molecules $\mathtt{A}$ and $\mathtt{B}$, respectively. Molecules of type $\mathtt{A}$ and $\mathtt{B}$ react with each other in the medium and produce molecules of type $\mathtt{C}$ as follows:
\begin{align}\label{ex1:reaction}
\mathtt{A}  + \mathtt{B} 
\underset{\lambda_2}{\stackrel{\lambda_1}{\rightleftharpoons}} \mathtt{C}
\end{align}
We assume a molecular receiver $\mathsf \mathsf R_{\mathtt {C}}$, which measures the concentration of molecules of type $\mathtt{C}$ at its location in order to detect the messages of the two transmitters. 


For simplicity, assume that the transmitters and receiver are placed on a one-dimensional line, with $\mathsf T_{\mathtt {A}}$, $\mathsf T_{\mathtt {B}}$, and $\mathsf R_{\mathtt {C}}$ being located at $x=0$,  $x=d_\mathtt{B}$, and $x=d_R$ respectively. This assumption is just for the simplicity of exposition; the problem could be solved in two or three dimensions and we also provide simulation results in dimensions two and three in Section \ref{sec:Val}. We assume that the transmitters and receivers are small in size and view them as points on the real line; the transmitter and receiver are considered to be transparent, not impeding the diffusion of the released molecules (assumption of point sources is adopted for simplicity and is common in the MC literature, e.g., \cite{farahnak2018medium:j:16,noclosedform:2,farsad2016molecular:F:16}). This is depicted in Fig. \ref{fig:ex1}. Further details of the communication model are described in Section \ref{Sec:Mod} where we design a modulation scheme. In the rest of this section, we are merely interested in solving the reaction-diffusion equation describing reaction \eqref{ex1:reaction}.

\begin{figure}

	\centering
	\includegraphics[trim={1cm 0cm 0cm 0cm}, scale=0.25]{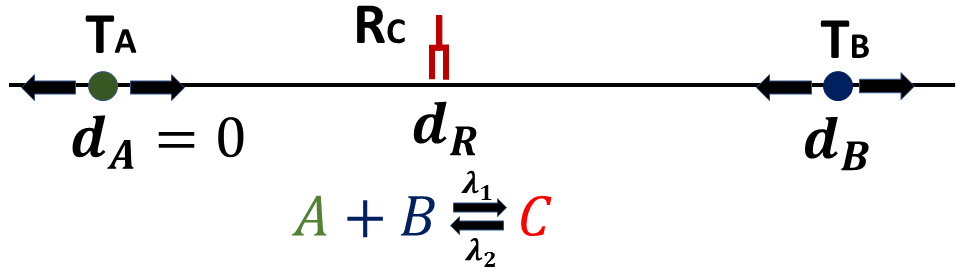}
	\caption{The system model}
	\label{fig:ex1}

	\vspace{-2.5em}

\end{figure}

Let $\gamma=\lambda_2/\lambda_1$ be the ratio of the  backward and forward  reaction rates, and set  $\lambda=\lambda_1,\lambda_2=\gamma\lambda$ for some $\lambda\geq 0$. 
The following equation describes the system dynamic:
\begin{align}
&\frac{\partial[\mathtt{A}]}{\partial t}=D_\mathtt{A}~\frac{\partial^{2}[\mathtt{A}]}{{\partial x}^{2}}-\lambda~[\mathtt{A}][\mathtt{B}]+\gamma \lambda  [\mathtt{C}]+f_\mathtt{A}(x,t),\label{A:ex1eq}\\
&	\frac{\partial[\mathtt{B}]}{\partial t}=D_\mathtt{B}~\frac{\partial^{2}[\mathtt{B}]}{{\partial x}^{2}}-\lambda~[\mathtt{A}][\mathtt{B}]+\gamma \lambda  [\mathtt{C}]+f_\mathtt{B}(x,t),\label{B:ex1eq}\\
&	\frac{\partial[\mathtt{C}]}{\partial t}=D_\mathtt{C}~\frac{\partial^{2}[\mathtt{C}]}{{\partial x}^{2}}+\lambda~[\mathtt{A}][\mathtt{B}]-\gamma \lambda  [\mathtt{C}]\label{C:ex1eq},
\end{align}
where $f_\mathtt{A}(x,t),f_\mathtt{B}(x,t)$ are the released concentration of molecules (called the transmission waveforms), and $D_\mathtt{A}, D_\mathtt{B}, D_\mathtt{C}$  are the diffusion constants. The initial and boundary conditions are set to zero.  We are interested in the solution for time $t\in[0,T]$.

For the special case of $\lambda=0$, \emph{i.e.,} when there is no reaction, the above system of equations is linear and tractable.
  Consider a solution of these equations in the form, 
  $[\mathtt{A}](x,t)=\sum_{i=0}^{\infty}\lambda^{i}
  [\mathtt{A}]_{i}(x,t)$, 
  $[\mathtt{B}](x,t)=\sum_{i=0}^{\infty}\lambda^{i}
  [\mathtt{B}]_{i}(x,t)$ and $[\mathtt{C}](x,t)=\sum_{i=0}^{\infty}\lambda^{i}
  [\mathtt{C}]_{i}(x,t)$ for some Taylor series coefficients 
  $[\mathtt{A}]_{i}(x,t)$, $[\mathtt{B}]_{i}(x,t)$ and $[\mathtt{C}]_{i}(x,t)$.
  By substituting these expressions in \eqref{A:ex1eq}, \eqref{B:ex1eq} and \eqref{C:ex1eq}, we obtain
  \begin{align}\label{New1h}
  &\frac{\partial}{\partial t}(\sum_{i=0}^{\infty}\lambda^{i}
  [\mathtt{A}]_{i})
  =D_\mathtt{A}~\frac{\partial^{2}}{{\partial x}^{2}}(\sum_{i=0}^{\infty}\lambda^{i}[\mathtt{A}]_{i})-\lambda (\sum_{i=0}^{\infty}\lambda^{i}
  [\mathtt{A}]_{i})(\sum_{i=0}^{\infty}\lambda^{i}[\mathtt{B}]_{i})+\gamma\lambda \sum_{i=0}^{\infty}\lambda^{i}[\mathtt{C}]_{i} +f_\mathtt{A}(x,t),
  \end{align}
  \begin{align}\label{New2h:b}
  &\frac{\partial}{\partial t}(\sum_{i=0}^{\infty}\lambda^{i}
  [\mathtt{B}]_{i})
  =D_\mathtt{B}~\frac{\partial^{2}}{{\partial x}^{2}}(\sum_{i=0}^{\infty}\lambda^{i}[\mathtt{B}]_{i})-\lambda (\sum_{i=0}^{\infty}\lambda^{i} 
  [\mathtt{A}]_{i})(\sum_{i=0}^{\infty}\lambda^{i}
  [\mathtt{B}]_{i})+\gamma\lambda \sum_{i=0}^{\infty}\lambda^{i}[\mathtt{C}]_{i}+
  f_\mathtt{B}(x,t),
  \end{align}
  \begin{align}\label{New3h}
  &\frac{\partial}{\partial t}(\sum_{i=0}^{\infty}\lambda^{i}
  [\mathtt{C}]_{i})
  =D_\mathtt{C}~\frac{\partial^{2}}{{\partial x}^{2}}(\sum_{i=0}^{\infty}\lambda^{i}[\mathtt{C}]_{i})+\lambda (\sum_{i=0}^{\infty}\lambda^{i} 
  [\mathtt{A}]_{i})(\sum_{i=0}^{\infty}\lambda^{i}
  [\mathtt{B}]_{i})-\gamma\lambda \sum_{i=0}^{\infty}\lambda^{i}[\mathtt{C}]_{i}
  .
  \end{align}
  
  Equations \eqref{New1h}, \eqref{New2h:b}, and \eqref{New3h} could be viewed as power series in $\lambda$ for fixed values of $x$ and $t$. Matching the coefficients of $\lambda^i$ on both sides of the equation, we obtain the following:
  \begin{align}\label{zerosol}
  \frac{\partial [\mathtt{A}]_{0}}{\partial t}=D_\mathtt{A}~\frac{\partial^{2}[\mathtt{A}]_{0}}{{\partial x}^{2}}+f_\mathtt{A}(x,t),~~
  \frac{\partial [\mathtt{B}]_{0}}{\partial t}=D_\mathtt{B}~\frac{\partial^{2} [\mathtt{B}]_{0}}{{\partial x}^{2}}+f_\mathtt{B}(x,t),~~
  \frac{\partial [\mathtt{C}]_{0}}{\partial t}=D_\mathtt{C}~\frac{\partial^{2} [\mathtt{C}]_{0}}{{\partial x}^{2}}.
  \end{align}
  For $i\geq 1$ we obtain
   \label{othersolA} 
  \begin{align}
  &\frac{\partial[\mathtt{A}]_{i}}{\partial t}=D_\mathtt{A}~\frac{\partial^{2}[\mathtt{A}]_{i}}{{\partial x}^{2}}-[\mathtt{A}]_{0:i-1}*^d
  [\mathtt{B}]_{0:i-1}+\gamma[\mathtt{C}]_{i-1},\label{othersolA} \\	
  &\frac{\partial[\mathtt{B}]_{i}}{\partial t}=D_\mathtt{B}~\frac{\partial^{2}[\mathtt{B}]_{i}}{{\partial x}^{2}}-[\mathtt{A}]_{0:i-1}*^d
  [\mathtt{B}]_{0:i-1}+\gamma[\mathtt{C}]_{i-1},\label{othersolB} \\
  &\frac{\partial[\mathtt{C}]_{i}}{\partial t}=D_\mathtt{C}~\frac{\partial^{2}[\mathtt{B}]_{i}}{{\partial x}^{2}}-[\mathtt{A}]_{0:i-1}*^d
  [\mathtt{B}]_{0:i-1}-\gamma[\mathtt{C}]_{i-1},\label{othersolC} 
  \end{align} 
  
  where we used the index convolution defined in \eqref{dicreteconv}.
  The functions $[\mathtt{A}]_{i}(x,t)$,  $[\mathtt{B}]_{i}(x,t)$ and $[\mathtt{C}]_{i}(x,t)$ could be find recursively by first computing $[\mathtt{A}]_{0}(x,t)$, $[\mathtt{B}]_{0}(x,t)$ and $[\mathtt{C}]_{0}(x,t)$  from \eqref{zerosol}, and then using \eqref{othersolA}-\eqref{othersolC} to compute $[\mathtt{A}]_{i}(x,t)$, $[\mathtt{B}]_{i}(x,t)$ and $[\mathtt{C}]_{i}(x,t)$  from $[\mathtt{A}]_{j}(x,t)$, $[\mathtt{B}]_{j}(x,t)$ and $[\mathtt{C}]_{j}(x,t)$ for $j\leq i-1$.
  
  \subsection{Limitation on the simulation time interval and a remedy}
The perturbation method provides recursive equations to find the functions $[\mathtt{A}]_{i}(x,t)$ and $[\mathtt{B}]_{i}(x,t)$.
However, one still needs to show that the power series $\sum_{i=0}^{\infty}\lambda^{i}
[\mathtt{A}]_{i}(x,t)$ and $\sum_{i=0}^{\infty}\lambda^{i}
[\mathtt{B}]_{i}(x,t)$ are convergent to functions that satisfy the original reaction-diffusion differential equation. Perturbation theory does not provide a general recipe for showing this convergence, and it should be done on a
 case-by-case basis. We show in Appendix \ref{AppA0 } that the series is convergent as long as $\lambda\leq \mathcal{O}(1/T)$, \emph{i.e.,} the radius of convergence of the series is proportional with the inverse of
 the total time period. In other words, we need $\lambda$ to be sufficiently small (or for $T$ to be sufficiently small). Provided that $T$ is sufficiently small, Appendix \ref{AppA0 } shows that the solution given by the perturbation method is equal to the true solution (satisfies the reaction-diffusion equations).

Next, we propose an approach to tackle the above shortcoming if we wish to compute the solution for larger values of $T$:  we can divide the time interval $T$ into $k$ subintervals of size $T/k$. Since the radius of convergence is of order $\mathcal{O}(\frac 1T)$, reducing the size of the interval by a factor of $k$ increases the radius of convergence by a multiplicative factor of $k$. Thus, we can first compute the solution in the interval $[0,T/k]$ and use the value at $T/k$ to compute the solution in the interval $[T/k,2T/k]$, etc.

\subsection{Solving equations \eqref{zerosol}-\eqref{othersolC}} 
  
  In order to solve \eqref{zerosol}, let  $\phi_\mathtt{A}(x,t)$ be the  solution  of the following equation with diffusion coefficient $D_\mathtt{A}$:
  \begin{align}\label{greenA}
  \frac{\partial \phi_\mathtt{A}}{\partial t}=D_\mathtt{A}~\frac{\partial^{2}\phi_\mathtt{A}}{{\partial x}^{2}}+\delta(x)\delta(t).
  \end{align}
  We have
  \begin{align}
  \phi_\mathtt{A}(x,t)=\frac{1}{\sqrt{4\pi D_\mathtt{A} t}}\exp(-\frac{x^2}{4 D_\mathtt{A} t}),~ x\in \mathbb{R},~t\geq 0.
  \end{align}
  Similarly, we define $\phi_\mathtt{B}(x,t)$ and  $\phi_\mathtt{C}(x,t)$  as the solutions of  
  \begin{align}\label{greenBC}
  \frac{\partial \phi_\mathtt{B}}{\partial t}=D_\mathtt{B}~\frac{\partial^{2}\phi_\mathtt{B}}{{\partial x}^{2}}+\delta(x)\delta(t), \qquad
  \frac{\partial \phi_\mathtt{C}}{\partial t}=D_\mathtt{C}~\frac{\partial^{2}\phi_\mathtt{C}}{{\partial x}^{2}}+\delta(x)\delta(t).
  \end{align}
  Then, 
  the solutions of  \eqref{zerosol}, 
  \eqref{othersolA}-\eqref{othersolC} are as follows:
  \begin{align}\label{ex1:A0,B0}
  &[\mathtt{A}]_{0}=\phi_\mathtt{A}**f_{\mathtt{A}},~~	[\mathtt{B}]_{0}=\phi_\mathtt{B}**f_{\mathtt{B}},~~[\mathtt{C}]_{0}=0,
  \end{align}
  and for $i\geq 1$, 
  \begin{align}
  &[\mathtt{A}]_{i}=-\phi_\mathtt{A}**([\mathtt{A}]_{0:i-1}*^d[\mathtt{B}]_{0:i-1}+\gamma[\mathtt{C}]_{i-1}),\label{recu-eq1}\\	
  &[\mathtt{B}]_{i}=-\phi_\mathtt{B}**([\mathtt{A}]_{0:i-1}*^d
  [\mathtt{B}]_{0:i-1}+\gamma[\mathtt{C}]_{i-1}),\label{recu-eq2}\\
  &[\mathtt{C}]_{i}=+\phi_\mathtt{C}**([\mathtt{A}]_{0:i-1}*^d
  [\mathtt{B}]_{0:i-1}-\gamma[\mathtt{C}]_{i-1}),\label{recu-eq3}
  \end{align}
  where  $**$ stands for the  two-dimensional convolution (see \eqref{eqnconvts}).
  
 In general, the perturbation method does not yield a closed-form analytical solution for the reaction-diffusion equation because it is not possible (in general) to find a closed-form solution for the recursive equations  \eqref{recu-eq1}-\eqref{recu-eq3}. However, the following example indicates a special instance where a closed-form solution can be found.
\begin{example}\label{exampleA1}

Consider the special choice of $f_{\mathtt{A}}(x,t)=f_{\mathtt{B}}(x,t)=1 ,x\in \mathbb{R}^{d}, t\geq 0  $, $D_{{\mathtt{A}}}=D_{{\mathtt{B}}}$, and $\gamma=0$. While these input signals are not interesting, the convolutions in  \eqref{recu-eq1}-\eqref{recu-eq3} can be explicitly worked out for this choice of input signals.

Using symmetry, we have $[\mathtt{A}]_i(x,t)=[\mathtt{B}]_i(x,t)$ for $i\geq 0$. The zero and first terms of perturbation  can be computed as follows:
\begin{align}
&[\mathtt{A}]_{0}(x,t)=\phi**f_{ \mathtt{A}}=\int_{0}^{t}\int_{x\in \mathbb{R}^{d}}\phi(x-x',t-t')f_{ \mathtt{A}}(x',t')dx'dt'=t. 
\end{align}
We prove, by induction, that
$[\mathtt{A}]_{i}(x,t)=(-1)^{i}\alpha_{i}t^{2i+1}$ where $\alpha_{i}=\sum_{j=0}^{i-1} \alpha_{j}\alpha_{i-1-j}/(2i+1), i\geq 1, $ and $\alpha_0=1$.  We have:
\begin{align}
[\mathtt{A}]_{i}&=-\phi**\sum_{j=0}^{i-1}[\mathtt{A}]_{j}[\mathtt{A}]_{i-1-j}\nonumber\\ &=-\int_{0}^{t}\int_{x\in \mathbb{R}^{d}}\phi(x-x',t-t')\sum_{j=0}^{i-1}(-1)^{j}\alpha_{j}t'^{2j+1}(-1)^{i-1-j}\alpha_{i-1-j}t'^{2i-2j-1} dx'dt'\nonumber\\&=(-1)^{i}\frac{\sum_{j=0}^{i-1} \alpha_{j}\alpha_{i-1-j}}{2i+1}t^{2i+1}. 
\end{align}
By induction, one can prove that $(1/3)^{i}\leq \alpha_{i}\leq (1/2)^i$. The solution of reaction-diffusion for this particular example is as follows: $[\mathtt{A}](x,t)=\sum_{i=0}^{\infty}\lambda^{i}[\mathtt{A}]_i=\sum_{i=0}^{\infty}\lambda^{i}(-1)^{i}\alpha_{i}t^{2i+1}$. 
\end{example}

\subsection{Interpretation in terms of Picard's iterative process}
An alternative way to understand the recursive equations \eqref{ex1:A0,B0}-\eqref{recu-eq3} is through 
Picard's iterative process. In this process, one converts the differential equation to an integral form and uses it to identify the Picard iteration operator. 
 Consider the reaction-diffusion equations in \eqref{A:ex1eq}-\eqref{C:ex1eq}. To convert this system of
 differential equations to an integral form, we utilize
 the fact that the solution of the differential equation $
  \frac{\partial[\mathtt{u}]}{\partial t}=D_\mathtt{u}~\frac{\partial^{2}[\mathtt{u}]}{{\partial x}^{2}}+g_{u}$
   with zero initial and boundary condition is $u=\phi_{u}**g_{u}$ where $\phi_{u}$ is the solution of the differential equation for an impulse input (impulse response). Thus, from \eqref{A:ex1eq}-\eqref{C:ex1eq} we obtain
\begin{align}
&[\mathtt{A}](x,t)=\phi_{\mathtt A}**(-\lambda[\mathtt{A}][\mathtt{B}]+\gamma\lambda[\mathtt{C}]+f_{\mathtt{A}}(x,t)),\\
&[\mathtt{B}](x,t)=\phi_{\mathtt B}**(-\lambda[\mathtt{A}][\mathtt{B}]+\gamma\lambda[\mathtt{C}]+f_{\mathtt{B}}(x,t)),\\
&[\mathtt{C}](x,t)=\phi_{\mathtt C}**(\lambda[\mathtt{A}][\mathtt{B}]-\gamma\lambda[\mathtt{C}]).
\end{align}
Therefore, the Picard iteration operator is obtained as follows:
\begin{align}
&\mathcal{\tau}:\mathcal{C}\times \mathcal{C}\times \mathcal{C} \mapsto \mathcal{C}\times \mathcal{C}\times \mathcal{C}\nonumber\\
&\mathcal{\tau}(\mathtt{a},\mathtt{b},\mathtt{c})=(\phi_\mathtt{A}**(-\lambda\mathtt{a}\mathtt{b}+\lambda\gamma\mathtt{c}+f_\mathtt{A}),\phi_\mathtt{B}**(-\lambda\mathtt{a}\mathtt{b}+\lambda\gamma\mathtt{c}+f_\mathtt{B}),\phi_\mathtt{C}**(\lambda\mathtt{a}\mathtt{b}-\gamma\lambda\mathtt{c})),
\end{align}
where $\mathcal{C}$ is the space of continuous functions. The solution of the reaction-diffusion system is a fixed point of the Picard iteration operator. Picard's iterative process suggests the following recursive algorithm to find a fixed point of the operator $\mathcal{\tau}$:
\begin{align}
(\mathtt{a}^{(n+1)},\mathtt{b}^{(n+1)},\mathtt{c}^{(n+1)})=\mathcal{\tau}(\mathtt{a}^{(n)},\mathtt{b}^{(n)},\mathtt{c}^{(n)}),~~~(\mathtt{a}^{(-1)},\mathtt{b}^{(-1)},\mathtt{c}^{(-1)})=(0,0,0),~~n\geq-1\label{banchiter},  
\end{align}
 where $(\mathtt{a}^{(n)},\mathtt{b}^{(n)},\mathtt{c}^{(n)})$ is the triple of functions that we obtain at the $n$th iteration. Picard's iterative process relates to the perturbation method as follows: using induction and comparing \eqref{recu-eq1}-\eqref{recu-eq3} with the recursive equations given in \eqref{banchiter}, one obtains
 \begin{align}
     \mathtt{a}^{(n)}&=\sum_{i=0}^{n}\lambda^{i}
 [\mathtt{A}]_{i}(x,t)&n=0,1\\
 \mathtt{a}^{(n)}&=\sum_{i=0}^{n}\lambda^{i}
 [\mathtt{A}]_{i}(x,t)+\mathcal{O}(\lambda^{n+1})&n\geq 2.
 \end{align}
 A similar statement holds for $\mathtt{b}^{(n)}$ and $\mathtt{c}^{(n)}$, showing that their first $n$ lower terms match the first $n$ lower terms of the Taylor series expansion in the perturbation method.
 
\subsection{Approximate Solutions}
 
In practice, we can approximate the true concentrations of molecules by computing  the first $n$ terms in the Taylor series expansion:
\begin{align}
[\mathtt{A}](x,t)\approx \sum_{i=0}^{n}\lambda^{i}
[\mathtt{A}]_{i}(x,t)
\end{align}
for some fixed $n$. One can stop at the $n$-term of the Taylor expansion when the contribution of the $(n+1)$th  term is negligible compared to the first $n$ terms. While $n$th term can be explicitly found in Example \ref{exampleA1}, for general reaction-diffusion equations, there is no (non-recursive) explicit expression for the terms in the Taylor series expansion. 
In Section \ref{sec:Val}, we illustrate the sufficiency of choosing $n=1$ or $n=2$ when the length of the time interval $T$ is  in the order of seconds. In the following, we provide a theoretical justification for choosing $n=1$ when the time interval $T$ is small: consider the reaction-diffusion equations
\begin{align}
&\frac{\partial[\mathtt{A}]}{\partial t}=D_\mathtt{A}~\frac{\partial^{2}[\mathtt{A}]}{{\partial x}^{2}}-\lambda~[\mathtt{A}][\mathtt{B}]+\gamma \lambda  [\mathtt{C}]+f_\mathtt{A}(x,t),\\
&	\frac{\partial[\mathtt{B}]}{\partial t}=D_\mathtt{B}~\frac{\partial^{2}[\mathtt{B}]}{{\partial x}^{2}}-\lambda~[\mathtt{A}][\mathtt{B}]+\gamma \lambda  [\mathtt{C}]+f_\mathtt{B}(x,t),\\
&	\frac{\partial[\mathtt{C}]}{\partial t}=D_\mathtt{C}~\frac{\partial^{2}[\mathtt{C}]}{{\partial x}^{2}}+\lambda~[\mathtt{A}][\mathtt{B}]-\gamma \lambda  [\mathtt{C}],
\end{align}
along with the initial conditions $
[\mathtt{A}](x,0)=g_{\mathtt{A}}(x), [\mathtt{B}](x,0)=g_{\mathtt{B}}(x)$ and $ [\mathtt{C}](x,0)=g_{\mathtt{C}}(x)$.
The  one-step finite difference iteration (in the finite difference method) for a small value of $t$ is as follows:
\begin{align}
[\mathtt{A}]_{FDM}(x,t)&=[\mathtt{A}](x,0)+t\frac{\partial[\mathtt{A}]}{\partial t}(x,0)
\nonumber\\&=g_{\mathtt{A}}(x)+t\left[  D_{\mathtt{A}}g_{\mathtt{A}}''(x)- \lambda g_{\mathtt{A}}(x)g_{\mathtt{B}}(x)+ \gamma\lambda g_{\mathtt{C}}(x)+  f_{\mathtt{A}}(x,0)\right]
\nonumber\\&=g_{\mathtt{A}}(x)+\mathcal{O}(t).\end{align}
Let $[\mathtt{A}]^{(1)}(x,t)=[\mathtt{A}]_{0}(x,t)+\lambda [\mathtt{A}]_{1}(x,t)$ be the approximate solution of the perturbation method up to the order one terms. In Appendix \ref{AppA:3} we show the following equivalence with the finite difference solution:
\begin{theorem}\label{thmN1}
The first-order approximation  $[\mathtt{A}]^{(1)}(x,t)=[\mathtt{A}]_{0}(x,t)+\lambda [\mathtt{A}]_{1}(x,t)$ and the finite difference method are equivalent for small values of $t$, \emph{i.e.,} $  [\mathtt{A}]^{(1)}(x,t)= [\mathtt{A}]_{FDM}(x,t)+\mathcal{O}(t^2)$.
   A similar statement holds for molecules of type $\mathtt{B}$ and $\mathtt{C}$. 
\end{theorem}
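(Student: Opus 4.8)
The plan is to Taylor-expand the two perturbation terms $[\mathtt{A}]_0(x,t)$ and $[\mathtt{A}]_1(x,t)$ in time about $t=0$, retain only the constant and linear-in-$t$ contributions, and check that their $\lambda$-weighted sum reproduces the one-step finite-difference expression. The single ingredient that needs care — and the step I expect to be the main obstacle — is the correct assignment of initial data to each perturbation coefficient. Since the series $[\mathtt{A}](x,t)=\sum_{i\geq 0}\lambda^i[\mathtt{A}]_i(x,t)$ must equal $g_{\mathtt{A}}(x)$ at $t=0$ for every admissible value of $\lambda$, matching powers of $\lambda$ forces $[\mathtt{A}]_0(x,0)=g_{\mathtt{A}}(x)$ while $[\mathtt{A}]_i(x,0)=0$ for all $i\geq 1$ (and likewise for $\mathtt{B}$ and $\mathtt{C}$). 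Everything beyond this bookkeeping is routine.

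First I would expand the zeroth-order term as $[\mathtt{A}]_0(x,t)=[\mathtt{A}]_0(x,0)+t\,\partial_t[\mathtt{A}]_0(x,0)+\mathcal{O}(t^2)$ and substitute the governing equation \eqref{zerosol}. The time derivative at the origin equals $D_{\mathtt{A}}g_{\mathtt{A}}''(x)+f_{\mathtt{A}}(x,0)$, giving $[\mathtt{A}]_0(x,t)=g_{\mathtt{A}}(x)+t\bigl[D_{\mathtt{A}}g_{\mathtt{A}}''(x)+f_{\mathtt{A}}(x,0)\bigr]+\mathcal{O}(t^2)$.

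Next I would expand the first-order term, where the initial-data bookkeeping pays off: because $[\mathtt{A}]_1(x,0)=0$ holds identically in $x$, its spatial second derivative vanishes at $t=0$ and the diffusion term drops out of $\partial_t[\mathtt{A}]_1(x,0)$. Reading off the $i=1$ case of \eqref{othersolA}, in which the index convolution collapses to the single product $[\mathtt{A}]_0[\mathtt{B}]_0$, I obtain $\partial_t[\mathtt{A}]_1(x,0)=-g_{\mathtt{A}}(x)g_{\mathtt{B}}(x)+\gamma g_{\mathtt{C}}(x)$, so that $[\mathtt{A}]_1(x,t)=t\bigl[-g_{\mathtt{A}}(x)g_{\mathtt{B}}(x)+\gamma g_{\mathtt{C}}(x)\bigr]+\mathcal{O}(t^2)$.

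Finally, adding the two expansions with weight $\lambda$ on the first-order contribution, the combined $\mathcal{O}(t)$ coefficient becomes $D_{\mathtt{A}}g_{\mathtt{A}}''(x)+f_{\mathtt{A}}(x,0)-\lambda g_{\mathtt{A}}(x)g_{\mathtt{B}}(x)+\gamma\lambda g_{\mathtt{C}}(x)$, which matches term-for-term the bracket in the stated $[\mathtt{A}]_{FDM}(x,t)$; hence $[\mathtt{A}]^{(1)}(x,t)=[\mathtt{A}]_{FDM}(x,t)+\mathcal{O}(t^2)$. The claims for $\mathtt{B}$ and $\mathtt{C}$ follow by the identical computation using \eqref{othersolB} and \eqref{othersolC}, the only change being the sign of the reaction term in the $\mathtt{C}$ equation together with $[\mathtt{C}]_0(x,0)=g_{\mathtt{C}}(x)$.
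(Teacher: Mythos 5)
Your proposal is correct and follows essentially the same route as the paper's proof in Appendix~\ref{AppA:3}: Taylor-expand $[\mathtt{A}]_0$ and $[\mathtt{A}]_1$ about $t=0$, use the initial conditions $[\mathtt{A}]_0(x,0)=g_{\mathtt{A}}(x)$ and $[\mathtt{A}]_i(x,0)=0$ for $i\geq 1$ to evaluate the time derivatives from \eqref{zerosol} and \eqref{othersolA}, and match the $\mathcal{O}(t)$ coefficient against the finite-difference bracket. The only difference is cosmetic: the paper additionally notes that the terms with $i\geq 2$ are $\mathcal{O}(t^2)$, which is not needed for the statement as written about $[\mathtt{A}]^{(1)}$.
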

As illustrated in Section
    \ref{sec:Val}, the time steps do not necessarily need to be very small for the perturbation solution to closely track the true solution (and can be in the order of seconds).  Therefore, the perturbation method can be understood as a generalization of the finite difference method that allows for longer time steps.

  \section{Modulation Design}\label{Sec:Mod}
  In Section \ref{sec::generalperturbation}, a closed-form solution for the reaction-diffusion equation \eqref{ex1:reaction} was obtained. In this section, we use this solution to find the optimal release pattern of molecules by the transmitters in order to minimize the error probability of the communication system.

	Suppose transmitter $ \mathsf T_{\mathtt {A}}$ intends to send a message bit $M_\mathtt{A}\in\{0,1\}$ to the receiver. The transmitter $ \mathsf T_{\mathtt {A}}$ releases molecules of type $[\mathtt {A}]$ into the medium according to $f_\mathtt{A}^{0}(x,t)=a_0(t)\delta(x), 0\leq t\leq T$ if $M_\mathtt{A}=0$, and according to $f_\mathtt{A}^{1}(x,t)=a_1(t)\delta(x), 0\leq t\leq T$ if 
	$M_\mathtt{A}=1$. 
In other words, the waveform $a_{0}(t)$ is the released concentration of molecules of type $\mathtt A$ when $M_\mathtt{A}=0$, and $a_{1}(t)$
 is the released density when $M_\mathtt{A}=1$. Similarly, suppose that transmitter $ \mathsf T_{\mathtt {B}}$
 encodes message $M_\mathtt{B}\in\{0,1\}$ by $f_{\mathtt{B}}^{j}(x,t)=b_{j}(t)
 \delta(x-d_{\mathtt{B}}), 0\leq t\leq T, j=0,1$, where the waveforms $b_{0}(t)$ and $b_{1}(t)$ are defined similarly for transmitter $\mathsf T_{\mathtt {B}}$. The total amount of released molecules of types $\mathtt A$ and $\mathtt B$ during the transmission period $T$ is assumed to be at most $s_{\mathtt A}$ and $s_{\mathtt B}$ respectively,  \emph{i.e.,}
 \begin{align}\int_{0}^T a_i(t)dt\leq s_{\mathtt A},~\int_{0}^T b_i(t)dt\leq s_{\mathtt B}, \quad i=0,1.\label{eqn:New6}
\end{align}
  
  The input messages
 ${M}_\mathtt{A}, {M}_\mathtt{B}$ are assumed to be uniform Bernoulli random variables.  
 Receiver $\mathsf R_{\mathtt {C}}$
samples the density of molecules of type $\mathtt{C}$ at its location at the end of the time slot at time $t=T$. The receiver is assumed to be transparent and suffers from the particle counting noise. More specifically, the receiver is assumed to get a sample from $\mathsf{Poisson}\big(V\cdot [\mathtt{C}](d_{R},T)\big)$ where $[\mathtt{C}](d_{R},T)$ is the density of molecules of type $\mathtt C$ at time $T$ and location $d_{R}$, and $V$ is a constant.

 Using this observation, $\mathsf R_{\mathtt {C}}$ 
outputs its estimate of the transmitted message pair $(\hat{M}_\mathtt{A},\hat{M}_\mathtt{B})\in \{00,01,10,11\}$. The probability of error is 
 \begin{equation}
 Pr(e)=Pr\{ (M_{\mathtt{A}}, M_{\mathtt{B}})\neq (\hat{M}_\mathtt{A},\hat{M}_\mathtt{B})\}.\label{eqnErrorProb1}
 \end{equation}
 Having transmitted messages 
 $a_{m_{\mathtt{A}}}(t), b_{m_{\mathtt{B}}}(t)$
 	 for a message pair $(m_{\mathtt{A}}, m_{\mathtt{B}})$,  let $\rho_{m_{\mathtt{A}}, m_{\mathtt{B}}}$ denote the density of molecules of type $\mathtt{C}$ at the time of sampling. The receiver aims to recover $\hat m_{\mathtt{A}}, \hat m_{\mathtt{B}}$ using a sample $\mathsf{Poisson}(V\cdot \rho_{m_{\mathtt{A}}, m_{\mathtt{B}}})$.  Our goal is to design nonnegative signals $a_{i}(t), b_{j}(t)$ 	satisfying \eqref{eqn:New6}
	to minimize the error probability. The key utility of the technique given in  Section \ref{sec::generalperturbation} is that it provides a closed-form expression for 
	$\rho_{m_{\mathtt{A}}, m_{\mathtt{B}}}$ in terms of $a_i(t)$ and $b_j(t)$. 

	An order $n$ approximation to the reaction-diffusion is when we consider the reaction-diffusion equations up to the terms of order $\lambda^n$ in the Taylor expansion. We say that waveforms $a_i(t)$ and $b_j(t)$ are 
 order $n$ optimal waveforms if they minimize the error probability for the approximate reaction-diffusion equations up to order $n$. For low reaction rates, we are mainly interested in order one optimal waveforms. In the following theorem, we show that for low reaction rates, one possible optimal strategy is for 
	$ \mathsf T_{\mathtt {A}},\mathsf T_{\mathtt {B}}$  to release molecules in at most two time instances in a bursty fashion. The location of these instantaneous releases is determined by the message it wants to transmit. In other words, waveform $a_{i}(t)$ is the sum of two Dirac's delta functions. 
\begin{theorem}\label{th::1}
Consider approximating the solution of the reaction-diffusion equations with the perturbation solution up to the 
first-order term. Consider the problem of choosing waveforms $a_i(t), b_i(t)$ to minimize the error probability. Then, restricting  minimization to waveforms $a_i(t), b_i(t)$ with the following structure does not change the minimum value of error probability:
	\begin{align}
	&a_0(t)=\hat{a}_{01}\delta(t-t_1^{[a_0]})+\hat{a}_{02}\delta(t-t_2^{[a_0]}),~
	 a_1(t)=\hat{a}_{11}\delta(t-t_1^{[
		a_1]})+\hat{a}_{12}\delta(t-t_2^{[a_1]}),\\
	& b_0(t)=\hat{b}_{01}\delta(t-t_1^
	{[b_0]})+\hat{b}_{02}\delta(t-t_2^
	{[b_0]}),~~
	b_1(t)=\hat{b}_{11}\delta(t-t_1^{[b_1]})+\hat{b}_{12}\delta(t-t_2^{[b_1]}),
	\end{align}
	 for some non-negative constants  $\hat{a}_{ij},\hat{b}_{ij}$ satisfying
	 $\sum_{j}\hat{a}_{ij}\leq s_{\mathtt A}$ and
	 $\sum_{j}\hat{b}_{ij}\leq s_{\mathtt B}$ for $i=0,1$,
	  and for some
	$t_j^{[a_i]},t_j^{[b_i]}\in[0,T]$. 
\end{theorem}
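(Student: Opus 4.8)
The plan is to first make the dependence of the sampled concentration $\rho_{m_\mathtt{A},m_\mathtt{B}}$ on the four waveforms fully explicit, and then to reduce each waveform to two impulses by a Carath\'eodory-type argument. Since we work with the first-order approximation $[\mathtt{C}](x,t)\approx [\mathtt{C}]_0(x,t)+\lambda[\mathtt{C}]_1(x,t)$ and $[\mathtt{C}]_0=0$ by \eqref{ex1:A0,B0}, we have $\rho_{m_\mathtt{A},m_\mathtt{B}}\approx \lambda[\mathtt{C}]_1(d_R,T)$. Substituting $[\mathtt{C}]_1=\phi_\mathtt{C}**([\mathtt{A}]_0[\mathtt{B}]_0)$ from \eqref{recu-eq3} (the $\gamma[\mathtt{C}]_0$ term vanishes) together with $[\mathtt{A}]_0=\phi_\mathtt{A}**f_\mathtt{A}$, $[\mathtt{B}]_0=\phi_\mathtt{B}**f_\mathtt{B}$, and using $f_\mathtt{A}=a_{m_\mathtt{A}}(t)\delta(x)$, $f_\mathtt{B}=b_{m_\mathtt{B}}(t)\delta(x-d_\mathtt{B})$, I would integrate out the internal space--time variable to obtain a \emph{bilinear} form
\begin{align}
\rho_{m_\mathtt{A},m_\mathtt{B}}\approx \lambda\int_0^T\!\!\int_0^T K(t_a,t_b)\,a_{m_\mathtt{A}}(t_a)\,b_{m_\mathtt{B}}(t_b)\,dt_a\,dt_b,
\end{align}
where $K(t_a,t_b)=\int_{\max(t_a,t_b)}^T\!\int_{\mathbb{R}}\phi_\mathtt{C}(d_R-x,T-t)\,\phi_\mathtt{A}(x,t-t_a)\,\phi_\mathtt{B}(x-d_\mathtt{B},t-t_b)\,dx\,dt$ is a nonnegative, bounded, continuous function of $(t_a,t_b)$.

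The crucial structural observation is that the error probability \eqref{eqnErrorProb1} depends on the waveforms only through the four scalars $\rho_{00},\rho_{01},\rho_{10},\rho_{11}$, and that for \emph{fixed} $b_0,b_1$ the waveform $a_0$ enters only through the pair $(\rho_{00},\rho_{01})$, while $a_1$ enters only through $(\rho_{10},\rho_{11})$. Concretely, setting $g^{(j)}(t_a)=\lambda\int_0^T K(t_a,t_b)\,b_j(t_b)\,dt_b$, we have $\rho_{0j}=\int_0^T a_0(t_a)\,g^{(j)}(t_a)\,dt_a$, so that $(\rho_{00},\rho_{01})=\int_0^T \big(g^{(0)}(t_a),g^{(1)}(t_a)\big)\,a_0(t_a)\,dt_a$ is a single point of $\mathbb{R}^2$ obtained by integrating a continuous vector-valued curve against the nonnegative measure $a_0$.

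Given an optimal $a_0^\star$ of total mass $m^\star=\int_0^T a_0^\star\le s_\mathtt{A}$, I would normalize $\mu=a_0^\star/m^\star$ to a probability measure, so that $\tfrac1{m^\star}(\rho_{00},\rho_{01})=\int(g^{(0)},g^{(1)})\,d\mu$ lies in the convex hull of the compact connected curve $\Gamma=\{(g^{(0)}(t),g^{(1)}(t)):t\in[0,T]\}\subset\mathbb{R}^2$. Because $\Gamma$ is connected, the Fenchel--Bunt strengthening of Carath\'eodory's theorem expresses this point as a convex combination of \emph{two} points of $\Gamma$, say $\Gamma(t_1)$ and $\Gamma(t_2)$ with weights $\theta,1-\theta$. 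Rescaling, the two-impulse waveform $a_0(t)=m^\star\theta\,\delta(t-t_1)+m^\star(1-\theta)\,\delta(t-t_2)$ has mass $m^\star\le s_\mathtt{A}$ and reproduces $(\rho_{00},\rho_{01})$ exactly, while leaving $\rho_{10},\rho_{11}$ untouched (they do not involve $a_0$); hence the error probability is unchanged. Applying the identical reduction in turn to $a_1$ (holding $b_0,b_1$ fixed), and then, with $a_0,a_1$ now fixed, to $b_0$ and $b_1$ via the symmetric functionals $h^{(i)}(t_b)=\lambda\int K(t_a,t_b)\,a_i(t_a)\,dt_a$, yields the claimed two-impulse structure for all four waveforms without increasing the error probability.

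I expect the main obstacle to be the regularity bookkeeping needed to legitimately invoke the two-point (rather than three-point) conclusion: one must verify that $K$, and hence the curves $t\mapsto(g^{(0)}(t),g^{(1)}(t))$ and $t\mapsto(h^{(0)}(t),h^{(1)}(t))$, are continuous on $[0,T]$, so that $\Gamma$ is a compact \emph{connected} set and $\mathrm{conv}(\Gamma)$ is closed (guaranteeing that the averaged point genuinely lies in $\mathrm{conv}(\Gamma)$ and that Fenchel--Bunt applies). The continuity as $t_a,t_b\to T$, where the heat kernels concentrate, is the delicate point; once this is established, the convexity argument and the sequential, decoupled replacement of the four waveforms are routine.
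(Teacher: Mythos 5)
Your proposal is correct and follows essentially the same route as the paper: the same bilinear expression for $\rho_{m_\mathtt{A},m_\mathtt{B}}$, the same observation that $(\rho_{00},\rho_{01},\rho_{10},\rho_{11})$ is a sufficient statistic with each waveform entering only two of the four values linearly, and the same one-waveform-at-a-time support reduction. The only cosmetic difference is that the paper packages the reduction as a Support Lemma proved by a polytope-vertex (LP) argument with an extra inequality letting the total mass decrease, whereas you normalize to a probability measure and invoke Fenchel--Bunt directly on the connected curve $\Gamma\subset\mathbb{R}^2$ to preserve the mass exactly --- the paper itself attributes its lemma to Fenchel--Bunt, so these are the same underlying fact.
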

\begin{proof}
By substituting $f_\mathtt{A}^{i}(x,t),f_\mathtt{B}^{i}(x,t), i=0,1$ in \eqref{ex1:A0,B0} and \eqref{recu-eq1}-\eqref{recu-eq3}, one obtains an explicit formula for (the first-order approximation of) the concentration of molecules $\mathtt{C}$ in terms of $f_\mathtt{A}^{i}(x,t),f_\mathtt{B}^{i}(x,t), i=0,1$. In particular, the concentration $[\mathtt{C}](d_{R},T)$ at receiver's location at the sampling time $T$ is as follows:
\begin{align}
[\mathtt{C}](d_
{R},T)=\lambda\int_{0}^{T}
\int_{-\infty}^{+\infty}\int_{0}^{t'}
\int_{0}^{t'}&
\phi_{\mathtt{C}}(d_R-x',T-t')
\phi_{\mathtt{A}}(x',t'-t_1)
\phi_{\mathtt{B}}(x'-d_{\mathtt{B}},t'-t_2)\nonumber\\&
a_{M_{\mathtt{A}}}(t_1)
b_{M_{\mathtt{B}}}(t_2)
dt_{1}dt_{2}dx'dt'.
\end{align}
Observe that 
$[\mathtt{C}](d_
{R},T)$ has a \emph{bilinear} form with respect to $a_{M_{\mathtt{A}}}(t)$ and $b_{M_{\mathtt{B}}}(t)$ . In the other words, if $a_{M_{\mathtt{A}}}(t)$  is kept fixed then $[\mathtt{C}](d_
{R},T)$ is linear with respect to $b_{M_{\mathtt{B}}}(t)$ and vice versa.

Let
\begin{equation}\label{densities::ex1:final}
\rho=\left\{ {\begin{array}{*{20}{llll}}
	\rho_{00}=[\mathtt{C}](d_{R},T)& & \text{if}&(M_{\mathtt{A}},M_{\mathtt{B}})=(0,0)\\
	\rho_{01}=[\mathtt{C}](d_{R},T)& & \text{if}&(M_{\mathtt{A}},M_{\mathtt{B}})=(0,1)\\
	\rho_{10}=[\mathtt{C}](d_{R},T)& & \text{if}&(M_{\mathtt{A}},M_{\mathtt{B}})=(1,0)\\	\rho_{11}=[\mathtt{C}](d_{R},T)& & \text{if}&(M_{\mathtt{A}},M_{\mathtt{B}})=(1,1)
	\\
	\end{array}}\right.
\end{equation}
Using \eqref{densities::ex1:final} we can compute $
(\rho_{00},\rho_{01},\rho_{10},\rho_{11})$ for any given $a_{0}(t), a_{1}(t), b_0(t)$ and $b_1(t)$. The receiver aims to recover $\hat m_{\mathtt{A}}, \hat m_{\mathtt{B}}$ using a sample $\mathsf{Poisson}(V\rho_{m_{\mathtt{A}}, m_{\mathtt{B}}})$. Minimizing the error probability (given in \eqref{eqnErrorProb1})  is equivalent to solving a Poisson hypothesis testing problem (see Appendix \ref{AppB} for a review). Our goal is to design nonnegative signals $a_{i}(t), b_{j}(t)$ 	satisfying \eqref{eqn:New6}
to minimize the error probability of this Poisson hypothesis testing problem. 

 Note that the tuple $
(\rho_{00},\rho_{01},\rho_{10},\rho_{11})$ forms a \emph{sufficient statistic} for computing the error probability. In other words, if we change the signals $a_{i}(t), b_{j}(t)$ in such a way that the four numbers $
(\rho_{00},\rho_{01},\rho_{10},\rho_{11})$ are unchanged, the error probability remains unchanged. 
Take  signals $a_i(t)$ and $b_j(t)$ for $i,j\in\{0,1\}$ that minimize the error probability. We claim that we can find $\hat a_0(t)$ of the form 
$\hat a_0(t)=\hat{a}_{01}\delta(t-t_1^{[a_0]})+\hat{a}_{02}\delta(t-t_2^{[a_0]})$
for some $\hat{a}_{01}$ and $\hat{a}_{02}$
such that if we replace the $(a_0(t), a_1(t), b_0(t), b_1(t))$ by $(\hat a_0(t), a_1(t), b_0(t), b_1(t)),$ the corresponding values of  $
(\rho_{00},\rho_{01},\rho_{10},\rho_{11})$ remain unchanged. Moreover, $\hat a_0(t)$ satisfies the power constraint \eqref{eqn:New6}. Since the error probability depends on the waveforms only through the values of
$
(\rho_{00},\rho_{01},\rho_{10},\rho_{11})$, we deduce that replacing $a_0(t)$ with $\hat a_0(t)$ does not change the error probability. Since signals $a_i(t)$ and $b_j(t)$ for $i,j\in\{0,1\}$ minimized the error probability, we deduce that using $\hat a_0(t)$ would also achieve the minimum possible probability of error. 
By a similar argument we can reduce the support of waveforms $a_1(t)$, $b_0(t)$, and $b_1(t)$ one by one while fixing the other waveforms. This completes the proof.

It remains to show that changing $a_0(t)$ by  $\hat{a}_0(t)$ while preserving the values of $
(\rho_{00},\rho_{01},\rho_{10},\rho_{11})$ is possible. 
Replacing $a_0(t)$ by  $\hat{a}_0(t)$ while fixing $a_1(t)$, $b_0(t)$, and $b_1(t)$ may only change $\rho_{00}$ and $\rho_{01}$. The values of $\rho_{10}$ and $\rho_{11}$ are only functions of $a_1(t)$, $b_0(t)$, and $b_1(t)$ which we are fixing. Moreover, $\rho_{00}$ and $\rho_{01}$ are linear functions of $a_0(t)$ when we fix $b_0(t)$ and $b_1(t)$. 
In other words, one can find functions $f_1(t)$ and $f_2(t)$ such that
\begin{equation}
    \rho_{00}=\int_{t\in[0,T]}a_0(t)f_1(t)dt,\qquad \rho_{01}=\int_{t\in[0,T]}a_0(t)f_2(t)dt.\label{eqnDeff1}
    \end{equation}

The power constraint \eqref{eqn:New6} is also a linear constraint on $a_0(t)$. We would like to change $a_0(t)$ to $\hat a_0(t)$ in such a way that two linear constraints corresponding to $\rho_{00}$ and $\rho_{01}$ are preserved, and the power of $\hat a_0(t)$ is less than or equal to the power of  $a_0(t)$. Existence of $\hat{a}_0(t)$ with these  properties follows from Lemma \ref{SupportLemma} given in Appendix \ref{AppC}  (with the choice of $f(t)=1$, $n=2$, and $f_1(t)$ and $f_2(t)$ given in \eqref{eqnDeff1}), once we view $a_0(t)$ as an unnormalized probability distribution.

\end{proof}
In order to determine parameters in the statement of Theorem \ref{th::1} we need to optimize over constants $\hat{a}_{ij}$, $\hat{b}_{ij}$, $t_j^{[a_i]},t_j^{[b_i]}$.
 From the first-order equation, 
 $\rho_{ij}$ (the concentration when the first transmitter sends bit $i\in\{0,1\}$ and the second transmitter sends the bit $j\in\{0,1\}$) equals
\begin{equation}
 \begin{array}{*{20}{l}} 
	\rho_{00}=\sum_{i,j=1}^{2}\hat{a}_{0i}\hat{b}_{0j}G(t_{i}^{[a_0]},t_{j}^{[b_0]})
	\\
	\rho_{01}=\sum_{i,j=1}^{2}\hat{a}_{0i}\hat{b}_{1j}G(t_{i}^{[a_0]},t_{j}^{[b_1]})\\
	\rho_{10}=\sum_{i,j=1}^{2}\hat{a}_{1i}\hat{b}_{0j}G(t_{i}^{[a_1]},t_{j}^{[b_0]})\\
	\rho_{11}=\sum_{i,j=1}^{2}\hat{a}_{1i}\hat{b}_{1j}G(t_{i}^{[a_1]},t_{j}^{[b_1]}),
	\end{array}
\end{equation}
where
\begin{equation}
G(t_i,t_j)=\lambda\phi_{\mathtt{C}}(x,t)**\big(\phi_{\mathtt{A}}(x,t-t_i) 
\phi_{\mathtt{B}}(x-d_{\mathtt{B}},t-t_j)\big)
|_{t=t_s,x=d_R}.
\end{equation}
To recover the messages of the  transmitters, the receiver needs to solve a Poisson hypothesis testing problem (see Appendix \ref{AppB}) with  four  hypotheses: $\rho=V\rho_{00}$ or $\rho=V\rho_{01}$ or $\rho=V\rho_{10}$ or $\rho=V\rho_{11}$ for a constant $V$. The error probability of this hypothesis testing problem should be minimized with respect to $\hat{a}_{ij}$, $\hat{b}_{ij}$, $t_j^{[a_i]},t_j^{[b_i]}$.

As an example, consider  the following parameters in Table\ref{tabj}:
\begin{table}[H]
  	\centering
  	\caption{Parameters}
	 \label{tabj}
\begin{tabular}{ |c|c|c| } 

	\hline
	\multirow{2}{*}{$(D_{\mathtt{A}},D_{\mathtt{B}},D_{\mathtt{C}})[m^{2}s^{-1}]$}   &\multirow{2}{*} {$10^{-9}\times(1,1,1)$}\\
	&\\
	\hline 
	\multirow{2}{*} {$\lambda[molecules^{-1}.m^{3}.s^{-1}]$}   & \multirow{2}{*}{$ 10^{-23}$}\\
	&\\
	\hline
	\multirow{2}{*}{$V[m^{3}]$}   &\multirow{2}{*} {$10^{-11}$}\\
	&\\ 
	\hline
	\multirow{2}{*}{$T[s]$ }& \multirow{2}{*}{$3$}  \\ 
	&\\
	\hline
	
	\multirow{2}{*}{$d_{\mathtt{B}} [m]$} & \multirow{2}{*}{$5\times 10^{-5}\times(2,0,0) $ } \\
	&\\ 
	\hline
	\multirow{2}{*}{$d_{R}[m]$} & \multirow{2}{*}{$5\times 10^{-5}\times(1,0,0)$} \\
	&\\
		\hline
	\multirow{2}{*}{$s_{\mathtt{A}}[molecules.m^{-3}],s_{\mathtt{B}}[molecules.m^{-3}])$} & \multirow{2}{*}{$10^7\times(1,1)$} \\
	&\\
	\hline 
\end{tabular}
\end{table}
Simulation results yield the optimal values for $\hat{a}_{ij}$, $\hat{b}_{ij}$, $t_j^{[a_i]},t_j^{[b_i]}$ as in Table\ref{tab78}
 \begin{table}[H]
  	\centering
  	\caption{Values of unknown parameters.}
  		\label{tab78}
	\begin{tabular}{ |c|c|c| }  
		\hline
		\multirow{2}{*}{$(\hat{a}_{01},\hat{a}_{02})$}   &\multirow{2}{*} {$(1.13\times 10^6,8.82\times 10^6)$}\\
		&\\
		\hline 
		\multirow{2}{*} {$(\hat{a}_{11},\hat{a}_{12})$}   & \multirow{2}{*}{$ (2.17\times 10^6,1.07\times 10^6)$}\\
		&\\
		\hline
		\multirow{2}{*}{$(\hat{b}_{01},\hat{b}_{02})$}   &\multirow{2}{*} {$(0.97\times 10^6,3.2\times 10^6)$}\\
		&\\ 
		\hline
		\multirow{2}{*}{$(\hat{b}_{11},\hat{b}_{12})$ }& \multirow{2}{*}{$(3.52\times 10^6,6.46\times 10^6)$}  \\ 
		&\\
		\hline
		\multirow{2}{*}{$(t_{1}^{[a_0]},t_{2}^{[a_0]})$} & \multirow{2}{*}{$(0,2) $ } \\
		&\\
		\hline
		\multirow{2}{*}{$(t_{1}^{[a_1]},t_{2}^{[a_1]})$} & \multirow{2}{*}{$(0,1) $ } \\
		&\\ 
		\hline
		\multirow{2}{*}{$(t_{1}^{[b_0]},t_{2}^{[b_0]})$} & \multirow{2}{*}{$(1,2)$} \\
		&\\ 
		\hline
		\multirow{2}{*}{$t_{1}^{[b_1]},t_{2}^{[b_1]}$} & \multirow{2}{*}{$(1,2)$} \\
		&\\ 
		\hline
		\multirow{2}{*}{$p_{e}$} & \multirow{2}{*}{$2.6\times 10^{-5}$} \\
		&\\
		\hline 
	\end{tabular}
\end{table}


 Figs. \ref{probabilty1} and  \ref{probabilty2} show the optimal error probability (as defined in \eqref{eqnErrorProb1}) as a function of $s=s_{\mathtt{A}}=s_{\mathtt{B}}$ for the above parameters.

 Fig. \ref{pulsevsdelta} compares  the error probabilities of optimal waveforms (sum of two delta functions) with a pulse release pattern, \emph{i.e.,} when we consider four waveforms $(f_{\mathtt{A}}^{0}(t)=a_{0}\mathbbm{1}_{0\leq t\leq T},f_{\mathtt{A}}^{1}(t)=a_{1}\mathbbm{1}_{0\leq t\leq T},f_{\mathtt{B}}^{0}(t)=b_{0}\mathbbm{1}_{0\leq t\leq T},f_{\mathtt{B}}^{1}(t)=b_{1}\mathbbm{1}_{0\leq t\leq T})$ as input signals to reaction diffusion equations. The error probability of the pulse waveforms is obtained by minimizing the error probability over pulse amplitudes $a_0, a_1\leq s_{\mathtt{A}}/T$ and $b_0, b_1\leq s_{\mathtt{B}}/T$.

\begin{figure}[H]
	\centering
		\includegraphics[scale=.75]{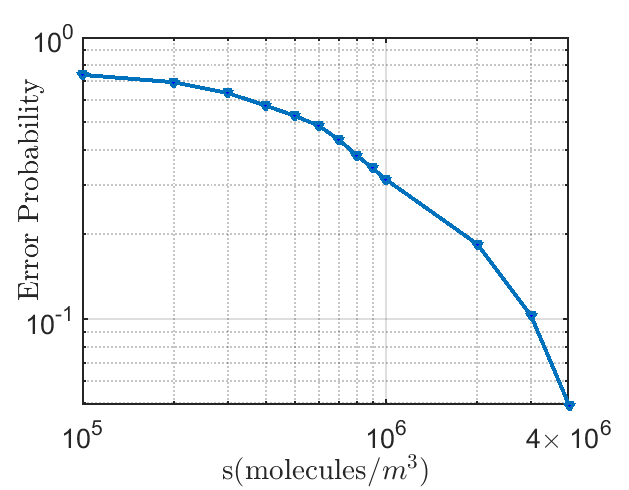}
		\caption{Error probability of the optimal waveforms for
		$10^5\leq s \leq 4\times 10^6$ [molecules.$m^{-3}$].}
		\label{probabilty1}
 		\vspace{-3em}
	\end{figure}
	\begin{figure}[H]
		\centering
	\includegraphics[scale=.75]{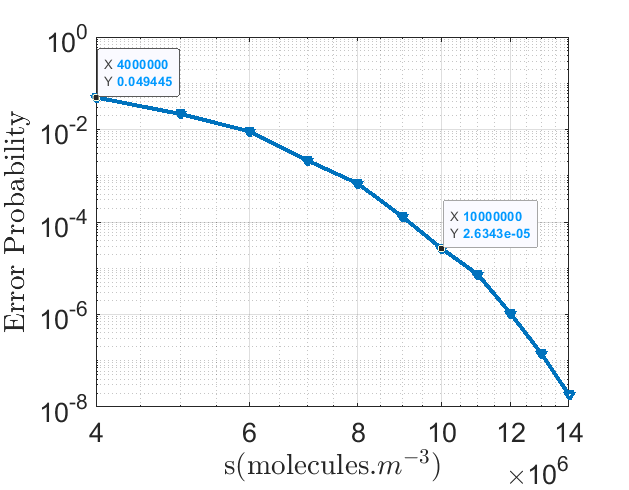}
		\caption{Error Probability of the optimal waveforms
		for
		$4\times 10^6\leq s \leq 1.4\times 10^7$
[molecules.$m^{-3}$].}
		\label{probabilty2}
\end{figure}

\begin{figure}[H]
	\centering

		\includegraphics[trim={1cm 0cm 0cm 0cm}, scale=.75]{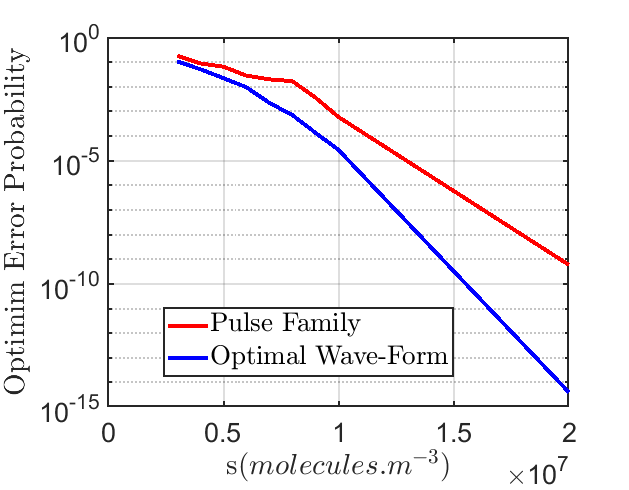}
		\caption{Comparison between optimal wave-form and pulse family.}
		\label{pulsevsdelta}
	
		\vspace{-3em}
	\end{figure}
	
	\begin{figure}[H]
 		\centering
 	\includegraphics[trim={1cm 0cm 0cm 0cm}, scale=.75]{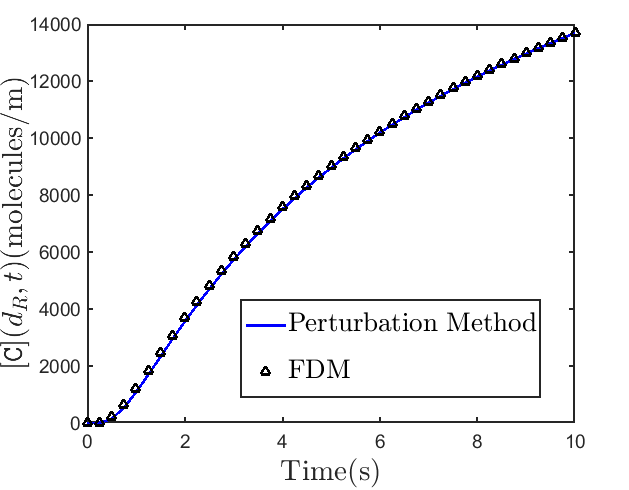}
 	\caption{First-order approximation for the first set parameters }
 	\label{fig:1e-22,unequal,Nx=64,first}
\end{figure}

 \section{Simulation}
 \label{sec:Val}
 In this section, we validate our method through numerical simulation. We need to specify the diffusion coefficients, reaction rate $\lambda$, input signals $f_{\mathtt A}$ and $f_{\mathtt B}$ as well as the locations of the transmitters. Previous works in the literature commonly take the number of released molecules to be either in the order of $10^{-14}$ moles (which is $10^9$ molecules) \cite{arjmandi2019diffusive,farahnak2018medium:j:16,cao2019chemical:j:ro}, or in the order of $10^{-3}$  moles \cite{farsad2016molecular:F:16}. The reaction rates in these works are generally in the order of $1$ to $10^8$ $\text{mole}^{-1}.m^{3}.s^{-1}$ or equivalently $10^{-23}$ to $10^{-15}$ $\text{molecules}^{-1}.m^{3}.s^{-1}$ \cite{cao2019chemical:j:ro,farsad2016molecular:F:16,bi2019chemical:j:9,chang2005physical}.
 
 We run three simulations for three choices of parameters for the reaction given in \eqref{ex1:reaction}. The exact solutions are obtained by the finite difference method (FDM). The first simulation borrows its system parameters from \cite{cao2019chemical:j:ro} as in Table \ref{tab1}:
 \begin{table}
	\centering
	\caption{First set of parameters.}
	\label{tab1}
	\begin{tabular}{ |c|c|c| }  
		\hline
		\multirow{2}{*}{$(D_{\mathtt{A}},D_{\mathtt{B}},D_{\mathtt{C}})[m^{2}s^{-1}]$}   &\multirow{2}{*} {$10^{-10}\times(10,7,1)$}\\
		&\\
		\hline 
		\multirow{2}{*} {$\lambda[molecules^{-1}.m.s^{-1}]$}   & \multirow{2}{*}{$ 10^{-22}$}\\
			&\\
		\hline 
		\multirow{2}{*} {$\gamma[s^{-1}]$}   & \multirow{2}{*}{$0$}\\
		&\\
		\hline
		\multirow{2}{*}{$f_{\mathtt{A}}(x,t)$}   &\multirow{2}{*} {$5\times 10^{8}\delta(x)\delta(t)$}\\
		&\\ 
		\hline
		\multirow{2}{*}{$f_{\mathtt{B}}(x,t)$ }& \multirow{2}{*}{$2.4\times 10^{9}\delta(x-d_{\mathtt{B}})\delta(t)$}  \\ 
		&\\
		\hline
		\multirow{2}{*}{$d_{\mathtt{A}} [m]$} & \multirow{2}{*}{$0 $ } \\
		&\\
		\hline
		\multirow{2}{*}{$d_{\mathtt{B}} [m]$} & \multirow{2}{*}{$ 10^{-4} $ } \\
		&\\ 
		\hline
		\multirow{2}{*}{$d_{R}[m]$} & \multirow{2}{*}{$5\times 10^{-5}$} \\
		&\\
		\hline 
	\end{tabular}
\end{table}

In the second simulation we use the same parameters as in the first simulation, but make the three diffusion constants be the same $(D_{\mathtt{A}}=D_{\mathtt{B}}=D_{\mathtt{C}}=10^{-9}[m^2/s])$.
 Figs. \ref{fig:1e-22,unequal,Nx=64,first} and \ref{fig:1e-22,equal,Nx=120,first} are plotted based on the first and second set of system parameters respectively. They depict the concentration of the molecule of type $\mathtt{C}$ at the receiver's location calculated by the first-order term of perturbation method along with the true solution (computed via FDM).
 It can be seen that for these system parameters, the first-order term of the perturbation series is closely following the true solution.

 \begin{figure}[H] 
 	
 	\centering
 \includegraphics[trim={1cm 0cm 0cm 0cm}, scale=.75]{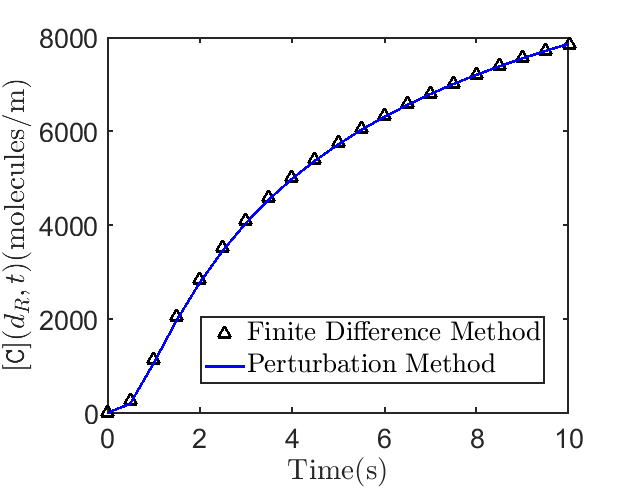}
 \caption{First-order approximation for the second set parameters}
 \label{fig:1e-22,equal,Nx=120,first}
 \vspace{-3em}
 \end{figure}

 \begin{figure}[H]
		\centering
		\includegraphics[trim={1cm 0cm 0cm 0cm}, scale=.75]{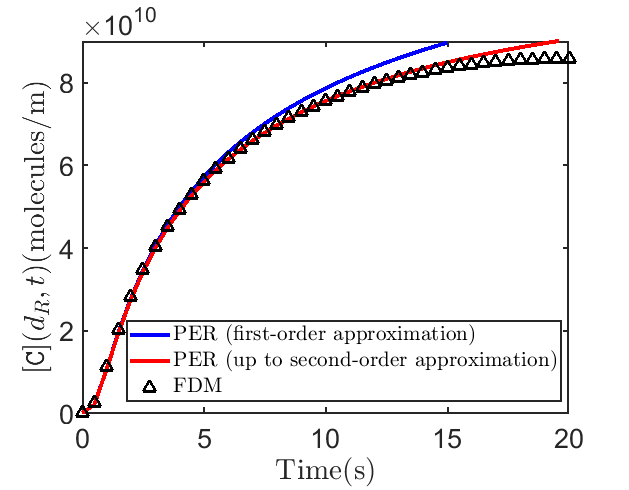}
		\caption{First-order and second-order approximate solutions for $\lambda=10^{-15}$.}
		\label{fig:1e-15,equal,Nx=120,first,T=15}
 \end{figure}

  Using the same parameters as in the second simulation, one can observe that the first-order term of perturbation series is still accurate as long as $\lambda\leq 10^{-16} [\text{molecules}^{-1}.m.s^{-1}]$.  Fig. \ref{fig:1e-15,equal,Nx=120,first,T=15} plots the curves for $\lambda=10^{-15}$. As one can observe while the first-order term of  the perturbation series does not track the exact solution for $t>5s$, while the second-order term tracks the true solution in $t<15s$.
 For $t>15s$, the perturbation solution up to the second-order term does not track the exact solution well, hence to obtain an accurate solution we have to consider third or higher-order terms in the perturbation solution.  This observation is consistent with the theoretical analysis given in Appendix \ref{AppA0 }, where the convergence rate of perturbation solution depends on the length of the time interval $T$. We also remark that in many prior works on chemical reactions in MC, the observation time $T$ is  rather small:   \cite{arjmandi2019diffusive} uses $T\leq 0.1 s$,  \cite{bi2019chemical:j:9} uses $T\leq 4 s$;  \cite{cao2019chemical:j:ro} uses $T\leq 10 s$, while in other cases  $T$ is in taken of order $\mu s$ \cite{farahnak2018medium:j:16,mosayebi2017type:F:21}.

Fig. \ref{relative1} fixes $T=10[s]$ and plots the relative error of the first-order approximation solution of $[\mathtt{C}]$ for the second set of parameters in terms of $\lambda$.  Fig. \ref{relative21} draws a similar curve for the second-order approximation of the perturbation solution. 
These figures confirm (the theoretical result) that for sufficiently small reaction rates,  the relative error is negligible.  For instance, to have a relative error of at most $0.01$, the first-order solution can be used for
 $\lambda<3\times10^{-16}$ (Fig. \ref{relative1}) while the second-order solution is valid until $\lambda<3\times10^{-15}$ (Fig. \ref{relative21}).

Next, let us fix the relative error to be $0.05$ and define the permissible time interval as 
\begin{align}
T_{\max}=\max\left\{T:\left|\frac{[\mathtt{C}]_{FDM}(t)-[\mathtt{C}]_{PER-Order-1}(t)}{[\mathtt{C}]_{FDM}(t)}\right|\leq .05, \qquad \forall t\in [0,T]\right\}.\label{TmaxDef}\end{align}
In other words, $T_{\max}$ is the maximum simulation time interval for which the relative error is less than 5 percent.
 Fig. \ref{maxt2} shows the $T_{\max}$ for the first-order approximation of $[\mathtt{C}]$ in terms of $\lambda$.

\begin{figure}[H]
		\centering
		\includegraphics[trim={1cm 0cm 0cm 0cm}, scale=0.75]{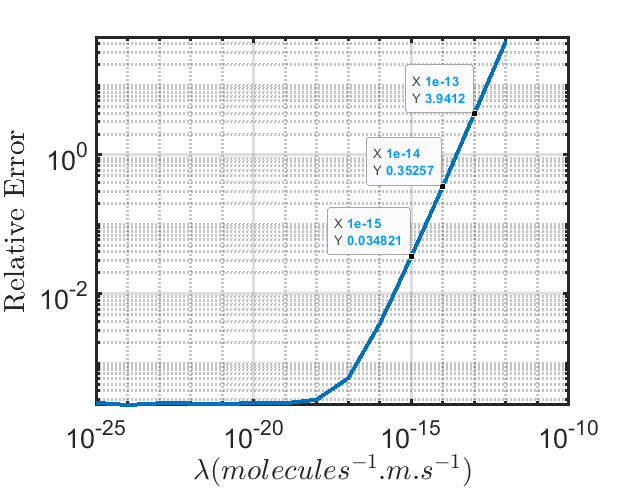}
		\caption{Relative error for  the first-order approximation.}
		\label{relative1}
		\vspace{-3em}
	\end{figure}
	\begin{figure}[H]
		\centering
		\includegraphics[trim={1cm 0cm 0cm 0cm}, scale=0.75]{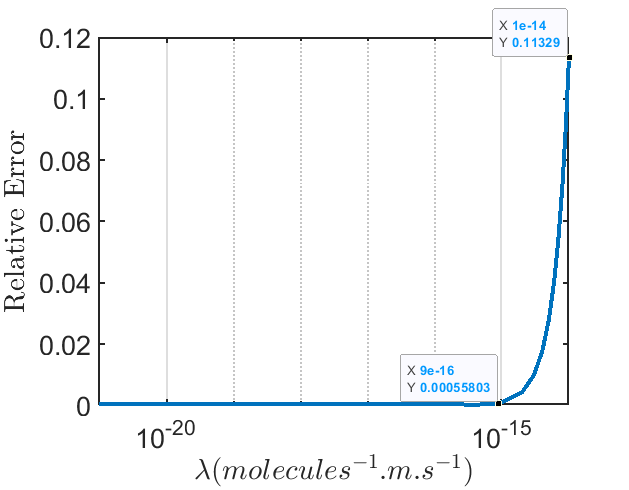}
		\caption{Relative error for up to second-order approximation.}
		\label{relative21}
\end{figure}

	\begin{figure}[H]
		\centering
		\includegraphics[trim={1cm 0cm 0cm 0cm}, scale=0.75]{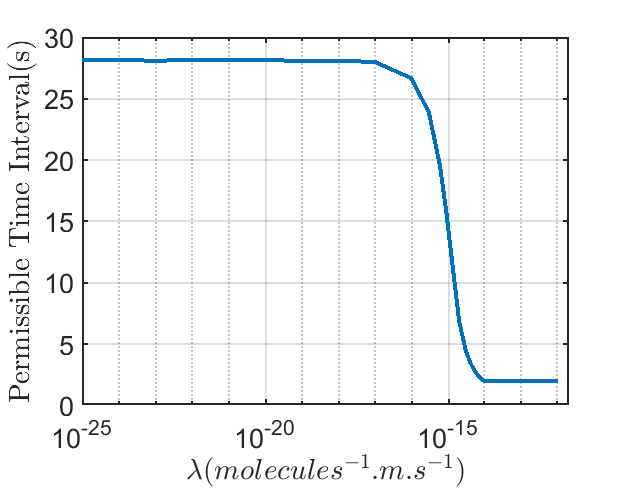}
		\caption{Permissible simulation time interval in terms of $\lambda$.}
		\label{maxt2}
		\vspace{-3em}
	\end{figure}
	\begin{figure}[H]
		\centering
		\includegraphics[trim={1cm 0cm 0cm 0cm}, scale=0.75]{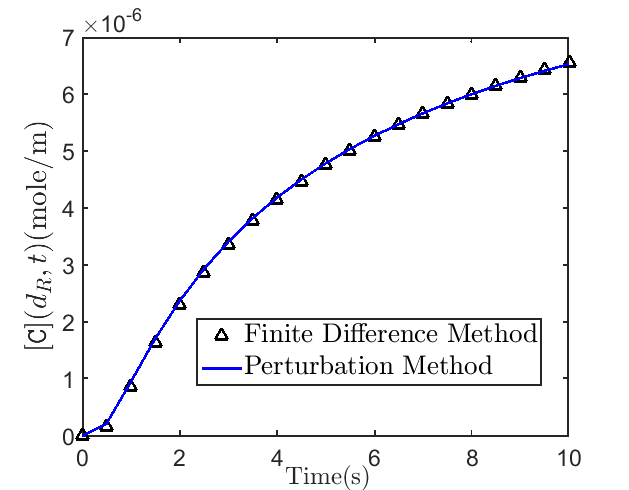}
		\caption{First-order approximation curve for the third set parameters.}
		\label{fig:.1,equal,Na=1micromole,first}
\end{figure}
In some applications, a significantly larger number of molecules (in the order of moles) are released into the medium. In the third simulation, we consider this case and take the following parameters in Table \ref{tab2}: 
\begin{table}
	\centering
	\caption{Third set of parameters}
	\label{tab2}
	\begin{tabular}{ |c|c|c| }  
		\hline
		\multirow{2}{*}{$(D_{\mathtt{A}},D_{\mathtt{B}},D_{\mathtt{C}})[m^{2}.s^{-1}]$}   &\multirow{2}{*} {$10^{-9}\times(1,1,1)$}\\
		&\\
		\hline 
		\multirow{2}{*} {$\lambda[molecules^{-1}.m.s^{-1}]$}   & \multirow{2}{*}{$ 10^{-24}/6.02214$}\\
		&\\
		\hline 
		\multirow{2}{*} {$\gamma[s^{-1}]$}   & \multirow{2}{*}{$0$}\\
		&\\
		\hline
		\multirow{2}{*}{$f_{\mathtt{A}}(x,t)$}   &\multirow{2}{*} {$6.02214\times 10^{17}\delta(x)\delta(t)$}\\
		&\\ 
		\hline
		\multirow{2}{*}{$f_{\mathtt{B}}(x,t)$ }& \multirow{2}{*}{$6.02214\times 10^{17}\delta(x-d_{\mathtt{B}})\delta(t)$}  \\ 
		&\\
		\hline
		\multirow{2}{*}{$d_{\mathtt{A}} [m]$} & \multirow{2}{*}{$0 $ } \\
		&\\
		\hline
		\multirow{2}{*}{$d_{\mathtt{B}} [m]$} & \multirow{2}{*}{$ 10^{-4} $ } \\
		&\\ 
		\hline
		\multirow{2}{*}{$d_{R}[m]$} & \multirow{2}{*}{$5\times 10^{-5}$} \\
		&\\
		\hline 
	\end{tabular}
\end{table}

 Fig. \ref{fig:.1,equal,Na=1micromole,first} shows that the first term of the perturbation series is close to the exact solution.

 
	


\textbf{The effect of dimension:} 
While we assumed the transmitter and receivers to lie on a one-dimensional line, the problem can be solved in higher dimensions in a similar manner. In Figs. \ref{dim2} and \ref{dim3}, we compare the perturbation method (up to the first-order approximation) and the true solution in two and three dimensions. The parameters are the same as the second set of parameters with $\lambda[{molecules}^{-1}.m^{3}.s^{-1}]=10^{-23}$.  As we see our solution is still accurate. We observe that as the dimension of the medium increases, concentrations decay faster. This is due to the extra degrees of freedom in the dispersion of molecules.  
\begin{figure}[H]
		\centering
		\includegraphics[trim={1cm 0cm 0cm 0cm}, scale=.75]{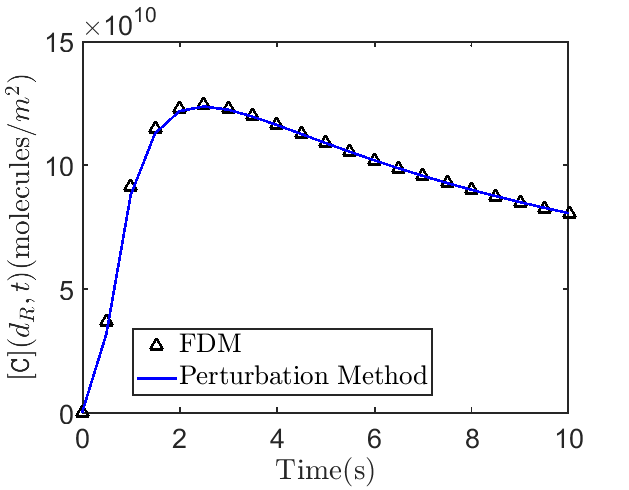}
		\caption{2d medium, first-order approximation}
		\label{dim2}
		\vspace{-3em}
	\end{figure}
	\begin{figure}[H]
		\centering
		\includegraphics[trim={1cm 0cm 0cm 0cm}, scale=0.75]{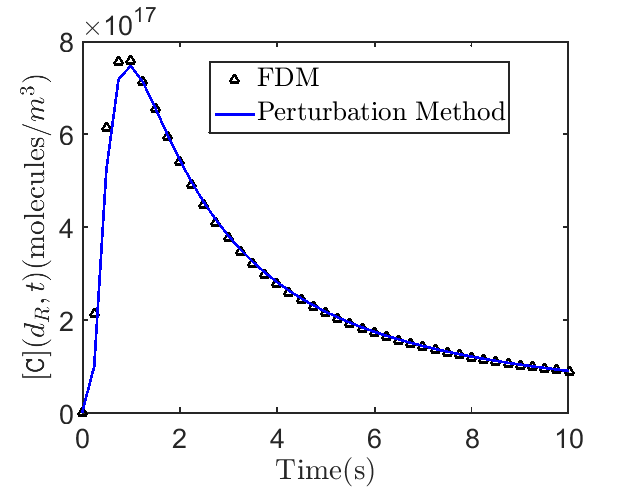}
		\caption{3d medium, first-order approximation}
		\label{dim3}
\end{figure}

\textbf{The effect of diffusion coefficients:}
In  Fig. \ref{DconC} we plot  $[\mathtt{C}]$ for different values of $D_\mathtt{C}$, for  the second set of parameters. Observe that $[\mathtt{C}]$ is a decreasing function of $D_\mathtt{C}$. Fig. \ref{maxt} plots $T_{\max}$ (as defined in \eqref{TmaxDef}) versus  $D_\mathtt{C}$ for the same parameters, showing that increasing  $D_\mathtt{C}$ decreases $T_{\max}$.

\begin{figure}[H]
		\centering
		\includegraphics[trim={1cm 0cm 0cm 0cm}, scale=0.75]{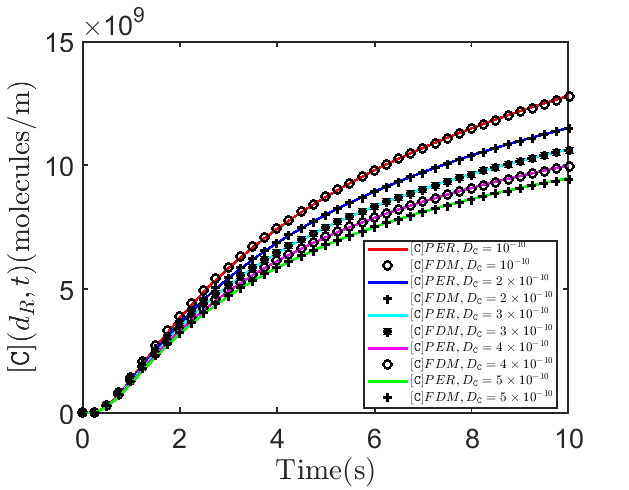}
		\caption{Effect of $D_\mathtt{C}$ on $[\mathtt{C}]$ }
		\label{DconC}
		\vspace{-3em}
	\end{figure}
	\begin{figure}[H]
		\centering
		\includegraphics[trim={1cm 0cm 0cm 0cm}, scale=0.75]{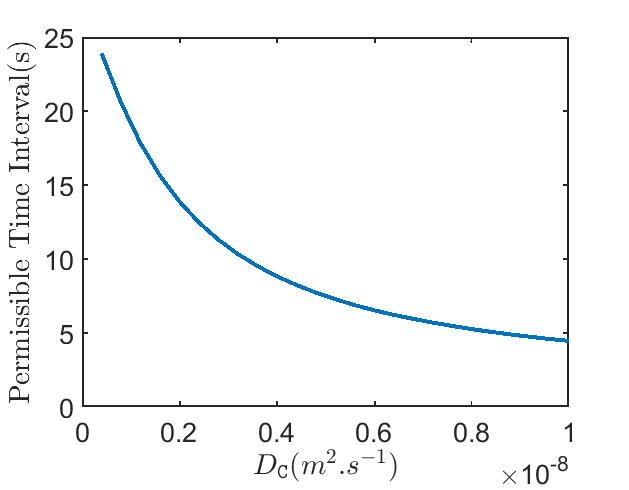}
		\caption{Impact of diffusion coefficient on the permissible simulation time interval}
		\label{maxt}
\end{figure}

\section{Conclusion and Future Work}\label{Conclusion and Future Work}
We addressed the difficulty of designing modulation due to the lack of existence of closed-form solutions for the reaction-diffusion equation by providing an approximate solution to the reaction-diffusion equations. We observed that for many choices of system parameters our solution is accurate. Also, 
 we observed that the accuracy of our solution depends on time observation  $T$. Using the proposed model, 
we designed optimal waveforms for a multiple-access setting. In the Appendix\ref{ex2,3} of this paper two more examples are considered (Example II and Example III).
 In Example II,  we consider a communication system with one transmitter and one receiver. The transmitter can send molecules of type $\mathtt{A}$ or $\mathtt{B}$, while the receiver can only measure molecules of type $\mathtt{A}$. The  medium has  molecules of a different type $\mathtt{C}$ that can react with both $\mathtt{A}$ and $\mathtt{B}$. The transmitter releases molecules of type $\mathtt{B}$ in order to ``clean up" the medium of molecules of type $\mathtt{C}$ so that molecules of type $\mathtt{A}$ can reach the receiver without being dissolved through reaction with $\mathtt{C}$ as they travel from the transmitter to the receiver. In Example III, we consider a two-way communication system with two transceiver nodes. In other words, each node aims to both send and receiver information from the other node. Transmitters use different molecule types and these molecules can react with each other. The chemical reaction in this scenario is destructive because it reduces the concentration of molecules at the two transceivers.

Many other settings could be studied and left for future work. Firstly, we only considered transparent receivers in this work. The literature on MC also considers absorbing or ligand/reactive receivers. Since the perturbation method is broadly applicable to non-linear differential equations,  the proposed framework can be also applied to absorbing or reactive receivers with more work. An absorbing receiver adds a boundary condition to the reaction-diffusion equations (the concentration of absorbed molecules being zero at the receiver's location). On the other hand, the reactive receiver adds an ordinary differential equation to the system's equations, for receptors on the surface of the receiver. 
 Further, one could study the design of optimal waveforms when taking multiple samples in the transmission time slot (instead of just one sample, as in this paper) or study optimal waveform design for channels with ISI. These studies are possible and left as future work. Finally, there are other existing approaches in the literature for obtaining  semianalytic solutions of chemical reaction-diffusion equations. Studying these approaches which might have a higher radius of convergence is also left as future work.

\appendices
\section{Convergence Analysis}\label{AppA0 }

To show that the power series $\sum_{i=0}^{\infty}\lambda^{i}
[\mathtt{A}]_{i}(x,t)$, $\sum_{i=0}^{\infty}\lambda^{i}
[\mathtt{B}]_{i}(x,t)$ and $\sum_{i=0}^{\infty}\lambda^{i}
[\mathtt{C}]_{i}(x,t)$ are convergent to functions that satisfy the original reaction-diffusion differential equation, 
we consider time $t\in[0,T]$ for some fixed $T$ and prove that 
\begin{align}\label{eqnNewN14}
&\sum_{i=0}^{\infty}\lambda^{i}
[\mathtt{A}]_{i}(x,t),\quad \sum_{i=0}^{\infty}\lambda^{i}
\frac{\partial}{\partial x} [\mathtt{A}]_{i}(x,t),\quad
\sum_{i=0}^{\infty}\lambda^{i}
\frac{\partial^2}{\partial x^2}[\mathtt{A}]_{i}(x,t),\quad\sum_{i=0}^{\infty}\lambda^{i}
\frac{\partial}{\partial t}[\mathtt{A}]_{i}(x,t)
\end{align}
uniformly converge over all $\lambda, x, t$. Similarly, we prove that the corresponding power series for molecules of types $\mathtt B$ and $\mathtt C$ also uniformly converge.

Uniform convergence of the series given in equations \eqref{eqnNewN14}
follows from Lemma \ref{LemmaBounds} given below. For a function $f_{\mathtt{A}}(x,t)$ defined for $x\in\mathbb{R}$ and $t\in[0,T]$, we define
\begin{align}&\lVert f_{\mathtt{A}}\lVert_{\infty}=\sup_{x,t\in[0,T]}|f_{\mathtt{A}}(x,t)|,
\lVert f_{\mathtt{A}}(x,0)\lVert_{\infty}=\sup_{x}|f_{\mathtt{A}}(x,0)|,
\lVert f_{\mathtt{A}}(x,t)\lVert_{1}=\int_{0}^{T}\int_{x}|f_{\mathtt{A}}(x,t)| dx dt.
\end{align}
\begin{lem}\label{LemmaBounds}
Let
\begin{align}
&M_0=\max\{T\lVert f_{\mathtt{A}}\lVert_{\infty},T\lVert f_{\mathtt{B}}\lVert_{\infty}\},
N_0=\max\big\{T\lVert \frac{\partial f_{\mathtt{A}}}{\partial t}\lVert_{\infty}+\lVert f_{\mathtt{A}}(x,0)\lVert_{\infty},
 T\lVert \frac{\partial f_{\mathtt{B}}}{\partial t}\lVert_{\infty}+\lVert f_{\mathtt{B}}(x,0)\lVert_{\infty}\big\},\nonumber\\
&G_0=\max\{\sqrt{\frac{4T}{\pi D_\mathtt{A}}}\lVert f_{\mathtt{A}}\lVert_{\infty},\sqrt{\frac{4T}{\pi D_\mathtt{B}}}\lVert f_{\mathtt{B}}\lVert_{\infty}\},~
H_0=\max\{\sqrt{\frac{4T}{\pi D_\mathtt{A}}}\lVert \frac{\partial f_{\mathtt{A}}}{\partial x}\lVert_{\infty},\sqrt{\frac{4T}{\pi D_\mathtt{B}}}\lVert \frac{\partial f_{\mathtt{B}}}{\partial x}\lVert_{\infty}\},\nonumber\\
&\sigma=\max\{\rVert\frac{\partial\phi_\mathtt{A}}{\partial x}\lVert_{1},\rVert\frac{\partial\phi_\mathtt{B}}{\partial x}\lVert_{1},\rVert\frac{\partial\phi_\mathtt{C}}{\partial x}\lVert_{1}\}=\sqrt{\frac{4T}{\pi\min\{D_\mathtt{A},D_\mathtt{B},D_{\mathtt{C}}\}}}.
\end{align} 
Then, for any $0<\lambda<\frac{1}{T(12M_0+10\gamma)}$, we have the following equations for any  $i\geq 1$
\begin{align}
&\lambda^{i}\rVert[\mathtt{A}]_{i}\lVert_{\infty}\leq \frac{M_0}{2^{i}},~
\lambda^{i}\rVert\frac{\partial [\mathtt{A}]_{i}}{\partial t}\lVert_{\infty}\leq \frac{N_0}{3(2^{i})},~\lambda^{i}\rVert\frac{\partial [\mathtt{A}]_{i}}{\partial x}\lVert_{\infty}\leq \frac{\sigma M_0 i}{4T(2^i)},\\&\lambda^{i}\rVert\frac{\partial^2 [\mathtt{A}]_{i}}{\partial x^2}\lVert_{\infty}\leq(\frac{1}{2})^{i}
(\frac{\sigma(2M_0+\gamma) G_0}{4T(M_0+\gamma)}+\frac{\sigma^2(2M_{0}^{2}+
	\gamma M_{0})}{32T^2(M_{0}+\gamma)}i(i-1)).
\end{align}
\end{lem}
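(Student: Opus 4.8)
The plan is to prove all four families of inequalities simultaneously by a single coupled strong induction on $i$, driven by the recursive representations \eqref{recu-eq1}--\eqref{recu-eq3} and three elementary facts about the Gaussian kernel: $\int_x\phi_\mathtt{A}(x,t)\,dx=1$ for every $t>0$, which gives $\lVert\phi_\mathtt{A}**g\rVert_\infty\leq T\lVert g\rVert_\infty$; $\lVert\partial_x\phi_\mathtt{A}\rVert_1=\sqrt{4T/\pi D_\mathtt{A}}\leq\sigma$, which by Young's inequality gives $\lVert(\partial_x\phi_\mathtt{A})**g\rVert_\infty\leq\sigma\lVert g\rVert_\infty$; and $\phi_\mathtt{A}(\cdot,0)=\delta$. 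Writing $g_i:=[\mathtt{A}]_{0:i-1}*^d[\mathtt{B}]_{0:i-1}+\gamma[\mathtt{C}]_{i-1}$ (and the sign-flipped analogue for $\mathtt{C}$, which obeys the same sup-norm bound), I would handle the base case $i=0$ from $[\mathtt{A}]_0=\phi_\mathtt{A}**f_\mathtt{A}$: the change of variables $[\mathtt{A}]_0(x,t)=\int_0^t(\phi_\mathtt{A}(\cdot,\tau)*f_\mathtt{A}(\cdot,t-\tau))(x)\,d\tau$ and differentiation in $t$ produce a boundary term $(\phi_\mathtt{A}(\cdot,t)*f_\mathtt{A}(\cdot,0))(x)$ plus a bulk term, giving precisely $\lVert\partial_t[\mathtt{A}]_0\rVert_\infty\leq\lVert f_\mathtt{A}(x,0)\rVert_\infty+T\lVert\partial_tf_\mathtt{A}\rVert_\infty\leq N_0$, while moving one space-derivative onto $f_\mathtt{A}$ yields the $G_0$ and $H_0$ bounds for $\partial_x[\mathtt{A}]_0$ and $\partial_x^2[\mathtt{A}]_0$. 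A fact I would record at the outset and reuse constantly is that $[\mathtt{A}]_i(x,0)=[\mathtt{B}]_i(x,0)=[\mathtt{C}]_i(x,0)=0$ for all $i$ (the operator $\phi**\cdot$ integrates $t'$ over $[0,t]$), hence $g_i(\cdot,0)=0$ for $i\geq1$.

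The heart of the argument, and the main obstacle, is the clean sup-norm bound $\lambda^i\lVert[\mathtt{A}]_i\rVert_\infty\leq M_0/2^i$. Setting $S_i$ to be a common majorant of the three $\lambda$-weighted sup-norms, the inequality $\lVert\phi_\mathtt{A}**g\rVert_\infty\leq T\lVert g\rVert_\infty$ applied to \eqref{recu-eq1}--\eqref{recu-eq3} gives the scalar recursion $S_i\leq T\lambda\sum_{j=0}^{i-1}S_jS_{i-1-j}+T\lambda\gamma S_{i-1}$ with $S_0=M_0$. A naive induction straight into $S_i\leq M_0/2^i$ \emph{fails}, because the convolution sum $\sum_{j=0}^{i-1}2^{-j}2^{-(i-1-j)}=i\,2^{-(i-1)}$ carries an uncontrolled factor $i$. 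The remedy is to first dominate $S_i$ by the pure quadratic (Catalan) sequence $\hat S_i$ with $\hat S_0=M_0$ and $\hat S_i=\hat\tau\sum_{j=0}^{i-1}\hat S_j\hat S_{i-1-j}$, where $\hat\tau=T\lambda(M_0+\gamma)/M_0$. A term-by-term induction gives $S_i\leq\hat S_i$ (with equality at $i=0,1$): the only delicate point is that the unary term $T\lambda\gamma\hat S_{i-1}$ is absorbed by the boundary terms $\hat\tau(\hat S_0\hat S_{i-1}+\hat S_{i-1}\hat S_0)=2\hat\tau M_0\hat S_{i-1}$ of the quadratic sum, which is exactly how the value of $\hat\tau$ is forced. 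Solving the Catalan recursion yields $\hat S_i=M_0(\hat\tau M_0)^i\mathrm{Cat}_i$ with $\mathrm{Cat}_i$ the $i$-th Catalan number, and the estimate $\mathrm{Cat}_i\leq4^i$ gives $S_i\leq M_0\bigl(4T\lambda(M_0+\gamma)\bigr)^i$. Since $\lambda<1/\bigl(T(12M_0+10\gamma)\bigr)$ forces $4T\lambda(M_0+\gamma)<1/2$, this collapses to $S_i\leq M_0/2^i$; the same computation simultaneously establishes the convergence claimed in the body of the paper.

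Given the sup-norm bound, the three derivative estimates follow by differentiating the representations and repeating the convolution bookkeeping. For $\partial_x[\mathtt{A}]_i=-(\partial_x\phi_\mathtt{A})**g_i$ I would apply $\lVert(\partial_x\phi_\mathtt{A})**g_i\rVert_\infty\leq\sigma\lVert g_i\rVert_\infty$ and bound $\lVert g_i\rVert_\infty$ by the sup-norm bounds, producing a factor $i$ that is legitimate here because the target $\sigma M_0 i/(4T2^i)$ itself carries an $i$ (the extra $\gamma$ contribution is absorbed into the slack of this term). The subtle one is $\partial_x^2[\mathtt{A}]_i$: because $\partial_x^2\phi_\mathtt{A}$ is not uniformly integrable in time ($\lVert\partial_x^2\phi_\mathtt{A}(\cdot,t)\rVert_1\sim c/t$, whose time integral diverges), I cannot convolve $g_i$ against two space-derivatives of the kernel. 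Instead I split the derivatives, $\partial_x^2[\mathtt{A}]_i=-(\partial_x\phi_\mathtt{A})**(\partial_xg_i)$, and bound $\lVert\partial_xg_i\rVert_\infty$ by the product rule using the sup-norm \emph{and} the just-proved $\partial_x$ bound; the two symmetric contributions sum to $\sum_{j=0}^{i-1}\bigl(j+(i-1-j)\bigr)=i(i-1)$, which is the source of the $i(i-1)$ term, while the boundary indices $j=0,i-1$ (where the first-order $\partial_x$ bound degenerates and must be replaced by the base value $G_0$) reproduce the remaining term. Finally, for $\partial_t[\mathtt{A}]_i$ with $i\geq1$ I would invoke $g_i(\cdot,0)=0$ and the time-derivative-shift identity to get $\lVert\partial_t[\mathtt{A}]_i\rVert_\infty\leq T\lVert\partial_tg_i\rVert_\infty$, which unfolds into a \emph{linear} convolution recursion in the $\partial_t$-norms driven by the Catalan-controlled sup-norm sequence; the same smallness of $\lambda$ keeps this coupled recursion geometric and yields the factor-free bound $N_0/(3\cdot2^i)$. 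In each case the precise coefficients $12$ and $10$ in the radius are chosen so that all four geometric ratios stay below $1/2$ at once, with the sup-norm estimate—requiring $4T\lambda(M_0+\gamma)<1/2$—being the binding constraint.
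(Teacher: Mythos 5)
Your proposal is correct and follows essentially the same route as the paper's proof: a coupled induction in which the sup-norm recursion is dominated by a Catalan-type quadratic sequence (with the $\gamma$ term absorbed into the quadratic sum, the bound $\mathcal{C}_i\leq 4^i$, and the radius condition forcing the geometric ratio below $1/2$), the second space derivative handled by moving only one derivative onto the kernel and one onto the convolved term to produce the $i(i-1)$ factor, and the time derivative handled via a linear convolution recursion using $g_i(\cdot,0)=0$. The few cosmetic differences (where the factor $\lambda$ is absorbed, phrasing the $\gamma$-absorption via boundary terms of the quadratic sum rather than multiplying the whole sum by $1+\gamma/M_0$) do not change the argument.
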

Using this lemma and assuming $|\lambda|<\frac{1}{T(12M_0+10\gamma)}$ where $M_0$ is defined in  Lemma \ref{LemmaBounds}, we obtain
$\sum_{i=N}^{\infty}
\lambda^{i}\rVert[\mathtt{A}]_{i}\lVert_{\infty}\leq M_0/2^{N-1}$.
Since $M_0/(2^{N-1})$ is a universal upper bound that does not depend on $(\lambda, x, t)$,  the power series $\sum_{i=0}^{\infty}\lambda^{i}
[\mathtt{A}]_{i}(x,t)$ will uniformly converge. Proof of the convergence of the other power series given above is similar.	
It remains to prove Lemma \ref{LemmaBounds}. 
	
\begin{proof}[Proof of Lemma \ref{LemmaBounds}]
	The proof is by induction on $i$.
	Suppose $M_i,i\geq1$, is an upper bound on  $\rVert[\mathtt{A}]_i\lVert_{\infty}$, $\lVert[\mathtt{B}]_i\rVert_{\infty}$, and $\lVert[\mathtt{C}]_i\rVert_{\infty}$   we have:
	\begin{align}
	\rVert[\mathtt{A}]_i\lVert_{\infty}&=\rVert \phi_\mathtt{A}**([\mathtt{A}]_{0:i-1}*^d[\mathtt{B}]_{0:i-1}-\gamma[\mathtt{C}]_{i-1})\lVert_{\infty}\leq\rVert\phi_{\mathtt{A}}\lVert_{1}(\sum_{j=0}^{i-1}\rVert [\mathtt{A}]_{j}\lVert_{\infty}
	\rVert [\mathtt{B}]_{i-1-j}\lVert_{\infty}+\gamma\rVert [\mathtt{C}]_{i-1}\lVert_{\infty})\nonumber\\
	&\leq
	T (\sum_{j=0}^{i-1} M_jM_{i-1-j}+\gamma M_{i-1})\leq T(1+\frac{\gamma}{M_{0}})\sum_{j=0}^{i-1} M_jM_{i-1-j}
	\end{align}
	A similar bound for $\rVert[\mathtt{B}]_i\lVert_{\infty}$ and $\rVert[\mathtt{C}]_i\lVert_{\infty}$ can be written. Thus, using induction we can set $M_i=T(1+\frac{\gamma}{M_{0}}) \sum_{j=0}^{i-1} M_jM_{i-1-j}$. The solution of this recursive equation is given in the following form:
$	M_i=(T+\frac{\gamma T}{M_0})^iM_{0}^{i+1}\mathcal{C}_i
	,~i\geq 0,
$	where $\mathcal{C}_i$ is the  Catalan number and has an explicit formula:
$	\mathcal{C}_i=\frac{(2i)!}{(i+1)!(i)!}.
$	Using the Sterling formula, we have $
	\mathcal{C}_i\leq 4^i.
	$
	Hence for $i\geq 0$ we obtain:
	\begin{align}
	\lambda^{i}\max(\rVert[\mathtt{A}]_i\lVert_{\infty}, \rVert[\mathtt{B}]_i\lVert_{\infty})&\leq M_0(\frac{4M_0+4\gamma}{12M_0+10\gamma})^i\leq \frac{M_0}{2^i}.
	\end{align}

Suppose $N_i$ for $i\geq1$ is an upper bound on $\rVert\frac{\partial[\mathtt{A}]_i}{\partial t}\lVert_{\infty},\lVert
\frac{\partial[\mathtt{B}]_i}{\partial t}\rVert_{\infty}$ and $\lVert
\frac{\partial[\mathtt{C}]_i}{\partial t}\rVert_{\infty}$ .
We have:
\begin{align}
\rVert\frac{\partial[\mathtt{A}]_i}{\partial t}\lVert_{\infty}&=\rVert \phi_\mathtt{A}**\frac{\partial}{\partial t}([\mathtt{A}]_{0:i-1}*^d[\mathtt{B}]_{0:i-1}-\gamma [\mathtt{C}]_{i-1})\lVert_{\infty}\nonumber\\ &\leq \rVert\phi_{\mathtt{A}}\lVert_{1}(2\sum_{j=0}^{i-1} \rVert\frac{\partial[\mathtt{A}]_j}{\partial t}\lVert_{\infty}
\rVert [\mathtt{B}]_{i-1-j}\lVert_{\infty}+\gamma \rVert\frac{\partial[\mathtt{C}]_{i-1}}{\partial t}\lVert_{\infty})\nonumber\\&\leq (2T+\frac{\gamma T}{M_0})\sum_{j=0}^{i-1}N_{j}M_{i-1-j}\leq
(2TM_0+\gamma T) \sum_{j=0}^{i-1} N_j(4TM_0)^{i-1-j}.
\end{align}
A similar equation can be written for $\rVert\frac{\partial[\mathtt{B}]_i}{\partial t}\lVert_{\infty}$ and $\rVert\frac{\partial[\mathtt{C}]_i}{\partial t}\lVert_{\infty}$. 
Thus, setting
$N_i=(2TM_0+\gamma T) \sum_{j=0}^{i-1} N_j(4TM_0)^{i-1-j}$ yields a valid upper bound on 
$\rVert\frac{\partial[\mathtt{A}]_i}{\partial t}\lVert_{\infty},\lVert
\frac{\partial[\mathtt{B}]_i}{\partial t}\rVert_{\infty}$ and $\lVert
\frac{\partial[\mathtt{C}]_i}{\partial t}\rVert_{\infty}$ 
by induction.
The solution of this recursive equation is as follows:
\begin{equation}
N_i=N_0(\frac{2M_0+\gamma}{6M_0+5\gamma})(6TM_0+5T\gamma)^{i},~i\geq 1.
\end{equation}
For $0<\lambda<\frac{1}{T(12M_0+10\gamma)}$, we obtain that
\begin{equation}
\lambda^{i}\max(\rVert\frac{\partial [\mathtt{A}]_{i}}{\partial t}\lVert_{\infty},\rVert\frac{\partial [\mathtt{B}]_{i}}{\partial t}\lVert_{\infty}\big)\leq \frac{N_0}{3(2^{i})}.
\end{equation}

Next, we have
\begin{align}
\rVert\frac{\partial[\mathtt{A}]_i}{\partial x}\lVert_{\infty}&=\rVert \frac{\partial\phi_\mathtt{A}}{\partial x} **([\mathtt{A}]_{0:i-1}*^d[\mathtt{B}]_{0:i-1}-\gamma [\mathtt{C}]_{i-1})\lVert_{\infty}\nonumber\\&\leq \rVert\frac{\partial\phi_\mathtt{A}}{\partial x}\lVert_{1}(\sum_{j=0}^{i-1} \rVert[\mathtt{A}]_j\lVert_{\infty}
\rVert [\mathtt{B}]_{i-1-j}\lVert_{\infty}+\gamma \rVert [\mathtt{C}]_{i-1}\lVert_{\infty} )\leq \sigma(1+\frac{\gamma}{M_0})\sum_{j=0}^{i-1}M_{j}M_{i-1-j}\nonumber\\
&\leq
\frac{\sigma M_0}{4T} \sum_{j=0}^{i-1} (4T(M_0+\gamma))^{i}
\leq \frac{\sigma M_0}{4T}i
(4T(M_0+\gamma))^{i},
\end{align}
thus, 
$
G_i=\frac{\sigma M_0}{4T}i
(4T(M_0+\gamma))^{i},
$
for $i\geq 1$, serves as an upper bound on 
$\rVert\frac{\partial[\mathtt{A}]_i}{\partial x}\lVert_{\infty}$ (and by a similar argument $\lVert
\frac{\partial[\mathtt{B}]_i}{\partial x}\rVert_{\infty}$ and $\lVert
\frac{\partial[\mathtt{C}]_i}{\partial x}\rVert_{\infty}$ ).
Finally, 
\begin{align}
\rVert\frac{\partial^2[\mathtt{A}]_i}{\partial x^2}\lVert_{\infty}&=\rVert
\frac{\partial \phi_\mathtt{A}}{\partial x}
**(\frac{\partial}{\partial x}([\mathtt{A}]_{0:i-1}*^d[\mathtt{B}]_{0:i-1}-\gamma [\mathtt{C}]_{i-1})\lVert_{\infty}
\nonumber\\&\leq \rVert\frac{\partial \phi_\mathtt{A}}{\partial x} \lVert_{1}(2\sum_{j=0}^{i-1} \rVert\frac{\partial[\mathtt{A}]_j}{\partial x}\lVert_{\infty}
\rVert [\mathtt{B}]_{i-1-j}\lVert_{\infty}+
\gamma \rVert\frac{\partial[\mathtt{C}]_{i-1}}{\partial x}\lVert_{\infty}
)\leq\sigma(2\sum_{j=0}^{i-1}
G_jM_{i-1-j}+\gamma G_{i-1})
\nonumber\\&\leq\sigma(2+\frac{\gamma}{M_0}) \sum_{j=0}^{i-1}
G_jM_{i-1-j}=\sigma(2+\frac{\gamma}{M_0})(G_0 M_{i-1}+\sum_{j=1}^{i-1} G_jM_{i-1-j})\nonumber\\&\leq(\frac{\sigma(2M_0+\gamma) G_0}{4T(M_0+\gamma)}+\frac{\sigma^2(2M_{0}^{2}+
	\gamma M_{0})}{32T^2(M_{0}+\gamma)}i(i-1))\times
(4T(M_0+\gamma))^{i}.
\end{align}
Therefore, 
$
H_i=(\frac{\sigma(2M_0+\gamma) G_0}{4T(M_0+\gamma)}+\frac{\sigma^2(2M_{0}^{2}+
	\gamma M_{0})}{32T^2(M_{0}+\gamma)}i(i-1))\times
(4T(M_0+\gamma))^{i}, i\geq 1,
$
is an upper bound on $\rVert\frac{\partial^2[\mathtt{A}]_i}{\partial x^2}\lVert_{\infty}$ (an on $\lVert
\frac{\partial^2[\mathtt{B}]_i}{\partial x^2}\rVert_{\infty}$ and $\lVert
\frac{\partial^2[\mathtt{C}]_i}{\partial x^2}\rVert_{\infty}$ by a similar argument). The proof is complete.
\end{proof}

\section{Proof of Theorem \ref{thmN1}  }\label{AppA:3}

The  zero-order terms in the perturbation method satisfy
\begin{align}
&\frac{\partial [\mathtt{A}]_{0}}{\partial t}=D_\mathtt{A}~\frac{\partial^{2}[\mathtt{A}]_{0}}{{\partial x}^{2}}+f_\mathtt{A}(x,t),~~
\frac{\partial [\mathtt{B}]_{0}}{\partial t}=D_\mathtt{B}~\frac{\partial^{2} [\mathtt{B}]_{0}}{{\partial x}^{2}}+f_\mathtt{B}(x,t),~~
\frac{\partial [\mathtt{C}]_{0}}{\partial t}=D_\mathtt{C}~\frac{\partial^{2} [\mathtt{C}]_{0}}{{\partial x}^{2}},\\
&[\mathtt{A}]_{0}(x,0)=g_{\mathtt{A}}(x), ~[\mathtt{B}]_{0}(x,0)=g_{\mathtt{B}}(x), ~[\mathtt{C}]_{0}(x,0)=g_{\mathtt{C}}(x).
\end{align}
Setting $t=0$ in above equations we have:
\begin{align}
&\frac{\partial [\mathtt{A}]_{0}}{\partial t}(x,0)=D_\mathtt{A}~\frac{\partial^{2}[\mathtt{A}]_{0}(x,0)}{{\partial x}^{2}}+f_\mathtt{A}(x,0)=D_\mathtt{A}
g_{\mathtt{A}}''(x)+f_\mathtt{A}(x,0).
\end{align}
Thus,
$[\mathtt{A}]_{0}(x,t )=[\mathtt{A}]_{0}(x,0)+\frac{\partial [\mathtt{A}]_{0}}{\partial t}(x,0)t +o(t )=g_{\mathtt{A}}(x)+t  D_\mathtt{A}
g_{\mathtt{A}}''(x)+t  f_\mathtt{A}(x,0)+o(t ).
$ For other terms in perturbation solution we have ($i\geq 1$):
\begin{align}
&\frac{\partial[\mathtt{A}]_{i}}{\partial t}=D_\mathtt{A}~\frac{\partial^{2}[\mathtt{A}]_{i}}{{\partial x}^{2}}-[\mathtt{A}]_{0:i-1}*^d
[\mathtt{B}]_{0:i-1}+\gamma[\mathtt{C}]_{i-1},~\quad[\mathtt{A}]_{i}(x,0)=0,\\	
&\frac{\partial[\mathtt{B}]_{i}}{\partial t}=D_\mathtt{B}~\frac{\partial^{2}[\mathtt{B}]_{i}}{{\partial x}^{2}}-[\mathtt{A}]_{0:i-1}*^d
[\mathtt{B}]_{0:i-1}+\gamma[\mathtt{C}]_{i-1},~\quad[\mathtt{B}]_{i}(x,0)=0,\\
&\frac{\partial[\mathtt{C}]_{i}}{\partial t}=D_\mathtt{C}~\frac{\partial^{2}[\mathtt{B}]_{i}}{{\partial x}^{2}}-[\mathtt{A}]_{0:i-1}*^d
[\mathtt{B}]_{0:i-1}-\gamma[\mathtt{C}]_{i-1},~\quad[\mathtt{C}]_{i}(x,0)=0.
\end{align}
Setting $t=0$ in above equations we obtain:
\begin{align}
\frac{\partial[\mathtt{A}]_{1}}{\partial t}(x,0)=
-g_{\mathtt{A}}(x)g_{\mathtt{B}}(x)+\gamma g_{\mathtt{C}}(x),\qquad
\frac{\partial[\mathtt{A}]_{i}}{\partial t}(x,0)=0,\quad~~i\geq 2.
\end{align} 
From the Taylor series expansion we have:
\begin{align}
&[\mathtt{A}]_{1}(x,t )=[\mathtt{A}]_{1}(x,0)+\frac{\partial [\mathtt{A}]_{1}}{\partial t}(x,0)t +o(t )=-t  g_{\mathtt{A}}(x)g_{\mathtt{B}}(x)+t  \gamma g_{\mathtt{C}}(x)+o(t ),\\
&[\mathtt{A}]_{i}(x,t )=[\mathtt{A}]_{i}(x,0)+
\frac{\partial [\mathtt{A}]_{i}}{\partial t}(x,0)t +
\frac{\partial^{2}[\mathtt{A}]_{i}}{{\partial t}^2}(x,0)t ^2+o(t ^2)=\mathcal{O}(t ^2),~i\geq 2.
\end{align}
Therefore the perturbation solution is as follows:
\begin{align}
&[\mathtt{A}]_{PER}(x,t )=[\mathtt{A}]_{0}(x,t )+\lambda[\mathtt{A}]_{1}(x,t )+\sum_{i=2}^{\infty}\lambda^i[\mathtt{A}]_{i}(x,t )\nonumber\\
&=g_{\mathtt{A}}(x)+t  D_\mathtt{A}
g_{\mathtt{A}}''(x)+t  f_\mathtt{A}(x,0)+\lambda(-t  g_{\mathtt{A}}(x)g_{\mathtt{B}}(x)+t  \gamma g_{\mathtt{C}}(x))+o(t )+\mathcal{O}(t^2).
\end{align}
Hence,
$
[\mathtt{A}]_{PER}(x,t )=[\mathtt{A}]_{FDM}(x,t )+\mathcal{O}(t^2).$

\section{Poisson Hypothesis Testing }\label{AppB} 
In a Poisson hypothesis testing problem with two hypotheses, we take a sample from a Poisson random variable $X$ whose mean is either $\rho_0$ (under hypothesis $\textbf{H}_{0}$) or $\rho_1$ (under hypothesis $\textbf{H}_{1}$). 
Assume that the two hypotheses  $\textbf{H}_{0}$ and $\textbf{H}_{1}$ are equally likely  and $\rho_0<\rho_1$. The MAP decision rule compares the observed $X$ with threshold $\mathbb{T}_{h}=({\rho_1-\rho_0})/({\log{\rho_1}-\log{\rho_0}})$ and declares $\textbf{H}_{0}$ if and only if $X<\mathbb{T}_{h}$. The error probability of the MAP decision rule, denoted by $P_e(\rho_0, \rho_1)$, equals
\begin{align} 
\frac{1}{2}\sum_{n\in\mathbb{Z}:~n\geq \mathbb{T}_{h}}^{\infty}\frac{e^{-\rho_0}\rho_{0}^{n}}{n!}+
\frac{1}{2}\sum_{n\in\mathbb{Z}:~0\leq n<\mathbb{T}_{h}}\frac{e^{-\rho_1}\rho_{1}^{n}}{n!}
=\frac{1}{2}-TV(\mathsf{Poisson}(\rho_0),\mathsf{Poisson}(\rho_1))
\end{align}
where 
$TV(\cdot,\cdot))$ is the total variation distance.
 \begin{lem}\label{decrese:error:pro} 
 Fix some $\rho_0$. Then  $P_e(\rho_0,\rho_1)$ is a decreasing continuous function of $\rho_1$ for $\rho_1\geq \rho_0$.
 	\end{lem}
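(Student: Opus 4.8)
The plan is to reduce $P_e$ to its pointwise (Bayes) form and then differentiate in $\rho_1$. First I would rewrite the error probability using the fact that, at each outcome $n$, the MAP rule keeps the larger of the two likelihoods and errs by the smaller one; concretely,
\begin{align}
P_e(\rho_0,\rho_1)=\frac{1}{2}\sum_{n=0}^{\infty}\min\!\left(\frac{e^{-\rho_0}\rho_0^{n}}{n!},\,\frac{e^{-\rho_1}\rho_1^{n}}{n!}\right),
\end{align}
which agrees with the threshold expression in the statement because the likelihood ratio $e^{-\rho_1}\rho_1^{n}/(e^{-\rho_0}\rho_0^{n})=e^{\rho_0-\rho_1}(\rho_1/\rho_0)^{n}$ is increasing in $n$ and equals $1$ exactly at $n=\mathbb{T}_{h}$. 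Since every summand is dominated by $e^{-\rho_0}\rho_0^{n}/n!$, which is summable and independent of $\rho_1$, dominated convergence already yields continuity of $P_e$ in $\rho_1$, settling that half of the claim.

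For monotonicity I would differentiate term by term. By the monotone likelihood ratio just noted, $e^{-\rho_1}\rho_1^{n}/n!<e^{-\rho_0}\rho_0^{n}/n!$ precisely when $n<\mathbb{T}_{h}$, so the minimum equals the $\rho_1$-independent constant $e^{-\rho_0}\rho_0^{n}/n!$ for $n\ge \mathbb{T}_{h}$ and equals $e^{-\rho_1}\rho_1^{n}/n!$ for $n<\mathbb{T}_{h}$. Using $\frac{d}{d\rho_1}\frac{e^{-\rho_1}\rho_1^{n}}{n!}=\frac{e^{-\rho_1}\rho_1^{n}}{n!}\cdot\frac{n-\rho_1}{\rho_1}$, only the indices $n<\mathbb{T}_{h}$ contribute, giving
\begin{align}
\frac{d}{d\rho_1}P_e(\rho_0,\rho_1)=\frac{1}{2\rho_1}\sum_{n<\mathbb{T}_{h}}\frac{e^{-\rho_1}\rho_1^{n}}{n!}\,(n-\rho_1).
\end{align}
The interchange of sum and derivative is justified because the differentiated series is dominated by $\sum_n \frac{e^{-\rho_1}\rho_1^n}{n!}|n-\rho_1|=\mathbb{E}|X-\rho_1|\le\sqrt{\rho_1}$, locally uniformly in $\rho_1$.

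The crux is then the elementary inequality $\mathbb{T}_{h}<\rho_1$. Writing $u=\rho_0/\rho_1\in(0,1)$, the desired bound $\frac{\rho_1-\rho_0}{\log\rho_1-\log\rho_0}<\rho_1$ rearranges (the denominator being positive) to $\log u<u-1$, which holds for all $0<u<1$ by the standard inequality $\log x\le x-1$. Hence every index in the sum satisfies $n<\mathbb{T}_{h}<\rho_1$, so $n-\rho_1<0$ and each summand is strictly negative; the $n=0$ term alone contributes $-\rho_1 e^{-\rho_1}<0$. Therefore the derivative is strictly negative and $P_e(\rho_0,\cdot)$ is strictly decreasing on $[\rho_0,\infty)$, which is stronger than the claimed monotonicity.

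I expect the main obstacle to be the regularity at the (isolated) values of $\rho_1$ where $\mathbb{T}_{h}$ crosses an integer: at those points the per-term minimum has a corner, so the term-by-term derivative above is only valid on the complementary open set. I would close this gap by observing that $P_e$ is continuous everywhere and piecewise $C^1$ with non-positive derivative between these countably many exceptional points, which forces $P_e$ to be non-increasing overall; the strict negativity established away from those points then upgrades this to strict monotonicity. The one genuinely nontrivial ingredient is isolating the inequality $\mathbb{T}_{h}<\rho_1$, which is exactly what makes the sign of every surviving term unambiguous.
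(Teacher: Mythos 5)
Your proof is correct, but it takes a genuinely different and substantially more rigorous route than the paper. The paper's own proof is a two-line qualitative argument: it rewrites $P_e(\rho_0,\rho_1)=\frac12-TV(\mathsf{Poisson}(\rho_0),\mathsf{Poisson}(\rho_1))$ and then simply \emph{asserts} that the total variation distance increases continuously in $\rho_1$ from its value $0$ at $\rho_1=\rho_0$, without justifying the monotonicity. Your argument supplies exactly the missing content: the representation $P_e=\frac12\sum_n\min\bigl(e^{-\rho_0}\rho_0^n/n!,\,e^{-\rho_1}\rho_1^n/n!\bigr)$ (equivalent to the paper's $\frac12-TV$ identity), term-by-term differentiation justified by domination via $\mathbb{E}|X-\rho_1|\le\sqrt{\rho_1}$, and the key elementary inequality $\mathbb{T}_h<\rho_1$ (via $\log u<u-1$ for $u\in(0,1)$), which pins down the sign of every surviving term. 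Your handling of the exceptional points where $\mathbb{T}_h$ crosses an integer --- continuity everywhere plus piecewise $C^1$ with negative derivative on the complement of a countable set --- is the right way to close that gap, and you even obtain strict monotonicity, which is more than the lemma claims. What the paper's approach buys is brevity and an appeal to a familiar quantity (total variation); what yours buys is an actual proof of the monotonicity, which the paper leaves as an unproven assertion. One minor point worth a sentence in a polished write-up: the degenerate endpoint $\rho_0=0$ (where $\mathbb{T}_h$ collapses to $0$ and the sum over $n<\mathbb{T}_h$ is empty) needs separate, but trivial, treatment since there $P_e=\frac12 e^{-\rho_1}$ directly.
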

\begin{proof}
The above statement is equivalent with 
$TV(\mathsf{Poisson}(\rho_0),\mathsf{Poisson}(\rho_1))$ being an increasing continuous function of $\rho_1$ for $\rho_1\geq \rho_0$. When $\rho_1=\rho_0$, the total variation distance is zero and as we increase $\rho_1$, this distance increases. Since the distribution of $\mathsf{Poisson}(\rho_1)$ varies continuously, the changes in the total variation distance is also continuous in  $\rho_1$.
\end{proof}

\section{Support Lemma}\label{AppC}
Let $\mathbb{P}$ be the space of all unnormalized probability distributions on the interval $[0,T]$ (\emph{i.e.},  nonnegative functions with finite nonzero integral). For a distribution $p(t)\in\mathbb{P}$ and a continuous function $f(t)$, we define $
\mathbb{E}_{p}(f)=\int_{0}^{T}
p(t)f(t) dt.
$

\begin{lem} \label{SupportLemma} Take arbitrary continuous functions $f(t)$ and $f_i(t)$ for $i=1,2,\cdots, n$, and an arbitrary unnormalized distribution $p\in\mathbb{P}$. Then, there is another \emph{discrete} unnormalized distribution $q\in\mathbb{P}$
taking values 
in a set of size $n$, \emph{i.e.,}
$q(t)=\sum_{i=1}^{n}{a}_{i}\delta(t-t_i)$
for some $a_i\geq 0$ and $t_i\in[0,T]$ 
such that 
\begin{align}\mathbb{E}_{q}(f)\leq \mathbb{E}_{p}(f)\label{eqn:C12}\end{align}
and
\begin{align}\mathbb{E}_{q}(f_i)= \mathbb{E}_{p}(f_i), \quad \text{for}~~i=1,\cdots, n.\label{eqn:Cn}\end{align}
\end{lem}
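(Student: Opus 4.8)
The plan is to recast the problem through the moment map $\gamma(t)=(f_1(t),\ldots,f_n(t),f(t))\in\mathbb{R}^{n+1}$ and to exploit that $p$ enters only through the vector of integrals $\int_0^T\gamma(t)\,p(t)\,dt$. First I would normalize, writing $m=\int_0^T p(t)\,dt\in(0,\infty)$ and $\bar p=p/m$, so that $P=\int_0^T\gamma(t)\,d\bar p(t)$ is the barycenter of the probability measure $\bar p$ and hence lies in the convex hull $K=\mathrm{conv}(\gamma([0,T]))$. Since $[0,T]$ is compact and connected and $\gamma$ is continuous, $\gamma([0,T])$ is a compact connected subset of $\mathbb{R}^{n+1}$; by the Fenchel--Eggleston strengthening of Carath\'eodory's theorem, every point of $K$, in particular $P$, is a convex combination of at most $n+1$ points of the curve. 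Undoing the normalization yields a discrete measure $q_0=\sum_{k=1}^{n+1}a_k\delta(t-t_k)$ with $a_k\ge 0$ that matches all $n+1$ integrals exactly, i.e. $\mathbb{E}_{q_0}(f_i)=\mathbb{E}_{p}(f_i)$ for every $i$ and $\mathbb{E}_{q_0}(f)=\mathbb{E}_{p}(f)$.

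The second step is to spend the slack in the single inequality on $f$ to delete one atom and reach the required support of size $n$. I would view $q_0$ as a nonnegative vector $a=(a_1,\ldots,a_{n+1})$ subject to the $n$ linear equalities $\sum_k a_k f_i(t_k)=\mathbb{E}_{p}(f_i)$, with linear objective $\sum_k a_k f(t_k)$. Because there are more than $n$ atoms, the $n$-dimensional columns $(f_i(t_k))_{i=1}^n$ are linearly dependent, so there is a nonzero vector $d=(d_1,\ldots,d_{n+1})$ in the kernel of the constraint matrix supported on the current atoms. Moving $a\mapsto a+s\,d$ preserves every equality $\mathbb{E}(f_i)$; choosing the sign of $s$ so that $\sum_k a_k f(t_k)$ does not increase and enlarging $|s|$ until a coordinate first hits zero, I eliminate one atom without raising $\mathbb{E}(f)$ above $\mathbb{E}_{p}(f)$. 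Iterating drives the support down to $n$ points. Equivalently, one may phrase the whole argument as: the desired $q$ is an extreme point of the convex set $\{\mu\ge 0:\int f_i\,d\mu=\mathbb{E}_{p}(f_i),\ i=1,\dots,n\}$ that minimizes $\int f\,d\mu$, and extreme points of a family of nonnegative measures cut out by $n$ linear constraints are supported on at most $n$ points; the minimizer exists at such an extreme point by the Bauer maximum principle on the weak-$*$ compact feasible set.

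The step I expect to be delicate is precisely this elimination: I must guarantee a kernel direction along which some coordinate is driven to zero while $\mathbb{E}(f)$ stays non-increasing. A short case analysis on the signs of $d$ and of $\sum_k d_k f(t_k)$ shows that this succeeds whenever $d$ is not sign-definite, and also when $d\ge 0$ with $\sum_k d_k f(t_k)\ge 0$ (and symmetrically); the only genuinely obstructive configuration is a nonnegative kernel vector along which $\mathbb{E}(f)$ strictly decreases, which corresponds to the underlying linear program being unbounded below. I would rule this out using the compactness of $K$ in the normalized picture, or by the Bauer/Krein--Milman route above, and I note that in the paper's intended application, where $f$ is the nonnegative mass (power) density $f(t)=1$, the objective is bounded below by $0$, so this obstruction cannot occur and the support-of-size-$n$ conclusion is immediate. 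Only linearity of $\mathbb{E}_{\mu}(\cdot)$ in $\mu$ and Carath\'eodory-type dimension counting are used, so continuity of the $f_i$ and $f$ is all that the first step requires.
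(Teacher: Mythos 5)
Your proposal follows the same underlying idea as the paper's proof --- Carath\'eodory-type dimension counting for the moment constraints combined with a linear-programming / basic-feasible-solution reduction --- but it is executed more carefully: the paper only sketches the argument after assuming $p$ is already discrete with finite support of size $N$, and then invokes ``the minimum of a linear function over the polytope is attained at a vertex with at most $n$ nonzero coordinates,'' whereas you work directly with the continuous measure via the moment map, obtain $n+1$ atoms matching all $n+1$ integrals by Fenchel--Eggleston, and then eliminate one atom along a kernel direction of the constraint matrix. Your version buys a genuine treatment of continuous $p$ and makes explicit the hypothesis under which the elimination terminates.

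The one place you go astray is the claim that the obstructive configuration (a nonnegative kernel vector $d$ with $\sum_k d_k f(t_k)<0$) can be ruled out via compactness of $K$ or by a Bauer/Krein--Milman argument. It cannot: the feasible set $\{\mu\ge 0:\int f_i\,d\mu=\mathbb{E}_p(f_i)\}$ is weak-$*$ closed but in general unbounded, so it need not be weak-$*$ compact, and the linear program can genuinely be unbounded below. In fact the lemma as stated is false for sign-changing $f$: take $n=1$, $T=1$, $f_1(t)=t-\tfrac12$, $f(t)=-(t-\tfrac12)^2$, and $p\equiv 1$, so that $\mathbb{E}_p(f_1)=0$ and $\mathbb{E}_p(f)=-\tfrac{1}{12}$; any admissible single atom $q=a\,\delta(t-t_1)$ must satisfy $a(t_1-\tfrac12)=0$ and hence $\mathbb{E}_q(f)=-a(t_1-\tfrac12)^2=0>-\tfrac{1}{12}$. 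The correct fix is exactly your closing observation: assume $f\ge 0$ (as in the paper's application, where $f\equiv 1$ is the released mass), in which case a nonnegative kernel direction automatically satisfies $\sum_k d_k f(t_k)\ge 0$, your case analysis always produces a coordinate driven to zero without increasing $\mathbb{E}(f)$, and the proof closes. The paper's own sketch carries the same unstated boundedness assumption, so your argument, restricted to $f\ge 0$, is the more rigorous of the two.
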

This lemma shows that by replacing $p$ with $q$, we preserve the $n$ linear constraints \eqref{eqn:Cn} and impose one linear inequality constraint \eqref{eqn:C12}. The support of $q$ (the number of delta functions) is at most the number of constraints which is $n$. 

Support lemmas of this type are commonly used in information theory, and follow from  Fenchel-Bunt's extension of the  Caratheodory theorem. For completeness, we give an intuitive sketch of the proof. For simplicity assume that $p(t)$ is discrete but with support on an arbitrarily large set, \emph{i.e.,}
$p(t)=\sum_{i=1}^{N}g_{i}\delta(t-\tilde{t}_i)$
for some large $N$, and $g_i\geq 0$, $\tilde{t}_i\in[0,T]$. Consider functions of the form
$\tilde{q}(t)=\sum_{i=1}^{N}x_{i}\delta(t-\tilde{t}_i)
$ for some $x_1,\cdots, x_N\geq 0$. Consider the set of $(x_1, \cdots, x_N)$ for which we have $\mathbb{E}_{\tilde q}(f_i)= \mathbb{E}_{p}(f_i)$, for $i=1,\cdots, n$. This imposes $n$ linear constraints on $(x_1, \cdots, x_n)$. These $n$ linear constraints along with the inequality constraints $x_1, \cdots, x_N\geq 0$ define a polytope in the $N$-dimensional  region. This polytope is nonempty since $(x_1, \cdots, x_N)=(g_1, \cdots, g_N)$ belongs to this polytope. To enforce the inequality \eqref{eqn:C12}, let us minimize $\mathbb{E}_{\tilde q}(f)$, which is a linear function in $(x_1, \cdots, x_N)$ over this polytope. The minimum of a linear function occurs at a vertex of the polytope. The key observation is that each vertex of the polytope has at most $n$ nonzero entries,  \emph{i.e.,} if $(x^*_1, \cdots, x^*_N)$ is a vertex of the polytope, at most $n$ entries of $(x^*_1, \cdots, x^*_N)$ are nonzero. This would imply the desired identification for $q(x)$. To see this, observe that since the polytope is in $N$ dimensions, every vertex of the polytope lies at the intersection of $N$ hyperplanes. The hyperplanes defining the polytope are the $n$ linear constraints along with $x_1,\cdots, x_N\geq 0$. Any vertex has to satisfy $N$ of these equations with equality. Thus, the vertex needs to pick at least $N-n$ inequalities of the form $x_i\geq 0$ and satisfy them with equality. In other words, for any vertex, at least $N-n$ entries must be zero, meaning that the number of nonzero entries is at most $n$.

\ifCLASSOPTIONcaptionsoff
  \newpage
\fi

%
%


\section{Further Examples}\label{ex2,3}

\subsection{Example 2: Reaction for Channel Amplification}

 The main example in Section \ref{sec::generalperturbation} used chemical reaction as a means to produce molecules that are detected by the receiver. However, reaction may be used for other purposes as well. For instance, it may be used to enhance the channel between the transmitter and the receiver. This concept is considered in the example below. 

Consider the following example with one transmitter and one receiver. The receiver can only measure the density of molecules of type $\mathtt{A}$ at its location. The transmitter is also able to release molecules of types $\mathtt{A}$ and $\mathtt{B}$ into the medium. Assume that there is an enzyme $\mathtt{C}$ in the medium (outside of our control) which reacts with molecules of type $\mathtt{A}$. If the level of enzyme $\mathtt{C}$ is high, molecules of type $\mathtt{A}$ are mostly dissolved before reaching the receiver. To overcome this, the transmitter may release molecules of a different type $\mathtt{B}$, which would also react with the enzyme $\mathtt{C}$ and thereby reduce the concentration of $\mathtt{C}$ in the medium. This ``cleaning" of the medium from molecules of type $\mathtt{C}$ would enhance the channel from the transmitter to the receiver. More specifically, assume that the medium is governed by the following chemical reactions:
\begin{align}
	\ce{\mathtt{A} + \mathtt{C}&->[\lambda_1]\mathtt{P}_1},\label{eqnNew3}\\
	\ce{\mathtt{B} + \beta  \mathtt{C} &->[\lambda_2] \mathtt{P}_2},\label{eqnNew4}
\end{align}    
where $\mathtt{P}_1$ and $\mathtt{P}_2$ are some products which do not include molecules of type $\mathtt{A},\mathtt{B}$ or $\mathtt{C}$. Here $\beta$ is a natural number. As an example, $\mathtt{A}$ and $\mathtt{B}$ can be two different acids, and $\mathtt{C}$ is a base substance (which reacts with acids $\mathtt{A}$ and $\mathtt{B}$). If $\mathtt{B}$ is a stronger acid than $\mathtt{A}$, the coefficient $\beta$ can be large, and release of $\mathtt{B}$ can be effective in canceling $\mathtt{C}$ from the medium. 

In \eqref{eqnNew3} and \eqref{eqnNew4}, let $\gamma=\lambda_2/\lambda_1$ be the ratio of the reaction rates. 
For simplicity of notation, set  $\lambda_1=\lambda,\lambda_2=\gamma\lambda$.  Assume that the reaction occurs in a two-dimensional medium.  For our modeling purposes, suppose that 
the transmitter is located at the origin and the receiver is located at $(d,0)$.  There is also an independent source that releases molecule  of type $\mathtt{C}$ in the medium with a known concentration $f_{\mathtt{c}}(x,y,t)$ at time $t$. The source is assumed to be located at location $r_0$. See Fig \ref{fig:ex2} for a depiction of this setting. Following equations describe the system dynamic.  
\begin{align}
&\frac{\partial[\mathtt{A}]}{\partial t}=D_\mathtt{A}~\bigtriangledown^2 [\mathtt{A}]-\lambda~[\mathtt{A}]
[\mathtt{C}]+f_{\mathtt{A}}(x,y,t),\label{cartesian2dim:1}\\
&\frac{\partial[\mathtt{C}]}{\partial t}=D_\mathtt{C}~\bigtriangledown^2 [\mathtt{C}]-\lambda~([\mathtt{A}][\mathtt{C}]+\gamma[\mathtt{B}][\mathtt{C}]^{\beta})+f_{\mathtt{C}}(x,y,t),\label{cartesian2dim:2}\\
&	\frac{\partial[\mathtt{B}]}{\partial t}=D_\mathtt{B}~\bigtriangledown^2 [\mathtt{B}]-\gamma\lambda~[\mathtt{B}]
[\mathtt{C}]^{\beta}+f_{\mathtt{B}}(x,y,t),\label{cartesian2dim:3}
\end{align}
where $f_{\mathtt{A}}(x,y,t),f_{\mathtt{B}}(x,y,t)$ are input signals of the transmitter. 
 The  initial conditions for concentration of molecule of type $\mathtt{A}$ and type $\mathtt{B}$ are set to zero.
 For molecule of type $\mathtt{C}$, the initial condition is $[\mathtt{C}](x,y,0)=\mathcal{I}_{\text{int}}(x,y)$. We assume that $\mathcal{I}_{\text{int}}(x,y)$ is completely known. We assume that the diffusion occurs in the entire $\mathbb{R}^2$ and do not assume any boundaries for the medium.
  For the case of no reaction, $\lambda=0$, the system of equations has a closed-form solution.
Consider a solution of the following form:
\begin{align}\label{teylorex:2}
	&[\mathtt{A}](x,y,t)=\sum_{i=0}^{\infty}\lambda^{i}[\mathtt{A}]_{i}(x,y ,t),~
	[\mathtt{C}](x,y,t)=\sum_{i=0}^{\infty}\lambda^{i}[\mathtt{C}]_{i}(x,y ,t),~
	[\mathtt{B}](x,y,t)=\sum_{i=0}^{\infty}\lambda^{i}[\mathtt{B}]_{i}(x,y ,t).
	\end{align}
	 By substituting \eqref{teylorex:2} in 
\eqref{cartesian2dim:1}, \eqref{cartesian2dim:2}, \eqref{cartesian2dim:3}, and 
 matching the coefficients of  $\lambda^0,\lambda^1,\lambda^2$ on both side of equations, we obtain the following equations:
 
 For particle $\mathtt{A}$ we have:
\begin{align}
	&\frac{\partial[\mathtt{A}]_0}{\partial t}=D_\mathtt{A}~\nabla^2 [\mathtt{A}]_0+f_{\mathtt{A}}(x,y,t),~
	\frac{\partial[\mathtt{A}]_1}{\partial t}=D_\mathtt{A}~\nabla^2 [\mathtt{A}]_1-[\mathtt{A}]_0  [\mathtt{C}]_0,\label{eq:ex2:A:01}\\
	&\frac{\partial[\mathtt{A}]_2}{\partial t}=D_\mathtt{A}~\nabla^2 [\mathtt{A}]_2-([\mathtt{A}]_0 [\mathtt{C}]_1+[\mathtt{A}]_1  [\mathtt{C}]_0).\label{eq:ex2:A:2}
\end{align}
\newpage

\begin{figure}[H]
\centering
	\includegraphics[trim={1cm 0cm 0cm 0cm}, scale=.5]{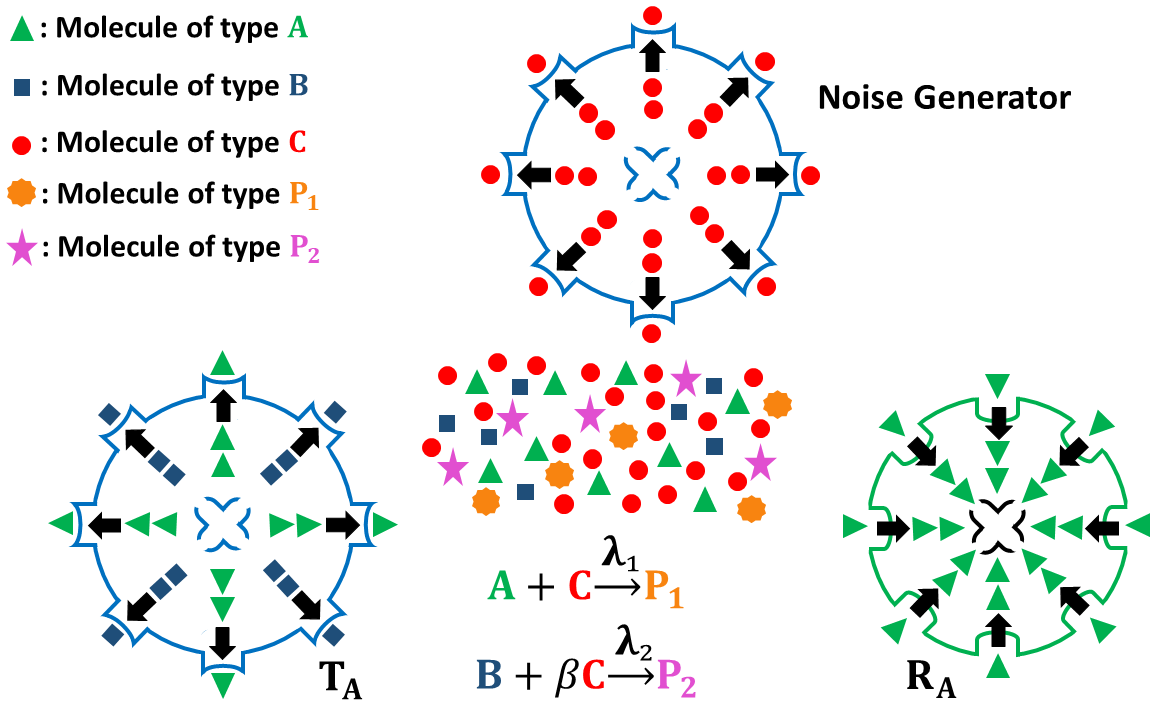}
	\caption{MC system for Example $2$.}
	\label{fig:ex2}
\vspace{-3em}
\end{figure}
\begin{figure}[H]
	\centering
		\includegraphics[trim={1cm 0cm 0cm 0cm}, scale=0.5]{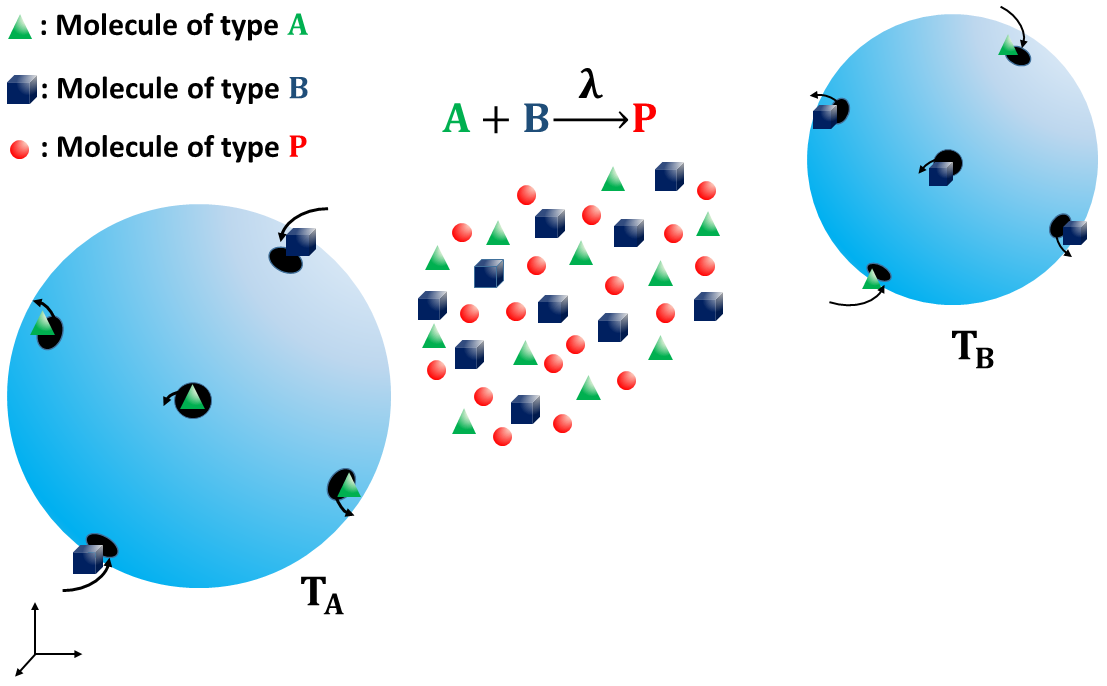}
		\caption{MC system for Example $3$.}
		\label{fig:ex3}
\end{figure}

For molecule of type $\mathtt{C}$ we have: 
\begin{align}\label{eq:ex2:C}
&	\frac{\partial[\mathtt{C}_0]}{\partial t}=D_\mathtt{C}~\nabla^2 [\mathtt{C}]_0+f_{\mathtt{C}}(x,y,t),~~	
	\frac{\partial[\mathtt{C}]_1}{\partial t}=D_\mathtt{C}~\nabla^2 [\mathtt{C}]_1-(\gamma[\mathtt{C}]_{0}^{\beta} [\mathtt{B}]_0+[\mathtt{C}]_{0}[\mathtt{A}]_0).
	\end{align}
To impose the initial condition for concentration of molecule of type $\mathtt{C}$, 
We set $[\mathtt{C}]_{0}(x,y,0)=\mathcal{I}_{\text{int}}(x,y)$ and for $i\geq 1$ we set $[\mathtt{C}]_{i}(x,y,0)=0$.

For particle $\mathtt{B}$ we have:
 \begin{align}\label{eq:ex2:B}
 	&\frac{\partial[\mathtt{B}]_0}{\partial t}=D_\mathtt{B}~\nabla^2 [\mathtt{B}]_0+f_{\mathtt{B}}(x,y,t).
 \end{align}
 Let  $\phi_{\mathtt{A}}(x,y,t)$ be  the impulse response of heat equation with   $D_\mathtt{A}$,
 \begin{align}
 \frac{\partial \phi_\mathtt{A}}{\partial t}=D_\mathtt{A}~\nabla^2(\phi_\mathtt{A})+\delta(x)\delta(y)\delta(t).
 \end{align}
 we have:
 \begin{align}\label{greentwo}
 \phi_\mathtt{A}(x,y,t)=\frac{1}{4\pi D_\mathtt{A} t}\exp(-\frac{x^2+y^2}{4 D_\mathtt{A} t}),~~t\geq 0.
 \end{align}
 Similarly, $\phi_{\mathtt{B}}(x,y,t)$ and $\phi_{\mathtt{C}}(x,y,t)$ are the impulse responses of the heat equations with diffusion coefficients $D_\mathtt{B}$ and $D_\mathtt{C}$ respectively. The solutions of \eqref{eq:ex2:A:01},\eqref{eq:ex2:A:2},\eqref{eq:ex2:C},  and \eqref{eq:ex2:B} are:
\begin{align}
 	&[\mathtt{A}]_0=\phi_\mathtt{A}**f_{\mathtt{A}},~[\mathtt{C}]_0=
 	\phi_\mathtt{C}**(f_{\mathtt{C}}+D_{\mathtt{C}}\nabla^{2}\mathcal{I}_{\text{int}})+\mathcal{I}_{\text{int}},
 	~[\mathtt{B}]_0=\phi_\mathtt{B}**f_{\mathtt{B}},\label{ex2:final:sol:1}\\&[\mathtt{A}]_1=- \phi_\mathtt{A}**([\mathtt{A}]_0 [\mathtt{C}]_0),
~[\mathtt{C}]_1=-\phi_\mathtt{C}**(\gamma[\mathtt{B}]_0 [\mathtt{C}]_0^{\beta}+[\mathtt{C}]_0 [\mathtt{A}]_0),\label{ex2:final:sol:2}\\
 	&[\mathtt{A}]_2=-\phi_{\mathtt{A}}**([\mathtt{A}]_0 [\mathtt{C}]_1+ [\mathtt{A}]_1   [\mathtt{C}]_0).\label{ex2:final:sol:3}
 \end{align}
 The density of molecule $\mathtt{A}$ is equal to:
 \begin{align}\label{ex2:Asol}
 [\mathtt{A}](x,y,t)= [\mathtt{A}]_0+\lambda [\mathtt{A}]_1+\lambda^{2} [\mathtt{A}]_2 +\mathcal{O}(\lambda^3)
 \end{align} 
 For low  reaction rates, we can approximate \eqref{ex2:Asol} as follows:
  \begin{align}\label{ex2:Asol2}
[\mathtt{A}](x,y,t)\approx [\mathtt{A}]_0+\lambda [\mathtt{A}]_1+\lambda^{2} [\mathtt{A}]_2(x,y,t)
 \end{align} 
 Using the above model, we design a modulation scheme in Section \ref{mod:ex2}.
 
 \subsection{Modulation Design For Example 2}\label{mod:ex2}

 Consider a communication scenario consisting of a transmitter and a receiver. The transmitter releases molecules of type $\mathtt{A}$ and $\mathtt{B}$ into the medium to encodes a message $M_{\mathtt{A}}\in\{0,1\}$. The concentration of the released molecules of type $\mathtt{A}$ is described by the input signal $f_{\mathtt{A}}^{i}(x,y,t) =a_i(t)\delta(x)\delta(y)$,$0\le t\le T$ for $i=0,1$. In other words, the density of released molecules of type $\mathtt{A}$ at time $t$ is $a_{M_{\mathtt{A}}}(t)$ if  message $M_{\mathtt{A}}$ is transmitted for $t\in[0,T]$. Similarly,  $f_{\mathtt{B}}(x,y,t)=b_{M_{\mathtt{A}}}(t)\delta(x)\delta(y)$ is the released concentration of molecules of type $\mathtt{B}$ where $b_0(t)$ and $b_1(t)$ are two nonnegative waveforms for $t\in[0,T]$. 
 The total amount of released molecules of type $\mathtt A ,\mathtt B $ during the transmission period $T$ is assumed to be at most $s_{\mathtt A},s_{\mathtt B}$ respectively, \emph{i.e.,}
 \begin{align}\int_{0}^T a_i(t)dt\leq s_{\mathtt A},~\int_{0}^T b_i(t)dt\leq s_{\mathtt B} \quad i=0,1.\label{eqn:power2a}
 \end{align}
 Finally, we assume that molecules of type $\mathtt C$ are being continuously generated throughout the medium  according to some function $f_{\mathtt{C}}(x,y,t)$ for $x,y\in\mathbb{R}$, $t\in[0,T]$. The function  $f_{\mathtt{C}}(x,y,t)$ and the initial density  $[\mathtt{C}](x,y,0)$ at time $t=0$ are assumed to be completely known to the receiver and the transmitter. 
 
 Receiver samples the number of molecules of type $\mathtt{A}$ at $(x=0,y=d,t=T)$. Similar to Example 1,  receiver gets a number from a Poisson distribution with parameter $[\mathtt{A}](0,d,T)$.
 As before, the probability of error is defined as 
 \begin{equation}
 Pr(e)=Pr\{ M_{\mathtt{A}}\neq \hat{M}_\mathtt{A}\},
 \end{equation}
 where $\hat{M}_\mathtt{A}\in\{0,1\}$ is the receiver's decoded message bit. Similar to Example 1, we wish to minimize the error probability by choosing the best possible waveforms $a_i(t)$ and $b_j(t)$.
 
 The following theorem states that in the low reaction rate regime (for the approximate reaction-diffusion equations when we consider  first terms in the Taylor series expansion), one possible  optimal waveforms $a_i(t)$ and $b_i(t)$ that achieve minimum probability of error are as follows:
 \begin{theorem}\label{th::2}
 	For any given noise source $f_{\mathtt{C}}(x,y,t)$, and any arbitrary initial condition $\mathcal{I}_{int}(x,y)$, one  choice for optimal order one waveforms $a_i(t),b_i(t)$ are as follows:
 	\begin{align}
 	&a_0(t)=0,~a_1(t)= s_\mathtt{A}\delta(t-t^{[a_1]}),~b_0(t)=0,~b_1(t)=s_{\mathtt{B}}\delta(t-t^{[b_1]}). 
 	\end{align}
 	for some  		$t^{[a_1]},t^{[b_1]}\in [0,T]$ which depend on
 	$f_{\mathtt{C}}(x,y,t)$, and  $\mathcal{I}_{int}(x,y)$.
 \end{theorem}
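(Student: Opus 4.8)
The plan is to mirror the proof of Theorem \ref{th::1}: first reduce the error-probability objective to a statement about the single scalar observed at the receiver, and then exploit the (nearly) linear dependence of that scalar on the waveforms to force the optimizers onto extreme points of the power-constrained waveform sets. The receiver observes $\mathsf{Poisson}(V[\mathtt{A}](0,d,T))$ and decides between $M_{\mathtt{A}}=0$ and $M_{\mathtt{A}}=1$. Writing $\rho_0$ and $\rho_1$ for the approximate concentration $[\mathtt{A}](0,d,T)$ of \eqref{ex2:Asol2} produced by the two hypotheses, the error probability is $P_e(V\rho_0,V\rho_1)$, which by Lemma \ref{decrese:error:pro} is minimized by making the separation $|\rho_1-\rho_0|$ as large as possible. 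The crucial structural fact, paralleling the sufficient-statistic observation in Theorem \ref{th::1}, is that $\rho_0$ depends only on $(a_0,b_0)$ and $\rho_1$ only on $(a_1,b_1)$, so the two can be optimized independently. First I would minimize $\rho_0$: since every term $[\mathtt{A}]_0,[\mathtt{A}]_1,[\mathtt{A}]_2$ in \eqref{ex2:final:sol:1}--\eqref{ex2:final:sol:3} carries a factor $[\mathtt{A}]_0=\phi_{\mathtt{A}}**f_{\mathtt{A}}$, the choice $a_0(t)=0$ forces the whole approximation to vanish, giving $\rho_0=0$, which is the global minimum because concentrations are nonnegative; $b_0$ then has no effect and may be set to $0$.

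It remains to maximize $\rho_1=[\mathtt{A}](0,d,T)$ over $(a_1,b_1)$ subject to $\int_0^T a_1\le s_{\mathtt{A}}$ and $\int_0^T b_1\le s_{\mathtt{B}}$. Fixing $a_1$, the only $b_1$-dependence enters through $[\mathtt{B}]_0=\phi_{\mathtt{B}}**f_{\mathtt{B}}$ inside the term $\gamma[\mathtt{C}]_0^{\beta}[\mathtt{B}]_0$ of $[\mathtt{C}]_1$, which propagates linearly into $[\mathtt{A}]_2$; tracing the signs, the resulting contribution to $\rho_1$ is $\lambda^2\big(\phi_{\mathtt{A}}**([\mathtt{A}]_0\cdot\phi_{\mathtt{C}}**(\gamma[\mathtt{C}]_0^{\beta}[\mathtt{B}]_0))\big)(0,d,T)$, a \emph{linear} functional of $b_1$ with a nonnegative kernel (all of $\phi_{\mathtt{A}},\phi_{\mathtt{C}},[\mathtt{A}]_0,[\mathtt{C}]_0^{\beta},\gamma$ are nonnegative). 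Maximizing such a functional over $\{b_1\ge0:\int b_1\le s_{\mathtt{B}}\}$ is a bang-bang problem: the maximizers lie at the extreme points of that set, namely $0$ and $s_{\mathtt{B}}\delta(t-t)$ point masses, and positivity of the kernel rules out the zero measure, yielding $b_1=s_{\mathtt{B}}\delta(t-t^{[b_1]})$ with $t^{[b_1]}$ maximizing the kernel.

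For $a_1$ (with $b_1$ fixed as above), the leading $O(1)$ and $O(\lambda)$ dependence of $\rho_1$, coming from $[\mathtt{A}]_0$, $[\mathtt{A}]_1$, and the bilinear part of $[\mathtt{A}]_2$, is again a linear functional of $a_1$, so the same extreme-point argument produces $a_1=s_{\mathtt{A}}\delta(t-t^{[a_1]})$. It is worth stressing why a \emph{single} delta suffices here while two were required in Theorem \ref{th::1}: there the waveform $a_0$ influenced two quantities $\rho_{00},\rho_{01}$ and had to \emph{preserve} both, forcing support two via Lemma \ref{SupportLemma}; here each message's waveform influences only the single quantity $\rho_1$, and we are \emph{maximizing} it rather than preserving several constraints, so bang-bang collapses the support to one.

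The main obstacle is the one genuinely nonlinear term: the pure-quadratic-in-$a_1$ part of $[\mathtt{A}]_2$, equal to $\lambda^2\big(\phi_{\mathtt{A}}**([\mathtt{A}]_0\cdot\phi_{\mathtt{C}}**([\mathtt{C}]_0[\mathtt{A}]_0))\big)(0,d,T)$, which encodes $\mathtt{A}$ depleting $\mathtt{C}$ and thereby protecting itself. This makes $\rho_1$ a quadratic rather than linear functional of $a_1$, and the associated symmetric bilinear form is not positive semidefinite on signed measures: its kernel is a Hadamard product of the positive-definite heat kernel $\phi_{\mathtt{C}}$ with an indefinite rank-two factor built from $\phi_{\mathtt{A}}$ and $[\mathtt{C}]_0$, so a clean convexity argument does not by itself force the optimum onto a vertex. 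I would resolve this exactly where the theorem is stated, namely the low-reaction-rate regime, in which this correction is $O(\lambda^2)$ while the dominant linear-in-$a_1$ term is $O(1)$; the maximizer of the truncated objective is then governed by the linear functional and attained at the single-delta extreme point $s_{\mathtt{A}}\delta(t-t^{[a_1]})$, with $t^{[a_1]},t^{[b_1]}$ the maximizers of the respective kernels, which depend on $f_{\mathtt{C}}$ and $\mathcal{I}_{\text{int}}$ through $[\mathtt{C}]_0$. Making this dominance step fully rigorous, so that the $O(\lambda^2)$ self-interaction provably cannot displace the optimizer from a vertex, is the single point demanding care beyond the linear bang-bang reasoning.
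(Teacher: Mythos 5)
Your route coincides with the paper's in all of its main steps: reduce the problem to a two-hypothesis Poisson test and invoke Lemma \ref{decrese:error:pro}, note that $\rho_0$ and $\rho_1$ decouple so that $a_0=b_0=0$ gives $\rho_0=0$, and then collapse $a_1$ and $b_1$ to single full-power deltas by a linear-functional argument. Your ``bang-bang'' step is the same Caratheodory-type fact that the paper packages as Lemma \ref{SupportLemma} with $n=1$, merely phrased as ``the maximum of a linear functional over the power-constrained set sits at an extreme point'' rather than as ``any waveform can be replaced by one delta preserving $\rho_1$ with no more power''; the two are interchangeable here. Your sign-tracing of the $b_1$-contribution through $[\mathtt{C}]_1$ into $[\mathtt{A}]_2$, showing it is linear with a nonnegative kernel for fixed $a_1$, matches the paper's observation exactly.

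The one genuine divergence is where you locate the difficulty, and there you have left a gap the paper does not have. You optimize $a_1$ against the full order-$\lambda^2$ truncation, which contains the quadratic-in-$a_1$ self-interaction term of $[\mathtt{A}]_2$, and you correctly observe that an indefinite quadratic perturbation of a linear objective need not keep the maximizer at a vertex; your concluding ``dominance'' claim is asserted, not proved. The paper never meets this term: by the definition in Section \ref{Sec:Mod}, an order-$n$ optimal waveform optimizes the objective \emph{truncated at order $\lambda^n$}, and the proof of Theorem \ref{th::2} optimizes $a_1$ only to order one --- where $\rho_1=[\mathtt{A}]_0+\lambda[\mathtt{A}]_1$ is exactly linear in $a_1$ and independent of $b_1$ --- while optimizing $b_1$ to order two with $a_1$ held fixed, where the dependence is exactly linear with a positive kernel. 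The troublesome quadratic term is thus excluded by the definition of the optimality being claimed rather than controlled by an estimate. If you adopt that convention, your argument is complete as written; if you insist on order-two optimality for $a_1$, the dominance step you flag is a real gap and would require a quantitative argument (e.g., that for $\lambda$ small the perturbed maximizer remains a single atom), which neither you nor the paper supplies.
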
 
 \begin{proof}\label{AppF}
 	Substituting $f_\mathtt{A}^{i}(x,t),f_\mathtt{B}^{i}(x,t),i=0,1$ in \eqref{recu-eq1}-\eqref{recu-eq3}, one obtains an explicit formula for the (first-order approximation of the) concentration of molecules $\mathtt{A}$ in terms of the input signals. In particular, the concentration $[\mathtt{A}]$ at receiver's location at the sampling time $T$ is as follows:
 	\begin{equation*}\label{rho::ex2}
 	\begin{aligned}
 	[\mathtt{A}](d,0,T)=&[\mathtt{A}]_{0}(d,0,T)+\lambda
 	[\mathtt{A}]_{1}(d,0,T)+\lambda^2[\mathtt{A}]_{2}(d,0,T).\\
 	\end{aligned}
 	\end{equation*}
 	
 	Observe from \eqref{ex2:final:sol:1}, \eqref{ex2:final:sol:2}, and \eqref{ex2:final:sol:3} that $[A]_{0}(x,y,t)$ and $[A]_{1}(x,y,t)$ are linear functions of  $a_{M_{\mathtt{A}}}(t)$ and do not depend on $b_{M_{\mathtt{A}}}(t)$. However, $[A]_{2}(x,y,t)$ depends on both $a_{M_{\mathtt{A}}}(t)$ and  $b_{M_{\mathtt{A}}}(t)$. It is a nonlinear function of $a_{M_{\mathtt{A}}}(t)$. However, if we fix $a_{M_{\mathtt{A}}}(t)$, $[A]_{2}(x,y,t)$ is a linear and positive function in $b_{M_{\mathtt{A}}}(t)$.
 	
 	Let
 	\begin{equation}\label{densitiees::ex2:final}
 	\rho=\left\{ {\begin{array}{*{20}{llll}}
 		\rho_{0}=[\mathtt{A}](d,0,T)& & \text{if}&M_{\mathtt{A}}=0\\
 		\rho_{1}=[\mathtt{A}](d,0,T)& & \text{if}&M_{\mathtt{A}}=1.\\
 		\end{array}}\right.
 	\end{equation}
 	
 	Receiver is assumed to be transparent and due to the particle counting noise, it observes a sample from 
 	$\mathsf{Poisson}(V\rho_{
 		M_{\mathtt{A}}})
 	$
 	for some constant $V$. 
 	Since the effect of molecule $\mathtt{B}$  appears in $\lambda^{2}$ coefficient in the expansion $[\mathtt{A}]=\sum_{=i=0}^{\infty}\lambda^{i}[A]_{i}(x,y,t)$ (i.e, $[\mathtt{A}]_{2}(d,0,T)$), we find optimal waveform of order two for $b_{M_{\mathtt{A}}}(t)$. For $a_{M_{\mathtt{A}}}(t)$  we find optimal waveform of order one. 
 	According to Lemma \ref{decrese:error:pro} in Appendix \ref{AppB}, 
 	the error probability is a decreasing function of 
 	$|\rho_1-\rho_0|$, hence by minimizing  $\rho_0$ and maximizing $\rho_1$ the optimal solution can be obtained. Since $\rho_0\geq 0$, we set $a_0(t),b_0(t)=0$ to minimize $\rho_0$ and we get $\rho_0=0$. Fix some $a_1(t)$, and let $b_{1}^{\text{opt}}(t)$ be a maximizer of $\rho_1$. Since the expression for
 	$\rho_1(a_1(t),b_{1}(t))$, up to its second-order term, is linear in $b_1(t)$ for a fixed value of $a_1(t)$, according to Lemma \ref{SupportLemma} there is waveform of the form $b_{1}^{*}(t)=\hat{b}_{1}\delta(t-t^{[b_1]})$  with a total power \eqref{eqn:power2a} less than or equal to the power of $b_{1}^{\text{opt}}(t)$
 	such that (up to the second-order terms), we have $\rho_1(a_1(t),b_{1}^{\text{opt}}(t))=\rho_1(a_1(t),b_{1}^{*}(t))$. 
 	In other words, replacing $b_{1}^{\text{opt}}(t)$ with $b_{1}^{*}(t)$ would not change the value of $\rho_1$, or the error probability.
 	
 	As
 	$b_{1}^{*}(t)=\hat{b}_{1}\delta(t-t^{[b_1]})$ has a total power constraint $s_{\mathtt{B}}$, we obtain $b_{1}\leq s_{\mathtt{B}}$.
 	Since 
 	$\rho_1(a_1(t),b_{1}^{*}(t))$ is an increasing function in $\hat{b}_{1}$,  we deduce that $\hat{b}_1=s_{\mathtt{B}}$ maximizes $\rho_1$.
 	
 	Next, in order to obtain the first-order optimal waveform for 
 	$a_1(t)$, we recall that $\rho_1$ up to the first-order  is a linear function of  only $a_1(t)$. According to Lemma \ref{SupportLemma} for any arbitrary $a_{1}(t)$ there is some
 	${a}_{1}^{*}(t)=\hat{a}_{1}\delta(t-t^{[a_1]})$ with a total power  less than or equal to the power of $a_1(t)$
 	such that (up to order one terms) $\rho_{1}(a_1(t))=\rho_{1}(a_{1}^{*}(t))$. Then, as above the power constraint implies that $\hat{a}_{1}\leq s_{\mathtt{A}}$.  Since (up to order one terms), $\rho_{1}(a_{1}^{*}(t))$ is an increasing function with respect to $\hat{a}_{1}$
 	we deduce $\hat{a}_{1}=s_{\mathtt{A}}$. That completes the proof.
 \end{proof}	
 In order to determine parameters in the statement of Theorem \ref{th::2} we need optimize over constants $t^{[a_1]},t^{[b_1]}$. 
 As an example, consider  the values of system parameters as follows in Table \ref{tabex2}.
 \begin{table}
  	\centering
  	\caption{Parameters}
	 \label{tabex2}
\begin{tabular}{ |c|c|c| } 

	\hline
	\multirow{2}{*}{$(D_{\mathtt{A}},D_{\mathtt{B}},D_{\mathtt{C}})[m^{2}s^{-1}]$}   &\multirow{2}{*} {$10^{-10}\times(10,1.1,1)$}\\
	&\\
	\hline 
	\multirow{2}{*} {$\lambda[molecules^{-1}.m^{3}.s^{-1}]$}   & \multirow{2}{*}{$ 10^{-23}$}\\
	&\\
	\hline
	\multirow{2}{*}{$V[m^{3}]$}   &\multirow{2}{*} {$2.5\times10^{-7}$}\\
	&\\ 
	\hline
	\multirow{2}{*}{$(T[s],\beta,\gamma)$ }& \multirow{2}{*}{$(10,2,1)$}  \\ 
	&\\
	\hline
	
	\multirow{2}{*}{$(d_{\mathtt{B}} [m],d_{\mathtt{c}} [m])$} & \multirow{2}{*}{$5\times 10^{-5}\times((1,1),(10,0))$ } \\
	&\\ 
	\hline
	\multirow{2}{*}{$(\mathcal{I}_{int}(x,y),f_{mathtt{c}}(x,y,t))$} & \multirow{2}{*}{$(0,4\times 10^{4}\delta(x-5\times 10^{5})\delta(y-5\times 10^{5})\delta(t))$} \\
	&\\
		\hline
	\multirow{2}{*}{$s_{\mathtt{A}}[molecules.m^{-3}],s_{\mathtt{B}}[molecules.m^{-3}])$} & \multirow{2}{*}{$10^8\times(5,24)$} \\
	&\\
	\hline 
\end{tabular}
\end{table}
 Simulation results yield the optimal values for 
 $t^{[a_1]},t^{[b_1]}$ as $t^{[a_1]}=5.62,t^{[b_1]}=0$.
 Also, when we vary $s_{\mathtt A}\in [500,5\times 10^{11}]$ while we fix the other parameters, we observe that $t^{[b_1]}$ remains zero. In other words, it is best to release molecules of type $\mathtt B$ as soon as interfering molecules of type $\mathtt C$ are released. Moreover, the value of
 $t^{[a_1]}$ is almost constant as we vary $s_{\mathtt A}$. 

 \subsection{Example 3: Reaction for Two-way Communication }
 Finally, our third example considers two transceivers (who are able to both transmit and receive signals) in a three-dimensional setting.  
Consider two molecular transceivers $\mathsf T_1$ and $\mathsf T_2$. The  transceiver $\mathsf T_1$ is able to release molecules of type $\mathtt{A}$ and receive molecules of type $\mathtt{B}$. On the other hand, the  transceiver $\mathsf T_2$ is able to release molecules of type $\mathtt{B}$ and receive molecules of type $\mathtt{A}$. If there is no reaction between $\mathtt{A}$ and $\mathtt{B}$, we have two distinct  directional channels between the two transceivers (e.g., one channel is formed by $\mathsf T_1$ releasing molecules of type $\mathtt{A}$ and $\mathsf T_2$ receiving them). However, if $\mathtt{A}$ reacts with $\mathtt{B}$ in the medium, the two channels become entangled.  This would impact the transmission strategy of the  two transceivers if they wish to establish a two-way communication channel, and send and receive messages at the same time. The chemical reaction would weaken both signals  in this case. Assume that
\begin{equation}
\ce{\mathtt{A}  + \mathtt{B} ->[\lambda] \mathtt{P}}.
\end{equation}
 The following equations describe the dynamic of the system:
 \begin{align}
	&\frac{\partial[\mathtt{A}]}{\partial t}=D_\mathtt{A}\nabla^2[\mathtt{A}]-\lambda~[\mathtt{A}][\mathtt{B}]+f_\mathtt{A}(\vec{x},t),~\frac{\partial[\mathtt{B}]}{\partial t}=D_\mathtt{B}\nabla^2[\mathtt{B}]-\lambda~[\mathtt{A}][\mathtt{B}]+f_\mathtt{B}(\vec{x},t),\label{ex3:eq}
\end{align}
where $f_\mathtt{A}(\vec{x},t),f_\mathtt{B}(\vec{x},t)$ are input signals of the two transceivers. The  initial conditions are set to zero, and we allow diffusion in the entire $\mathbb{R}^3$ with no boundaries. As before, for 
$\lambda=0$ the system of equations is linear and analytically solvable.
Consider a solution of the following form:
\begin{align}\label{teylorex:2N}
&[\mathtt{A}](\vec{x},t)=\sum_{i=0}^{\infty}\lambda^{i}[\mathtt{A}]_{i}(\vec{x} ,t),~[\mathtt{B}](\vec{x},t)=\sum_{i=0}^{\infty}\lambda^{i}[\mathtt{B}]_{i}(\vec{x} ,t).
\end{align}
By matching the coefficients of $\lambda^0,\lambda^1$ on the both side of equations, we can find a solution for in low reaction rate regime.
By matching the constant terms, we obtain
\begin{align}\label{ex3:firstterm}
	&\frac{\partial[\mathtt{A}]_0}{\partial t}=D_\mathtt{A}
	\nabla^{2}[\mathtt{A}]_0 +f_\mathtt{A}(\vec{x},t),~\frac{\partial[\mathtt{B}]_0}{\partial t}=D_\mathtt{B}\nabla^{2}[\mathtt{B}]_0 +f_\mathtt{B}(\vec{x},t).
\end{align}
By matching the coefficients of $\lambda$, we get
\begin{align}\label{ex3:secondterm}
	&\frac{\partial[\mathtt{A}]_1}{\partial t}=D_\mathtt{A}
	\nabla^{2}[\mathtt{A}]_1 -[\mathtt{A}]_{0}[\mathtt{B}]_{0},~\frac{\partial[\mathtt{B}]_1}{\partial t}=D_\mathtt{B}\nabla^{2}[\mathtt{B}]_1 -[\mathtt{A}]_{0}[\mathtt{B}]_{0}.
\end{align} 
The impulse response for heat equation is as follows:
\begin{equation}\label{greenthree}
\phi_\mathtt{A}(\vec{x},t)=\frac{1}{(4\pi D_\mathtt{A}t)^{\frac{3}{2}} }\exp(-\frac{\lVert\vec{x}\rVert_{2}^{2}}{4 D_\mathtt{A} t}),  ~~\vec{x}\in \mathbb{R}^{3}~,t\geq 0.
\end{equation}
 The solution of \eqref{ex3:firstterm},
 \eqref{ex3:secondterm} are given in the following form.
\begin{align}\label{ex:3:final:density}
	&[\mathtt{A}]_0=\phi**f_\mathtt{A},~
	[\mathtt{B}]_0=\phi**f_\mathtt{B},~
	[\mathtt{A}]_1=- \phi_{\mathtt{A}}**([\mathtt{A}]_0 [\mathtt{B}]_0),~
    [\mathtt{B}]_1=- \phi_{\mathtt{B}}**([\mathtt{A}]_0 [\mathtt{B}]_0).
	\end{align}
This results in the following solution for low reaction rates:
 \begin{align}\label{ex3::finalsol}
 	 &[\mathtt{A}](\vec{x},t)= [\mathtt{A}]_0(\vec{x},t)+\lambda [\mathtt{A}]_1(\vec{x}, ,t)+\mathcal{O}(\lambda^2),~[\mathtt{B}](\vec{x},t)= [\mathtt{B}]_0(\vec{x},t)+\lambda [\mathtt{B}]_1(\vec{x}, ,t)+\mathcal{O}(\lambda^2).
 	 \end{align}
 Using these equations, we design a modulation scheme in Section \ref{mod:ex3}.
 \subsection{Design of Modulation For Example 3}\label{mod:ex3}
 Suppose transceivers $ \mathsf T_{\mathtt {A}}, \mathsf T_{\mathtt {B}}$ encode messages $M_{\mathtt{A}},M_{\mathtt{B}}\in\{0,1\}$ with input signals $f_{\mathtt{A}}^{i}(\vec{x},t) =a_i(t)\delta(\vec{x}),f_{\mathtt{B}}^{i}(\vec{x},t)  =b_i(t)\delta(\vec{x}-\vec{d_{\mathtt{B}}})$, for $0\le t\le T$ and $i=0,1$,
 where $\vec{d_\mathtt{B}}=(d,0,0)
 $ is the location of $\mathsf T_{\mathtt {B}}$. In other words, transceiver $ \mathsf T_{\mathtt {A}}$ releases a concentration of $a_{M_{\mathtt{A}}}(t)$ at origin, and 
 transceiver $ \mathsf T_{\mathtt {B}}$ releases a concentration of $b_{M_{\mathtt{B}}}(t)$ at its location $\vec{d_\mathtt{B}}$. As before, we restrict the total amount of released molecules of types $\mathtt A$ and $\mathtt B$ during the transmission period $T$ as follows:
 \begin{align}\int_{0}^T a_i(t)dt\leq s_{\mathtt A},~\int_{0}^T b_j(t)dt\leq s_{\mathtt B} \quad i=0,1.\label{eqn:power3a}
 \end{align}
 
 Transceiver $\mathsf{T_{\mathtt {A}}}$ samples the medium for molecules of type $\mathtt {A}$ at the end of the time slot at time $t=T$, and uses its observation to decode the message $\hat{M}_{\mathtt{B}}$. Similarly, transceiver 
 $\mathsf{T_{\mathtt {B}}}$ decodes the message $\hat{M}_{\mathtt{A}}$ using its sample at time $T$.
 Four error probabilities could be considered in our problem: since transceiver $ \mathsf T_{\mathtt {B}}$ knows transmitted message $M_{\mathtt{B}}$, for this transceiver we can consider error probabilities 
 \begin{align*}&J_1:= Pr(M_{\mathtt{A}}\neq \hat M_{\mathtt{A}}|M_{\mathtt{B}}=0),
 ~J_2:= Pr(M_{\mathtt{A}}\neq \hat M_{\mathtt{A}}|M_{\mathtt{B}}=1)\end{align*}
 Similarly, for transceiver $ \mathsf T_{\mathtt {A}}$, we can consider 
 \begin{align*}&J_3:= Pr(M_{\mathtt{B}}\neq \hat M_{\mathtt{B}}|M_{\mathtt{A}}=0),~
 J_4:= Pr(M_{\mathtt{B}}\neq \hat M_{\mathtt{B}}|M_{\mathtt{A}}=1)\end{align*}
 We would like to make  $J_1, J_2, J_3, J_4$ as small as possible 
 by choosing optimal waveforms $a_i(t)$ and $b_j(t)$. Since there is a tradeoff between minimizing $J_1, J_2, J_3$ and $J_4$, we can choose a cost function
 $\mathcal{H}:[0,1]^{4}\rightarrow \mathbb{R}$ 
 and minimize 
 $\mathcal{H}(J_{1},J_{2},J_{3},J_{4})$. While we could choose any arbitrary cost function, for simplicity of exposition, we adopt $\mathcal{H}(J_1,J_2,J_3,J_4)=\sum_{i=1}^{4}\omega_{i}\log(J_{i})
 $ for some constants $\omega_{i}, i=1, \cdots,4$. This particular choice for $\mathcal{H}$ leads to very simple  optimal waveforms $a_i(t)$ and $b_j(t)$ in the low reaction regime as shown in the following theorem (other choices for $\mathcal{H}$ result in more delta terms in the optimal waveform). 
 \begin{theorem}\label{th::3}
 	One  choice for optimal order one waveforms $a_i(t), b_i(t)$ is as follows:
 	\begin{align}
 	&a_0(t)=0,~a_1(t)=\hat{a}_{1}\delta(t-t^{[a_1]}),~b_0(t)=0,~b_1(t)=\hat{b}_{1}\delta(t-t^{[b_1]}).
 	\end{align}
 	for some $\hat{a}_{1},\hat{b}_{1},t^{[a_1]},t^{[b_1]}$. 
 \end{theorem}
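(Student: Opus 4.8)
The plan is to follow the template of the proof of Theorem~\ref{th::2}, but to exploit the special structure of the logarithmic cost $\mathcal{H}$ to collapse each optimal waveform to a single atom. First I would record the sampled concentrations from the first-order solution \eqref{ex:3:final:density}. For $M_{\mathtt A}=i$, $M_{\mathtt B}=j$, the order-one concentration of $\mathtt A$ measured by $\mathsf T_{\mathtt B}$ at $\vec{d_{\mathtt B}}$ and of $\mathtt B$ measured by $\mathsf T_{\mathtt A}$ at the origin are
\begin{equation}
\rho^{\mathtt A}_{ij}=\ell_{\mathtt A}(a_i)-\lambda\,\beta_{\mathtt A}(a_i,b_j),\qquad
\rho^{\mathtt B}_{ij}=\ell_{\mathtt B}(b_j)-\lambda\,\beta_{\mathtt B}(a_i,b_j),
\end{equation}
where $\ell_{\mathtt A}(a)=\big(\phi_{\mathtt A}**(a(t)\delta(\vec{x}))\big)(\vec{d_{\mathtt B}},T)$ is a positive linear functional of the waveform and $\beta_{\mathtt A}(a,b)=\big(\phi_{\mathtt A}**([\mathtt A]_0[\mathtt B]_0)\big)(\vec{d_{\mathtt B}},T)\ge 0$ is bilinear and nonnegative, with $\ell_{\mathtt B},\beta_{\mathtt B}$ defined symmetrically at the origin. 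The four conditional errors are Poisson tests, $J_1=P_e(\rho^{\mathtt A}_{00},\rho^{\mathtt A}_{10})$, $J_2=P_e(\rho^{\mathtt A}_{01},\rho^{\mathtt A}_{11})$, $J_3=P_e(\rho^{\mathtt B}_{00},\rho^{\mathtt B}_{01})$ and $J_4=P_e(\rho^{\mathtt B}_{10},\rho^{\mathtt B}_{11})$, each a decreasing function of the corresponding gap $|\rho_1-\rho_0|$ by Lemma~\ref{decrese:error:pro}.

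Second, I would argue that some optimal solution has $a_0\equiv b_0\equiv 0$. Setting the ``bit-$0$'' waveforms to zero makes $\ell_{\mathtt A}(a_0)=\ell_{\mathtt B}(b_0)=0$ and kills every reaction term containing $a_0$ or $b_0$. To leading (zeroth) order in $\lambda$ the four gaps reduce to $|\ell_{\mathtt A}(a_1-a_0)|$ (for $J_1,J_2$) and $|\ell_{\mathtt B}(b_1-b_0)|$ (for $J_3,J_4$); since $a_i\ge0$ with $\int a_i\le s_{\mathtt A}$, these are maximized by taking one of the two waveforms to vanish, so $a_0=b_0=0$ (together with a delta for $a_1,b_1$) is a leading-order maximizer. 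I expect this reduction to be the main obstacle: the reaction couples the two links, so the gap in $J_1$ depends on $b_0$ through the bilinear penalty on \emph{both} $\rho^{\mathtt A}_{00}$ and $\rho^{\mathtt A}_{10}$ (and symmetrically $J_3$ on $a_0$), and the sign of the $O(\lambda)$ change is only guaranteed favourable once the zeroth-order linear terms dominate. I would therefore carry out this step in the low-reaction regime, where the $O(\lambda)$ corrections enter only as nonnegative penalties proportional to $a_0$ or $b_0$, so that zeroing them cannot increase $\mathcal{H}$.

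Third, with $a_0=b_0=0$ the null concentrations vanish, $\rho^{\mathtt A}_{00}=\rho^{\mathtt A}_{01}=\rho^{\mathtt B}_{00}=\rho^{\mathtt B}_{10}=0$, and the MAP rule of Appendix~\ref{AppB} applied against a zero mean gives the closed form $P_e(0,\rho)=\tfrac{1}{2}e^{-\rho}$, hence $\log J_1=\text{const}-\rho^{\mathtt A}_{10}$ and likewise for $J_2,J_3,J_4$. This is the key simplification afforded by the logarithmic cost, and it linearizes the objective:
\begin{equation}
\mathcal{H}=\text{const}-\big(\omega_1\rho^{\mathtt A}_{10}+\omega_2\rho^{\mathtt A}_{11}+\omega_3\rho^{\mathtt B}_{01}+\omega_4\rho^{\mathtt B}_{11}\big).
\end{equation}
Substituting the expressions above, minimizing $\mathcal{H}$ is equivalent to maximizing
\begin{equation}
(\omega_1+\omega_2)\,\ell_{\mathtt A}(a_1)+(\omega_3+\omega_4)\,\ell_{\mathtt B}(b_1)-\lambda\omega_2\,\beta_{\mathtt A}(a_1,b_1)-\lambda\omega_4\,\beta_{\mathtt B}(a_1,b_1).
\end{equation}

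Finally, I would extract the single-delta structure. For fixed $b_1$ the last display is a single linear functional of $a_1$, namely $\int_0^T a_1(\tau)K(\tau)\,d\tau$ for a continuous kernel $K$; over the feasible set $\{a_1\ge 0,\ \int_0^T a_1\le s_{\mathtt A}\}$ a linear functional attains its maximum at an extreme point, which is a single scaled Dirac delta $\hat a_1\delta(t-t^{[a_1]})$ (the $n=1$ instance of the mechanism behind Lemma~\ref{SupportLemma}, there being no equality constraints to preserve). Replacing $a_1$ by this atom does not decrease the objective, so it remains optimal; fixing it and repeating the argument in $b_1$ yields $b_1=\hat b_1\delta(t-t^{[b_1]})$. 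Combined with $a_0=b_0=0$ this is exactly the asserted form, and optimizing over the finitely many remaining parameters $\hat a_1,\hat b_1,t^{[a_1]},t^{[b_1]}$ completes the proof.
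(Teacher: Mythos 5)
Your proposal is correct and follows essentially the same route as the paper's proof: zero out the bit-$0$ waveforms so the null hypotheses have zero mean, use $P_e(0,\rho)=\tfrac12 e^{-\rho}$ to turn $\mathcal{H}=\sum_i\omega_i\log J_i$ into a bilinear functional of $(a_1,b_1)$, and then reduce each of $a_1,b_1$ in turn to a single Dirac atom via the support/extreme-point argument of Lemma \ref{SupportLemma}. Your treatment is in fact slightly more careful than the paper's at the step $a_0=b_0=0$ (you note the cross-link coupling through the nonnegative bilinear reaction penalty, which the paper passes over), but this is a refinement of the same argument, not a different one.
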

 \begin{proof}
 	By substituting $f_{\mathtt{A}}^{0}(\vec{x},t),f_{\mathtt{A}}^{1}(\vec{x},t),f_{\mathtt{B}}^{0}(\vec{x},t)$ and  $f_{\mathtt{B}}^{1}
 	(\vec{x},t)$ in \eqref{ex:3:final:density},
 	\eqref{ex3::finalsol}, the density of molecule $\mathtt{A}$ at  $\vec{d}_{\mathtt{B}}$ and
 	density of molecule $\mathtt{B}$ at $\vec{d}_{\mathtt{A}}=\vec{0}$  have an explicit formula with respect to input signals.
 	The concentration  of molecule of type $\mathtt{A}$
 	at location of $\mathsf{T_{\mathtt {B}}}$  at the sampling time $T$ is $	[\mathtt{A}](\vec{d}_
 	{\mathtt{B}},T)=\zeta_1-\zeta_2,$ where:
 	\begin{align}
 	\zeta_1=\int_{0}^{T}
 	\phi_{\mathtt{A}}(\vec{d}_{\mathtt{B}},T-t^{'})a_{M_\mathtt{A}}(t^{'})dt^{'},
 	\end{align}
 	and
 	\begin{align}
 	\zeta_2=
 	\lambda
 	\int_{0}^{T}
 	\int_{\mathbb{R}^3}\int_{0}^{t'}
 	\int_{0}^{t'}
 	&\phi_{\mathtt{A}}(\vec{d}_{\mathtt{B}}-\vec{x}',T-t')\phi_{\mathtt{A}}(\vec{x}',t'-t_1) 
 	\phi_{\mathtt{B}}(\vec{x}'-\vec{d}_{\mathtt{B}},t'-t_2)
 	\nonumber\\&a_{M_{\mathtt{A}}}(t_1)
 	b_{M_{\mathtt{B}}}(t_2)
 	dt_{1}dt_{2}d\vec{x}'dt'.
 	\end{align}
 	A similar expression holds for the concentration of molecule $[\mathtt B]$ at the location of  $\mathsf{T_{\mathtt {A}}}$ at sampling time $T$ (by simply swapping $\mathtt{A}\rightleftharpoons\mathtt{B}$ in the above formula).
 	Observe that $[\mathtt{A}](\vec{d}_{\mathtt{B}},T)$ and $[\mathtt{B}](\vec{0},T)$ are bilinear functions with respect to input signals. 
 	
 	Let
 	$\rho_{m_{\mathtt{B}}=0}^{\mathtt{A}}$ be the concentration of molecules of type $\mathtt{A}$ at transceiver $\mathtt{B}$ if $m_{\mathtt{B}}=0$. Similarly, we define $\rho_{m_{\mathtt{B}}=1}^{\mathtt{A}}$. Also, 
 	$\rho_{m_{\mathtt{A}}=1}^{\mathtt{B}}$ denotes the concentration of molecules of type $\mathtt{B}$ at transceiver $\mathtt{A}$ if $m_{\mathtt{A}}=0$, etc.
 	
 	Transceivers are assumed to be transparent. Due to the counting noise transceiver $\mathtt{A}$  observes a sample from $\mathsf{Poisson}(V\rho^{\mathtt{A}}_{m_{\mathtt{B}}})$
 	for some constant $V$. A similar statement holds for transceiver $\mathtt{B}$.

 	According to Lemma \ref{decrese:error:pro} error probabilities $J_1,J_2,J_3,J_4$ are  decreasing functions of 
 	$|\rho_{01}^{\mathtt{B}}-\rho_{00}^{\mathtt{B}}|,
 	|\rho_{11}^{\mathtt{B}}-\rho_{10}^{\mathtt{B}}|,
 	|\rho_{01}^{\mathtt{A}}-\rho_{00}^{\mathtt{A}}|,
 	|\rho_{11}^{\mathtt{A}}-\rho_{10}^{\mathtt{A}}|$ respectively. By minimizing non negative numbers $\rho_{00}^{\mathtt{B}}, \rho_{10}^{\mathtt{B}}$,
 	$\rho_{00}^{\mathtt{A}},\rho_{10}^{\mathtt{A}}$ the probabilities reduce. We set $a_0(t),b_0(t)=0$ to achieve $(\rho_{00}^{\mathtt{B}}, \rho_{10}^{\mathtt{B}},
 	\rho_{00}^{\mathtt{A}},\rho_{10}^{\mathtt{A}})=(0,0,0,0)$. To obtain $a_1(t),b_1(t)$, we consider cost function  $\mathcal{H}(J_1,J_2,J_3,J_4)=\sum_{i=1}^{4}\omega_{i}\log(J_{i})$ for some constants $\omega_{i}, i=1, \cdots,4$. 
 	The expression for $\mathcal{H}(J_1,J_2,J_3,J_4)$ becomes a bilinear function of $a_1(t),b_1(t)$ since  
 	\begin{align}
 	\mathcal{H}(J_1,J_2,J_3,J_4)=&-\omega_{1}\zeta_{1}(\vec{d}_{\mathtt{B}},a_{1})-\omega_{2}\zeta_{2}(\vec{d}_{\mathtt{B}},a_{1},b_{1})-\omega_{3}\zeta_{1}(\vec{d}_{\mathtt{A}},b_{1})-\omega_{4}\zeta_{2}(\vec{d}_{\mathtt{A}},a_{1},b_{1})\nonumber\\&-\sum_{i=1}^{4}\omega_{i}\log(2).
 	\end{align}
 	Fix some  $b_{1}(t)$ and let $a_{1}^{\text{opt}}(t)$ be maximizer of $\mathcal{H}$, according to Lemma \ref{SupportLemma} there is a waveform of the form 
 	${a}_{1}^{*}(t)=\hat{a}_{1}\delta(t-t^{[a_1]})$
 	with a total power \eqref{eqn:power3a} less than or equal to $a_{1}^{\text{opt}}(t)$ such that it dos
 	not affect the value of $\mathcal{H}$, \emph{i.e.}, $\mathcal{H}(a_{1}^{\text{opt}}(t),b_{1}(t))=\mathcal{H}(a_{1}^{*}(t),b_{1}(t))$. As ${a}_{1}^{*}(t)=\hat{a}_1\delta(t-t^{[a_1]})$ has a total power constraint $s_{\mathtt{A}}$, we obtain $\hat{a}_{1}\leq s_{\mathtt{A}}$. Hence    ${a}_{1}^{\text{opt}}(t)=\hat{a}_1\delta(t-t^{[a_1]})$. By similar argument  we can deduce that the one choice for optimal waveform $b_{1}(t)$ is 
 	${b}_{1}^{*}(t)=\hat{b}_1\delta(t-t^{[b_1]})$  for some constant $\hat{b}_1\leq s_{\mathtt{B}},t^{[b_1]}\leq T$ . That completes the proof.
 \end{proof}  
 In order to determine parameters in the statement of Theorem \ref{th::3} we need optimize over constants $\hat{a}_{1},\hat{b}_{1},t^{[a_1]},t^{[b_1]}$.
 \begin{table}
  	\centering
  	\caption{Parameters}
	 \label{tabex3}
\begin{tabular}{ |c|c|c| } 

	\hline
	\multirow{2}{*}{$(D_{\mathtt{A}},D_{\mathtt{B}})[m^{2}s^{-1}]$}   &\multirow{2}{*} {$10^{-10}\times(10,1.1)$}\\
	&\\
	\hline 
	\multirow{2}{*} {$\lambda[molecules^{-1}.m^{3}.s^{-1}]$}   & \multirow{2}{*}{$ 10^{-30}$}\\
	&\\
	\hline
	\multirow{2}{*}{$V[m^{3}]$}   &\multirow{2}{*} {$2.5\times10^{-14}$}\\
	&\\ 
	\hline
	\multirow{2}{*}{$T[s]$ }& \multirow{2}{*}{$10$}  \\ 
	&\\
	\hline
	
	\multirow{2}{*}{$d_{\mathtt{B}} [m]$} & \multirow{2}{*}{$10^{-4}\times(1,1,1)$ } \\
	&\\
		\hline
	\multirow{2}{*}{$(s_{\mathtt{A}}[molecules.m^{-3}],s_{\mathtt{B}}[molecules.m^{-3}])$} & \multirow{2}{*}{$10^8\times(5,24)$} \\
	&\\
	\hline 
\end{tabular}
\end{table}
Simulation results yield the optimal values for $\hat{a}_{1},\hat{b}_{1},t^{[a_1]},t^{[b_1]}$ as $\hat{a}_{1}=2\times 10^{8} ,\hat{b}_{1}=2.4\times 10^{9},t^{[a_1]}=0,$ and $t^{[b_1]}=0$. 

\bibliographystyle{IEEEtran}
\bibliography{Citations}
 
\end{document}